\declaretheorem[name=Theorem]{theorem}
\declaretheorem[name=Lemma]{lemma}
\theoremstyle{definition}\newtheorem{definition}{Definition}
\newtheorem{claim}[theorem]{Claim}
\theoremstyle{definition}\newtheorem{remark}{Remark}
\newcommand\M{\rule[-1.0ex]{0pt}{0pt}} 
\newcommand{\suppress}[1]{}
\newcommand{\nfrac}{\nicefrac}
\long\def\symbolfootnote[#1]#2{\begingroup%
\def\thefootnote{\fnsymbol{footnote}}\footnote[#1]{#2}\endgroup}
\newcommand{\CG}{\mbox{$G$}}
\newcommand{\CM}{\mbox{${\mathcal M}$}}
\newcommand{\CC}{\mbox{${\mathcal C}$}}
\newcommand{\CD}{\mbox{${\mathcal D}$}}
\newcommand{\la}{\leftarrow}
\newcommand{\ra}{\rightarrow}
\newcommand{\R}{\mathbb{R}}
\newcommand{\Rplus}{\R_+}
\newcommand{\integers}{\mathbb{Z}}
\newcommand{\li}{{\lambda_i}}
\newcommand{\yy}{\mbox{$Y$}}
\newcommand{\zz}{\mbox{\boldmath $z$}}
\newcommand{\x}{\mbox{\boldmath $x$}}
\renewcommand{\d}{\mbox{\boldmath $d$}}
\renewcommand{\r}{\mbox{\boldmath $r$}}
\newcommand{\y}{\mbox{\boldmath $y$}}
\newcommand{\B}{\mbox{\boldmath $\beta$}}
\newcommand{\CA}{\mbox{$A$}}
\newcommand{\ones}{{\mbox{\boldmath $1$}}}
\newcommand{\zeros}{{\mbox{\boldmath $0$}}}
\newcommand{\ee}[1]{{\mbox{\boldmath $1$}_{#1}}}
\newcommand{\haa}{{\mbox{\boldmath $\hat{\alpha }$}}}
\newcommand{\hll}{{\mbox{\boldmath $\hat{\lambda}$}}}
\renewcommand{\ll}{\mbox{\boldmath $\lambda$}}
\newcommand{\ddelta}{\mbox{\boldmath $\delta$}}
\newcommand{\ii}{\mbox{$\epsilon$}}
\newcommand{\llcur}{\mbox{$\ll^{cur}$}}
\newcommand{\aacur}{\mbox{$\aal^{cur}$}}
\newcommand{\acur}{\mbox{$\alpha^{cur}$}}
\newcommand{\pcur}{\mbox{$\pp^{cur}$}}
\newcommand{\xxcur}{\mbox{$X^{cur}$}}
\newcommand{\llnew}{\mbox{$\ll^{new}$}}
\newcommand{\pnew}{\mbox{$\pp^{new}$}}
\newcommand{\aanew}{\mbox{$\aal^{new}$}}
\newcommand{\anew}{\mbox{$\alpha^{new}$}}
\newcommand{\aal}{{\mbox{\boldmath $\alpha$}}}
\newcommand{\hp}{\mbox{$\hat{p}$}}
\newcommand{\hl}{\mbox{$\hat{\lambda}$}}
\newcommand{\hgm}{\mbox{$\hat{\gamma}$}}
\newcommand{\hbt}{\mbox{$\hat{\beta}$}}
\newcommand{\ha}{\mbox{$\hat{\alpha}$}}
\newcommand{\haal}{\mbox{$\hat{\aal}$}}
\newcommand{\delay}{\mbox{\rm delay}}
\newcommand{\BB}{\mbox{\rm BB}}
\newcommand{\SC}{\mbox{\rm SC}}
\newcommand{\pay}{\mbox{\rm pay}}
\newcommand{\alik}{\mbox{$\alpha_{ik}$}}
\newcommand{\feasible}{proper }
\newcommand{\efeasible}{{\em proper }}
\newcommand{\bpp}{\mbox{\boldmath $\bar{p}$}}
\newcommand{\bxx}{\mbox{$\bar{X}$}}
\newcommand{\pp}{\mbox{\boldmath $p$}}
\newcommand{\pb}{\mbox{\boldmath $b$}}
\newcommand{\plow}{p_{\textrm{low}}}
\newcommand{\xx}{X}
\DeclareMathOperator*{\argmax}{arg\,max}
\DeclareMathOperator*{\argmin}{arg\,min}
\newcommand{\rik}{r_{ik}}
\newcommand{\aijk}{a_{ijk}}
\newcommand{\dij}{d_{ij}}
\newcommand{\xij}{x_{ij}}
\newcommand{\xit}{x_{it}}
\newcommand{\price}{\mbox{\rm price}}
\newcommand{\hxx}{{\mbox{$\hat{X}$}}}
\newcommand{\hhx}{{\mbox{$\hat{\x}$}}}
\newcommand{\hx}{{\mbox{$\hat{x}$}}}
\newcommand{\hpp}{{\mbox{$\hat{\pp}$}}}
\newcommand{\nextseg}{NextSeg}
\newcommand{\supplyrespecting}{supply respecting\xspace}
\newcommand{\bestfor}{jointly optimal for\xspace}
\newcommand{\rh}[1]{\textcolor{black}{#1}}
\newcommand{\changed}{}
\newcommand{\stopchanged}{}
\newcommand{\twopartdef}[4]
{
	\left\{
	\begin{array}{lll}
		#1 & #2 \\
		#3 &  #4 
	\end{array}
	\right.
}
\begin{document}

\title{A New Class of Combinatorial Markets with Covering Constraints:
Algorithms and Applications}

\author{Nikhil R. Devanur\thanks{Microsoft Research. Part of the work was done when author was visiting Simons Institute, UC Berkeley.
		nikdev@microsoft.com} 
	\and Jugal Garg\thanks{University of Illinois at Urbana-Champaign. jugal@illinois.edu}
	\and  Ruta Mehta\thanks{University of Illinois at Urbana-Champaign. Part of the work was done when author was visiting Simons
	Institute, UC Berkeley. rutameht@illinois.edu} 
	\and Vijay V. Vazirani\thanks{College of Computing, Georgia Institute of Technology, Atlanta. vazirani@cc.gatech.edu}
	\and  Sadra Yazdanbod\thanks{College of Computing, Georgia Institute of Technology, Atlanta. syazdanb@cc.gatech.edu}}

\date{}

\maketitle

\begin{abstract}

We introduce a new class of combinatorial markets in which agents have covering constraints over resources required and are interested
in delay minimization. Our market model  is applicable to several settings including scheduling, cloud computing, and communicating over
a network.
This  model is quite different from the traditional models, to the extent that neither do the classical equilibrium existence
results seem to apply to it nor do any of the {efficient} algorithmic techniques developed to compute equilibria 
seem to {apply directly}. 
We give a proof of existence of equilibrium and a polynomial time algorithm for finding one,
drawing heavily on techniques from LP duality and submodular minimization.
We observe that in our market model, the set of equilibrium prices could be a connected, non-convex set (see figure below).
To the best of our knowledge, this is the first natural example of the phenomenon where the set of solutions 
could have such complicated structure, 
yet there is a combinatorial polynomial time algorithm to find one. 
Finally, we show that our model inherits many of the fairness properties of traditional equilibrium models. 

\end{abstract}

\maketitle
\begin{center}
\begin{figure}[!h]
	\vskip -1.7cm
\hspace{2.9cm}
	\includegraphics[width=0.75\textwidth]{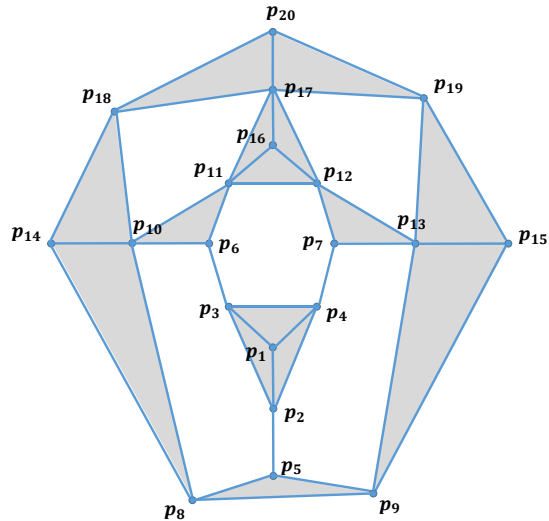}
	\vskip -1.2cm
	\caption{An example of a non-convex set of equilibria. See Table \ref{fig:hole5} for an explanation.}
	\vskip -2cm
\end{figure}
\end{center}
\newpage
\setcounter{page}{1}

\section{Introduction}
\label{sec.intro}

In a free market economy, prices naturally tend to find an ``equilibrium'' under which there is parity between
supply and demand. The power of this pricing mechanism is well explored and understood in economics: It allocates resources efficiently since prices send strong
signals about what is wanted and what is not, and it prevents artificial scarcity of goods while at the same time ensuring that goods that are 
truly scarce are conserved \cite{MasColell-book}. In addition, it ensures that 
the allocation of goods is Pareto optimal. Hence it is beneficial to both consumers and producers. 
Furthermore, equilibrium-based mechanisms have been
designed even for certain applications which do not involve any exchange of money but require fairness properties such as envy-freeness and the sharing incentive property; 
a popular one being CEEI\footnote{Competitive Equilibrium with Equal Incomes} \cite{Moulin-book}.
Today, with the surge of markets on the Internet, in which pricing and allocation are done in a centralized manner via 
computation, an obvious question arises: can we apply insights gained from traditional markets to these new markets to accrue similar benefits? 
This (in addition to other motivations) has led to a long line of work in the \rh{TCS} 
community on the computation of economic equilibria; see Appendix \ref{sec:marketapps} for more details.

In this paper, we define a broad class of market models that are appropriate for modeling several new markets, including scheduling, cloud computing, and bandwidth allocation \rh{in networks}.
A common feature of our markets is that these are resource allocation markets in which each agent desires a {\em specific amount of resources} to complete a task, 
i.e., each agent has a {\em covering constraint}.
If the agent does not get all the resources requested, then she will not be able to complete the task and hence has no value for
this partial allocation.
With several agents vying for the same set of resources, a new parameter that becomes crucially important is the {\em delay} experienced by
agents. This naturally leads to a definition of supply and demand, as well as pricing and allocation, based on {\em temporal considerations}.

We define an equilibrium-based model for pricing and allocation in these markets. 
Our model is fundamentally different from traditional market models: Each agent needs only a bounded amount of resources to finish her tasks and has no use for more,
and her utility, which corresponds to the delay she experiences, also has a finite maximum value, i.e., her ``utility function'' satiates. On the other hand, traditional models satisfy non-satiation, i.e.,
no matter what bundle of goods an agent gets, there is a way of giving her additional goods so her utility strictly increases. Non-satiation turns out to 
be a key assumption in the Arrow-Debreu Theorem, which established existence of equilibrium in traditional markets. Despite this, we manage to give an existence proof for our model. Additionally, we prove that all the above-stated benefits of equilibria, including the fairness properties of CEEI,
continue to hold for our model. 

We next address the issue of computing equilibria in our model. Rubinstein \cite{Rubinstein} recently showed that
computing an equilibrium in our general model is PPAD-hard. For this reason, we defined a sub-model for which we 
seek an efficient algorithm. This sub-model is of interest in applications, including the three mentioned above. However,
it turns out that equilibria of this sub-model have a different structure than those of models
for which polynomial time algorithms have been designed. For instance, we give examples in which
the set of equilibrium prices is non-convex.
Hence techniques used for designing
polynomial time algorithms for traditional models, such as the primal-dual method and convex programming, are not applicable. 
Our algorithms are based on new ideas: we make heavy use of LP duality and the way optimal solutions to LPs change with changes in certain parameters. 
Submodular minimization, combined with binary search, is used as a subroutine in this process. 

In summary, for the main applications stated above, our market-based model admits equilibrium prices, which can moreover be computed in polynomial time and
we show that our algorithm is incentive compatible.
In addition, the equilibrium allocations satisfy a range of fairness properties. Considering the many favorable properties of market-based models and the availability of massive computing power for computing equilibria, we believe they will play an important role in markets on the Internet.

\paragraph{Organization.} 
We define the market model and state our main results in Section \ref{sec:model}. In this section, we  define the notions of \emph{strong feasibility}, under which we establish
existence of equilibrium, and \emph{extensibility,} which gives the sub-model for which we give a polynomial time algorithm.
We also discuss properties of fairness and incentive compatibility of our solution. 
In Section \ref{sec:cloud} we describe the algorithm for a special case in a \emph{scheduling} setting, in order to convey the main ideas. The algorithm in its full generality, and an overview of the analysis are in Section \ref{sec:general}. 
Section \ref{sec:examples} contains all the different examples referred to, and also a description of a run of the algorithm for some examples. 
\rh{The appendices contain more details on related work (\ref{sec:marketapps}), special cases of our model (\ref{app:specialcases}), existence of equilibria (\ref{sec:existenceproof}), connection of our algorithm to Myerson's ironing in the {\em scheduling} case (\ref{app:myerson}), equilibrium characterization for the general model (\ref{sec:eqchar}), proofs missing from the main paper (\ref{sec:fullproofs}), and fairness and incentive compatibility properties (\ref{app:prop}).}


\section{Model and Main Results}\label{sec:model}
We introduce a combinatorial version of the well studied Fisher market model \citep{DPSV,BSAD}. 
\rh{In market $\CM$,} let $A$ be a set of $n$ agents, indexed by $i$, and $G$ be a set of $m$ divisible goods, indexed by $j$. 
We represent an allocation of goods to agents using the variables $\xij \in \Rplus, i\in A, j \in G$. 
Each agent $i\in A$ wants to procure goods that satisfy a set of {\em covering constraints}, $C$, where $C$ is a set indexing the constraints ($C$ is the same for all agents for ease of notation). 
\begin{equation} \tag{CC$(i)$}\label{eq:CC} \textstyle \forall~k\in C, \sum_{j\in G} \aijk \xij \geq \rik, \ \ \mbox{ and }\ \ \forall~j \in G, \xij \geq 0\end{equation}

The objective of each agent is to minimize the ``delay'' she experiences, 
while meeting these constraints. 
%
%
We refer to the term $\dij$ as the delay faced by agent $i$ on using good $j$, and the terms $\rik$s as the ``requirements''; $\dij$s and $\rik$s are assumed to be 
non-negative.
Agent $i$ wants an allocation that optimizes the following LP. 

\begin{equation} \tag{Delay LP$(i)$}\label{eq:coveringLP} \textstyle \min \sum_{j\in G} \dij \xij \text{ s.t.} \end{equation}
\[ \textstyle\forall~k\in C, \sum_{j\in G} \aijk \xij \geq \rik.  \]
\[\textstyle \forall~j \in G, \xij \geq 0 .\]

We use the notation $\d_i := (\dij)_{j\in G}, \r_i := (\rik)_{k\in C}$,  $A_i := (\aijk)_{j\in G, k\in C}$, 
$\x_i := (\xij)_{j \in G}$, and $X := (\x_i)_{i \in A}$. 
Although our results hold for any LP, the most interesting cases are when the constraints are covering constraints, i.e., 
the matrix $A_i$ has only non-negative entries. 

We will use a market mechanism to allocate resources. Let $p_{j}\in \Rplus$ denote the price per unit amount of good $j$,
and assume agent $i$ has a total budget of $m_i\in \Rplus$.  Then, as is standard in
Fisher markets, the bundle $\x_i$ that the agent may purchase is restricted
by,
\begin{equation}\tag{Budget constraint$(i)$}\label{eq:budget}
\textstyle \sum_{j\in G}  p_{j}\xij   \leq m_i. 
\end{equation}
Allocation $\x_i$ is an {\em optimal allocation (bundle)} of agent $i$
relative to prices $\pp := (p_j)_{j \in G}$, 
if it optimizes  LP (\ref{eq:coveringLP}) with an additional budget constraint (\ref{eq:budget}). 
%
%
Each good has a given supply which, after normalization, may be assumed to be equal to $1$. 
The allocation needs to be \emph{\supplyrespecting}, that is, it has to satisfy the supply constraints: 
\begin{equation}\tag{Supply constraints}\label{eq:supply}
\textstyle \forall~j \in G, \sum_{i\in A}\xij \leq 1.
\end{equation} 
Finally, a \supplyrespecting allocation $X$ and prices $\pp$ are a {\em market equilibrium} \rh{of $\CM$} iff 
\vspace{-0.2cm}
\begin{enumerate}
	\item Each agent gets an optimal allocation relative to prices $\pp$.
	\item If some good $j\in G$ is not fully allocated, i.e., $\sum_{i \in A} \xij < 1$, then $p_{j} =	0$.
\end{enumerate}
The equilibrium condition requires that each agent does the best for herself,
regardless of what the other agents do or even what the supply constraints are. From the perspective of the goods, the aim is market
clearing (rather than, say, profit maximization)\footnote{However, our algorithm will find an equilibrium where every agent spends all of her budget, and thereby it maximizes the profit automatically.}. Some goods may not have sufficient demand and therefore we may not be able to clear
them. This is handled by requiring these goods to be priced at zero. 

\rh{In Theorem \ref{thm:lpl2eq} we obtain characterization of equilibrium in this general model in terms of solutions of a parameterized linear program that has one parameter per agent.}

\subsection{Existence of equilibria.} 
We  show how the above model is a special case of the classic Arrow-Debreu
market model with quasi-concave utility functions in Appendix
\ref{sec:quasiconcave}. Unfortunately, these utility functions do not satisfy
the ``non-satiation'' condition required by the Arrow-Debreu theorem for the
existence of an equilibrium: utility does not increase beyond a point even if
additional goods are allocated. In fact, equilibrium doesn't always exist
for all covering LPs, \rh{as shown via a simple example in Section \ref{sec:examples}, Figure \ref{fig:non-exist}.} And therefore next we identify conditions under which it
does exist; the example in Figure \ref{fig:non-exist} shows how this condition
is necessary.  


The equilibrium condition requires at a minimum that there exists a \supplyrespecting allocation that also satisfies \ref{eq:CC} of all the agents.
In fact, it is easy to see that a somewhat stronger feasibility condition is necessary: suppose that a subset of agents all have high
budgets while the remaining agents have budgets that are close to 0. 
Then at an equilibrium, agents in the former set get their ``best'' goods, which means that 
whatever supply remains must be sufficient to allocate a feasible bundle for the remaining agents. 

We require a similar condition for all {\em minimally feasible} allocations, {\em i.e.,} 
an allocation $\x_i$ such that reducing amount of any good would make \ref{eq:CC} infeasible.
We call this condition strong feasibility. 


\begin{definition}[Strong feasibility] \label{def:strongfeasibility}
Market $\CM$ satisfies {\em strong feasibility} if any minimally feasible and \supplyrespecting solution to a subset of agents can be extended to a feasible and \supplyrespecting allocation to the entire set. 
%
%
	Formally, $\forall~S \subset A$, and $\forall~(\x_i)_{i\in S}$  that 
 are \emph{minimally} feasible for $\eqref{eq:CC}_{i\in S}$ and
are \supplyrespecting  (with $\xij = 0~\forall~i\in S^c$), 
	$\exists$ solutions $(\x_i)_{i \in S^c}$ that  are feasible for $\eqref{eq:CC}_{i\in S^c}$ and $(\x_i)_{i \in (S \cup S^c)}$ is \supplyrespecting. 
\end{definition}
\begin{restatable}{theorem}{existence} [Strong feasibility implies existence of an equilibrium] \label{thm:existence}
If \eqref{eq:CC}$_{i \in A}$ of market $\CM$ satisfies strong feasibility, then $\exists$ an allocation $\xx$ and prices $\pp$ that constitute a market equilibrium of $\CM$. 
\end{restatable}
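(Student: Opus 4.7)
The main challenge is that preferences in $\CM$ satiate: once agent $i$'s covering constraints are met, additional goods do not increase her welfare (and may hurt it through delay), so the non-satiation hypothesis of the Arrow-Debreu theorem is violated. Strong feasibility is the combinatorial property that lets us repair the existence argument, via a perturbation-and-limit plan.

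For each $\epsilon>0$, I would define a perturbed market $\CM^{\epsilon}$ in which agent $i$'s utility on $[0,1]^G$ is a continuous, quasi-concave, strictly monotone function $u_i^{\epsilon}$ that (i) equals $K - \sum_j \dij \xij + \epsilon\sum_j \xij$ on bundles satisfying \eqref{eq:CC}, for a constant $K$ large enough that every such utility is positive, and (ii) penalizes infeasibility strictly more than the $\epsilon$ perturbation can compensate for on the bounded consumption set. Strict monotonicity, together with continuity and quasi-concavity, lets the classical Arrow-Debreu / Gale existence theorem for Fisher markets produce an equilibrium $(X^{\epsilon}, \pp^{\epsilon})$ of $\CM^{\epsilon}$ with prices normalized to the simplex.

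Since allocations lie in the compact cube $[0,1]^{A\times G}$ and prices in the simplex, I extract a subsequence $\epsilon_t \to 0$ with $(X^{\epsilon_t},\pp^{\epsilon_t}) \to (X^*,\pp^*)$. Supply-respectingness and the complementary slackness condition $p_j^*>0 \Rightarrow \sum_i \xij^* = 1$ pass to the limit by continuity. The crux is to show that each $\x_i^*$ satisfies \eqref{eq:CC} and optimizes \eqref{eq:coveringLP} subject to \eqref{eq:budget} at $\pp^*$. Suppose some $\x_{i_0}^*$ violates \eqref{eq:CC}; let $S$ be the set of agents whose limit bundles are feasible and replace each $\x_i^*$ for $i \in S$ by a minimally feasible sub-bundle $\tilde\x_i \le \x_i^*$. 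By strong feasibility applied to $S$ and $(\tilde\x_i)_{i\in S}$, the residual supply admits a feasible allocation for the complement, in particular a feasible $\tilde\x_{i_0}$ for agent $i_0$. An affordability argument at prices $\pp^{\epsilon_t}$, using the facts that $\tilde\x_{i_0}$ draws only on supply that is not being fully paid for in the perturbed equilibrium (since agents in $S$ spend at most the value of their minimal sub-bundles plus vanishing terms) and that $u_{i_0}^{\epsilon_t}(\tilde\x_{i_0}) > u_{i_0}^{\epsilon_t}(\x_{i_0}^{\epsilon_t})$ for small $\epsilon_t$, contradicts the optimality of $\x_{i_0}^{\epsilon_t}$ in $\CM^{\epsilon_t}$. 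Once every $\x_i^*$ is feasible, continuity of \eqref{eq:coveringLP} in prices yields optimality at $\pp^*$.

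The hard part is this affordability argument. Without non-satiation, limits of perturbed equilibria can exhibit a pathology in which one agent is starved of the goods she needs while others consume surplus beyond their requirements. Strong feasibility is precisely the structural condition that rules out such limits by guaranteeing that any minimally feasible allocation to a subset of agents can be completed to a feasible allocation of the whole market; translating this extensibility into a budget-level contradiction at the perturbed prices $\pp^{\epsilon_t}$ is the technical heart of the proof.
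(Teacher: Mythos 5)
Your strategy is genuinely different from the paper's. You propose a perturbation-and-limit plan: make each $u_i^\epsilon$ strictly monotone so that Arrow-Debreu applies to a perturbed Fisher market, then send $\epsilon\to 0$ and argue the limit is an equilibrium of $\CM$. The paper instead augments $\CM$ to a market $\CM'$ by adjoining a ``safety'' good $s$ with large capacity, very large delay $d_s$, and a positive coefficient in every covering constraint; this guarantees that the budget-constrained optimal-bundle LP is feasible at \emph{every} price vector with $p_s=0$, so the paper can run a direct Kakutani fixed-point argument on a hand-built correspondence over the price simplex (it does not pass through Arrow-Debreu at all). Strong feasibility is then used at the very end, to show that at any equilibrium of $\CM'$ the safety good is actually unused (Lemma~\ref{lem:s}), so the equilibrium projects to one of $\CM$. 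In effect the paper pushes the obstruction you identify---possible infeasibility of the agent's best-response problem at bad prices---into the model once and for all, rather than fighting it at a limit.

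Where your write-up has a real gap is exactly where you flag one: the ``affordability argument.'' Strong feasibility guarantees residual supply \emph{when the feasible agents are placed on minimally feasible sub-bundles} $\tilde\x_i\le\x_i^*$, but in the perturbed equilibria the feasible agents in $S$ actually consume $\x_i^{\epsilon_t}\to\x_i^*$, which can strictly dominate $\tilde\x_i$ and use up the very supply your $\tilde\x_{i_0}$ is built from. Saying that $\tilde\x_{i_0}$ ``draws only on supply that is not being fully paid for'' is not established by anything written so far: goods in that residual can be fully allocated (to over-consuming agents in $S$) and carry positive prices, so it is not evident that $\tilde\x_{i_0}$ is affordable for $i_0$ at $\pp^{\epsilon_t}$, nor that its cost stays below $m_{i_0}$ as $\epsilon_t\to 0$. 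You would need a quantitative argument that the over-consumption of agents in $S$ beyond their minimal bundles is priced at $o(1)$, or some other device to convert the combinatorial guarantee of strong feasibility into a budget-level contradiction; without that, the proof does not close. (A secondary, lighter issue: invoking ``the classical Arrow-Debreu / Gale existence theorem for Fisher markets'' as a black box also needs justification, since Fisher markets are not literally exchange economies with interior endowments; the paper avoids this by building its own Kakutani correspondence.) The paper's safety-good device is precisely a way to avoid this affordability analysis: with good $s$ at price zero always available, every agent's best response is well defined at all candidate prices, and strong feasibility is only needed for the much cleaner claim that $s$ goes unused at the fixed point.
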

The proof of this theorem is in Appendix \ref{sec:existenceproof}. 
Strong feasibility is quite general in the following sense: it is satisfied if there is a ``default'' good that has a large enough capacity and may have a  large delay but occurs in every constraint with a positive coefficient. 
In other words, any agent's covering constraints may all be met by  allocating sufficient quantity of the default good. 

\subsection{Efficient computation.}
Ideally we would want to design an efficient algorithm for markets with {\em Strong feasibility} condition, however this problem turns out to be PPAD-hard \cite{Rubinstein}. 
%
%
In order to circumvent this hardness, we define a stronger condition called \emph{extensibility}, and design a polynomial time algorithm to compute a market equilibrium under it. Extensibility requires that any ``optimal allocation'' to a subset of agents can be ``extended'' to an ``optimal allocation'' for a set that includes one extra agent.
Hence this is a matroid like condition. For this we first formally define ``optimal allocation'' for a subset of agents. 

\begin{definition}\label{def:bestfor} 
For any subset of agents $S \subseteq A$, we say that an allocation $X$ is {\em \bestfor} $S$ if (i) it satisfies $\eqref{eq:CC}_{i\in
S}$, (ii) it is \supplyrespecting, and (iii) it  minimizes $\sum_{i \in S} \d_i
\cdot \x_i $. (Observe that $X$ may not be optimal for individual agents in $S$.)
\end{definition}

\begin{definition}[Extensibility]\label{def:extensibility}
	Market $\CM$ satisfies  \emph{extensibility} if 
	$~\forall~S \subset A$,  given an allocation $X $ that is \bestfor   $S$, 
	the following holds:
	for any $i \in S^c$, $\exists$ an allocation $X' $ that is \bestfor $S'=S\cup \{i\}$, while not 
	changing the delay of the agents in $S$, i.e., 
	$\d_i\cdot \x'_i = \d_i\cdot \x_i,\ \forall i \in S$. In other words, total delay cost of agents in $S'$ can be minimized without
	changing the delay cost of agents in $S$. 
\end{definition}
Extensibility seems somewhat stronger than strong feasibility, but the two conditions are formally incomparable; \rh{see example in Section \ref{sec:examples}, Figure \ref{fig:non-exist}}. 
In Section \ref{sec:specialcases} we show that extensibility condition captures many interesting problems as special cases. 
Another mild condition we need is that there is enough demand for goods from each agent, otherwise an agent with very little
	requirement but huge amount of money may drive everyone else out of the market. 
\begin{definition}[Sufficient Demand]\label{def:ED}
Market $\CM$ satisfies sufficient demand if under zero prices, the optimal bundle of each agent contains some good that is demanded more than its supply, 
i.e., for an optimal solution $\x_i$ to (\ref{eq:coveringLP}), there exists a good $j$ such that $x_{ij}>1$. 
\end{definition}
\rh{Even for very simple markets, e.g., Tables \ref{fig:hole1} and \ref{fig:hole5} in Section \ref{sec:examples}, the set of equilibria may turn out to be highly non-convex. Therefore techniques used to obtain polynomial time algorithms for traditional models are not applicable. In Section \ref{sec:general} we design a polynomial time algorithm by making a heavy use of parameterized LP, duality and submodular minimization, and obtain the following result.}
\begin{restatable}{theorem}{algo} [Extensibility and sufficient demand implies polynomial time algorithm] \label{thm:algorithm}
	There is a polynomial time algorithm that computes a market equilibrium allocation $X$ and prices $\pp$ for any market
	$\CM$ that satisfies extensibility and sufficient demand.
\end{restatable}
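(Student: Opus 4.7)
The plan is to exploit the parameterized-LP characterization of equilibrium (Theorem~\ref{thm:lpl2eq}) to reduce equilibrium computation to finding the correct vector $\ll = (\lambda_i)_{i \in A}$ of equilibrium delays. Once $\ll$ is known, equilibrium prices can be recovered via LP duality and a matching allocation via the primal. The algorithmic task then becomes determining $\ll$ combinatorially, despite the fact that the set of equilibrium prices may be non-convex and so no global convex-programming approach can succeed.

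First I would build the equilibrium incrementally, using extensibility as a matroid-like augmentation axiom. Starting with $S=\emptyset$ and zero prices, I would maintain at each stage a \bestfor-$S$ allocation $X$ together with prices $\pp$ under which every agent in $S$ is optimally served at her current delay and spends exactly her budget. Extensibility (Definition~\ref{def:extensibility}) permits inserting one new agent $i\in S^c$ at a time without disturbing the delays of the agents already in $S$; the task per insertion is to find agent $i$'s equilibrium delay $\lambda_i$ and the corresponding update to $\pp$, after which $S$ grows by one and the process repeats until $S=A$.

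Second, inside a single ``add agent $i$'' step, I would identify the \emph{tight} set of goods whose prices must rise to accommodate $i$. Concretely, I would define a set function on subsets of goods that measures the discrepancy between current supply and what would be required if agent $i$'s delay were pinned to a trial value $\lambda$; sufficient demand (Definition~\ref{def:ED}) ensures this function is nondegenerate, and I expect to prove it submodular using the structure of optimal bases of the parameterized LP together with complementary slackness. Submodular minimization then returns the correct tight set $T$ in polynomial time. A binary search over $\lambda$, equivalently over a common price scale on $T$, locates the next breakpoint at which the combinatorial structure changes: either a new good enters $T$, a new agent becomes tight in its budget constraint, or the support of some agent's optimal bundle shifts.

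Finally, I would argue that the total number of breakpoints encountered over the entire run is polynomially bounded in $n$, $m$, and $|C|$, and that each breakpoint is located by binary search with bit complexity polynomial in the input size, yielding the claimed running time. The main obstacle, I expect, lies in the second step: identifying the right submodular set function and certifying its submodularity. Because the equilibrium set is non-convex there is no convex potential to descend along, so the correct combinatorial invariant must be read directly off the LP's dual and the extensibility decomposition; proving both submodularity and exact (not merely approximate) polynomial-time binary search convergence to a breakpoint is where the heart of the argument will lie.
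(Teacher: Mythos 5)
Your high-level reduction — use Theorem~\ref{thm:lpl2eq} to turn equilibrium computation into a search for the correct parameter vector $\ll$, then determine $\ll$ combinatorially via submodular minimization plus binary search — matches the paper. But the two pivotal design choices in your second paragraph diverge from the paper's in ways that I believe break the argument.

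First, the one-agent-at-a-time augmentation does not work because you have no way to know in what order to insert agents. At equilibrium the agents are partitioned into \emph{segments} $S_1,\dots,S_d$ with $\lambda(S_1)<\cdots<\lambda(S_d)$, and Lemma~\ref{lem:price} (the only tool that lets you raise some $\lambda_i$s while holding dual variables and the relevant prices fixed) only applies when you raise $\lambda$ for a subset of the \emph{currently-highest} segment. If you add agent $i$ last with a fresh $\lambda_i$, you are implicitly committing $\lambda_i$ to being among the largest, yet $i$ may in fact belong to the lowest-$\lambda$ segment at equilibrium. Extensibility guarantees an allocation can be \emph{extended} without changing delays of $S$; it does not guarantee that prices and budget balance for $S$ can be preserved when $i$ is slotted in at an arbitrary $\lambda_i$. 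The paper sidesteps this by never committing to an order: it raises all remaining $\lambda_i$s uniformly (the vector $\ll^a=\llcur+a\ones_{\CA'}$) and the subset whose surplus $f_a(S)=m(S)-\pay_S(\xx,\pp^a)$ first hits zero \emph{is} the next segment. That segment may contain several agents, and it is discovered, not chosen.

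Second, the submodular function you propose is over subsets of \emph{goods} (``the tight set of goods whose prices must rise''), whereas the paper's $f_a$ in (\ref{eq.fa}) is over subsets of \emph{agents} in $\CA'$. These are not interchangeable. The agent-side function directly encodes the budget-balance and subset conditions ($\BB$, $\SC$ of Definition~\ref{def:feas}), whose satisfaction for each segment is exactly what Lemma~\ref{lem:feas} needs in order to produce an allocation in which everyone spends exactly their budget; the proof of submodularity (Lemma~\ref{lem:submodular}) leans on extensibility and complementary slackness through $\pay_S$ and $\delay_S$, which are agent-indexed. A goods-side tight-set function would tell you which supply constraints bind, but not which agents share a $\lambda$, and it is the agent partition, not the good partition, that drives the construction (goods can be shared across segments — the paper explicitly flags this as a source of difficulty). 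So even if your goods-side function were submodular, minimizing it would not identify the next segment. The overall running-time argument would also need to be recast: the paper bounds the number of \nextseg\ calls by $n$ (one per segment) and the bit-size of each binary search via Lemma~\ref{lem:bin}, rather than counting combinatorial breakpoints of an incremental process.
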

\rh{Since the algorithm is quite involved, we first convey the main ideas through a special case of {\em scheduling} in Section \ref{sec:cloud}. We show the run of the main algorithm on an example in Section \ref{sec:examples}, Figure \ref{fig:sp}. }

%
%

\subsection{Applications} \label{sec:specialcases}
As a consequence of the above theorem we get polynomial time algorithms for the following special cases. (The proofs that these satisfy extensibility are in Appendix \ref{app:specialcases}.)

\paragraph{Scheduling.}
The agents are jobs that require $d$ different types of machines, and the set of machines of type $k$ is $M_k$; the machines are the goods in the market. 
Each agent needs $r_{ik}\in \Rplus$ units of machines in all of type $k$, which is captured by the covering constraint $\sum_{j \in M_k} x_{ijk} \ge r_{ik},\ \forall k\in [d]$. All agents experience the same delay $d_{jk}$ from machine $j$ in type $k$.
Assume that the number of machines in each type $k$ is greater than the total requirements of the agents, $\sum_{i \in A} r_{ik}$. In reality different machines in this model may represent actually 
different machines, or the same machine at different times.  The main motivation for this problem is scheduling in the cloud
computing context, but it also captures other client-server scenarios such as crowdsourcing.
An earlier version of this paper designed an algorithm only for this case \cite{devanur2015market}. 

Even for this simple case with only one type, we observe that the set of equilibria may form a connected non-convex set. 
The non-convexity example shown in Section \ref{sec:examples}, Tables \ref{fig:hole1} and \ref{fig:hole5}, are instances of this setting.

\paragraph{Restricted assignment with laminar families -- Different arrival times.} The above basic scheduling setting can be generalized to the following restricted assignment case, where job $i$ is allowed to be processed only on a subset of all the machines  $S_{ik} \subseteq M_k$ for type $k$.  We need the $S_{ik}$s to form a laminar\footnote{A family of subsets is said to be \emph{laminar} if any two sets $S$ and $T$ in the family are either disjoint, $S\cap T = \emptyset$, or contained in one another, $S\subseteq T$ or $T\subseteq S$. } family within each type, and in addition, we require that the machines in a larger subset have lower delays. That is, if for some two agents $i,i' \in A$,  $S_{i'k} \subset S_{ik}$ then $\max_{ j \in S_{ik} \setminus S_{i'k}}d_{jk} \leq \min_{j'\in S_{i'k}} d_{j'k}$ for each type $k$. This helps to model the important condition in the cloud computing context that jobs may arrive at different times.


\paragraph{Network Flows.}
The goods are edges in a network, where each edge $e$ has a certain (fixed) delay $d_e$. 
Each agent $i$ wants to send $r_i$ units of flow from a source $s_i$ to a sink $t_i$, 
and minimize her own delay (which is a min-cost flow problem). 
We show that if the network is series-parallel and the source-sink pair is common to all agents, then the instance satisfies extensibility.
This is similar to the basic scheduling example in that there is a sequence of paths of increasing delay, 
but the difference here is that we need to price edges and not paths.  The difficulty is that paths share edges and hence the edge
prices should be co-ordinated in such a way that the path prices are as desired. 
There are networks that are not series-parallel but still the instance satisfies extensibility.
We show one such network (and also how our algorithm runs on it) in \prettyref{fig:sp}. 
For a general network, an equilibrium may not exist; we give such an example in Figure \ref{fig:non-exist}. 

A generalization of all these special cases that still satisfies extensibility is as follows: take any number of independent copies of 
any of these special cases above. 
E.g., each agent might want some machines for job processing, as well as send some flows through a network or a set of networks, but 
have a common budget for both together. Our algorithm works for all such cases. 

\subsection{Properties of equilibria}
\paragraph{Fairness:}
We first discuss an application of our market model to fair division of goods, where there are no monetary transfers involved. This captures scenarios where the goods to be shared are commonly owned, such as the computing infrastructure of a large company to be shared among its users. 
A standard fair division mechanism is competitive equilibrium from equal incomes (CEEI) \cite{Moulin-book}. 
This mechanism uses an equilibrium allocation corresponding to an instance of the market where all the agents have the same budget. 
This can be generalized to a weighted version, where different agents are assigned different budgets based on their importance. 

The fairness of such an allocation mechanism follows from the following properties of equilibria \rh{shown in Appendix \ref{app:fair} for the general model}. 1. The equilibrium allocation is {\em Pareto optimal}; this an analog of the first welfare theorem for our model. 2. The allocation is envy-free; since each agent gets the optimal bundle given the prices and the budget, he doesn't envy the allocation of any other agent. 3. Each agent gets a ``fair share'': the equilibrium allocation Pareto-dominates an ``equal share" allocation, where each agent gets an equal amount of each resource. This property is also known as \emph{sharing incentive} in the scheduling literature \cite{Ghodsi2011}. 4. Incentive compatibility (IC): the equilibrium allocation is incentive compatible ``in the large'', where no single agent is large enough to significantly affect the equilibrium prices. In this case, the agents are essentially price takers, and hence the allocation is IC. 
We also show a version of IC when the market is not large. We discuss this in more detail below. 

\paragraph{Incentive Compatibility:} 
In the quasi-linear utility model,  
an agent maximizes the valuation of the goods she gets minus the payment. 
In the presence of budget constraints, \citet{dobzinski2012multi} show that no anonymous\footnote{Anonymity is a very mild restriction, which disallows favoring any agent based on the identity.} IC mechanism can also be Pareto optimal, even when there are just two different goods. 
In the context of our model, a quasi-linear utility function 
specifies an ``exchange rate'' between delay and payments, and wants to minimize a linear combination of the two. 
We show in Appendix \ref{app:QL} that the impossibility extends to our model via an easy reduction to the case of \citet{dobzinski2012multi}.

In the face of this impossibility, we show the following second best guarantee in Appendix \ref{app:IC}. 
For the scheduling application mentioned above, we show that our algorithm as a market based mechanism is IC in the following sense: 
non-truthful reporting of $m_i$ and $r_{ik}$s can never result in an allocation with a lower delay.   
A small modification to the payments, keeping the allocation the same, makes the entire mechanism incentive compatible for the model in which agents want to first minimize their delay and subject to that, minimize their payments. 

The first incentive compatibility assumes that utility of the agents is only the delay, and does not depend on the money spent (or saved). 
Such utility functions have been considered in the context of online advertising \citep{borgs2007dynamics,feldman2007budget,muthukrishnan2010stochastic}. 
It is a reflection of the fact that companies often have a given budget for procuring compute resources, 
and the agents acting on their behalf really have no incentive to save any part of this budget. 
In the fair allocation context (CEEI), this gives a truly IC mechanism, since the $m_i$s are determined exogenously, and hence are not private information.

The second incentive compatibility does take payments into account, but gives a strict preference to delay over payments. Such preferences are also seen in the online advertising world, where advertisers want as many clicks as possible, and only then want to minimize payments. 
The modifications required for this are minimal, and essentially change the payment from a ``first price'' to a
``second price'' wherever required. 


\section{Scheduling on a single machine}\label{sec:cloud}
Our  algorithm for the general setting is quite involved, therefore we first present it for a very special case in a scheduling setting mentioned in Section \ref{sec:specialcases}. The basic building blocks and the structure of the algorithm and the analysis are reflected in this case. 
In Section \ref{sec:examples}, we describe the run of this algorithm on the example in Table \ref{fig:hole1}.
\rh{We note that the formal proofs are given only for the general case and not for this section.}

Suppose that there is just one machine and a good is this machine at a certain time $t \in \integers_+$, 
which we refer to as slot $t$. 
The set of goods is therefore $G = \integers_+$ and we index the goods by $t$ instead of $j$ as before.
Further, assume that the delay of slot $t$ is just $t$, i.e., $\forall~i\in A, d_{it} = t$.  
Each agent $i$ requires a certain number of slots to be allocated to her, as captured by the covering  constraint $\sum_{t\in \integers_+} x_{it} \geq r_i$, for some $r_i \in \integers_+$. 
We denote the sum of the requirements over a subset $S\subseteq A$ of agents as $r(S) := \sum_{i \in S} r_i$. Recall that the budget of agent $i$ is $m_i$, and similarly $m(S) := \sum_{i \in S} m_i$. 
We will show that equilibrium prices are characterized by the following conditions.\footnote{For how this equilibrium characterization leads to an analogy with Myerson's ironing 
	for a special case of this setting, with $r_i=1$ for all $i\in A$, see Appendix \ref{app:myerson}. } 
\begin{enumerate}
	\item The prices form a piecewise linear convex decreasing curve. Let the linear pieces {(segments)} of this curve be numbered $1,2,\ldots,k,\ldots$, from \emph{right to left}. \label{enum:convexprices} 
	\item There is a partitioning of the agents into sets $S^1, S^2, \ldots, S^k, \ldots, $ where the number of slots in $k$th segment is $r(S^k)$. Note that since $r_i$s are integers so are the $r(S^k)$s. \label{enum:slots}
 	\item The sum of the prices of slots in $k$th segment equals $m(S^k)$. \label{enum:equals}
	\item For any $S\subset S^k$, the total price of the first $r(S)$ slots of the segment $\ge m(S)$, since otherwise these slots would be over demanded. This is equivalent to saying that the total price of the {last} $r(S)$ slots in this segment $ \leq m(S)$. \label{enum:less}
\end{enumerate}
The above only characterizes equilibrium \emph{prices}. We will show that Conditions \ref{enum:equals} and \ref{enum:less} imply that there exists an \emph{allocation} of the slots in segment $k$  to the agents in $S^k$ such that both their requirements and budget constraints are satisfied. Such allocations can then be found by solving the following feasibility LP \eqref{eq:feasibilityLP}. In this LP, segment $k$ corresponds to the interval $[T^k ,T^{k-1}]$.

\begin{equation}\label{eq:feasibilityLP} 
\begin{aligned}
\textstyle \forall~i \in S^k : &\quad \textstyle \sum_{t \in [T^k  ,T^{k-1}] } \xit \geq r_i \\
\textstyle  \forall~i \in S^k : &\quad \textstyle \sum_ {t \in [T^k ,T^{k-1}] } p_t \xit \leq m_i  \\
\textstyle  \forall~t \in [T^{k} ,T^{k-1}] : &\quad \textstyle \sum_{i \in S^k}  \xit \leq 1 \\
\textstyle \forall~i \in S^k,\textstyle \forall~t \in [T^{k} ,T^{k-1}] : &\quad \textstyle  x_{it} \geq 0
\end{aligned}
\end{equation}

\begin{algorithm}[t]
	\caption{Algorithm to compute market equilibrium for scheduling}\label{alg.scheduling}
	\begin{algorithmic}[1]
		\State Input: $A,(m_i)_{i \in A},(r_i)_{i\in A}$
		\State Initialize $\CA'\la\CA$, $\plow \la0, T^{0} \la r(\CA)+1, \forall~t\geq T^{0}, p_{t}\la0$ and $k\la1$ 
		\While{$\CA'\neq \emptyset$}
		\State $S^k \la $ \nextseg$(\plow, \CA',(m_i)_{i \in A'},(r_i)_{i\in A'})$
		\State $\lambda_{S^k} \la 2\frac{m(S^k) - \plow r(S^k)}{r(S^k) (r(S^k) +1)}$ 
		\State $T^{k}\la T^{k-1} - r(S^k)$
		\State $\forall~t \in [T^{k},T^{k-1}],$ set $p_{t} \la \plow + (T^{k-1}-t)  \lambda_{S^k}$  
		\State Compute allocations $\x_i$ for all $i \in S^k$ by solving  LP \eqref{eq:feasibilityLP}
		\State Update  $\plow\la  p_{T^{k}}$, $\CA' \la\CA'\setminus S^k$, and $k\la k+1$
		\EndWhile
		\State Output allocations $\xx$ and prices $\pp$.
	\end{algorithmic}
\end{algorithm}

We now describe the algorithm, which is formally defined in Algorithm \ref{alg.scheduling}. It iteratively computes $S^k$, starting from $k=1$:
the last segment that corresponds to the latest slots is computed first, and then the segment to its left, and so on.
Inductively, suppose we have computed segments numbered 1 up to $k-1$. 
Let $\plow$ be the price of the earliest slot in segment $k-1$, 
and let $A' = A\setminus \{S^1 \cup \cdots \cup S^{k-1}\}$. 
For any $S \subseteq A'$, consider the sum of the prices of $r(S)$ consecutive slots to the left of this slot, forming a line segment with slope $-\lambda$ (see Figure \ref{fig:segment});  
this sum is  
\[\textstyle  \plow  r(S) + \lambda \frac{ r(S) (r(S) +1)} 2.  \] 

\begin{figure}[!h]
	\vspace{-2cm}
	\centering
	\includegraphics[width=0.7\linewidth]{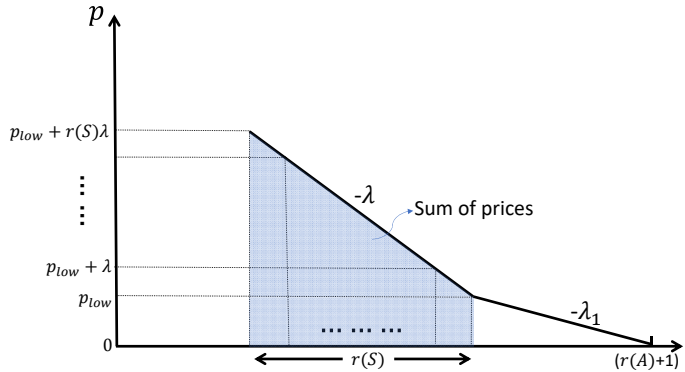}
	\vspace{-1.4cm}

	\caption{Prices on a Segment for set $S\subseteq A'$}\label{fig:segment}
\end{figure}
	
Then for any $S$, one can solve for the $\lambda$ where this would be equal to $m(S)$; we define this as a function of $S$. 
\[\textstyle  \lambda_S := 2\frac{m(S) - \plow r(S)}{r(S) (r(S) +1)}.\]
The next segment is defined to be the one with the smallest slope: 
\[ \textstyle S^k = \textrm{\nextseg}(\plow, \CA',(m_i)_{i \in A'},(r_i)_{i\in A'}) := \arg\min_{S\subseteq A'} \lambda_S.\]
With this definition of the next segment, and with prices for the corresponding slots set to be linear with slope $-\lambda_{S^k}$, 
it follows that Conditions \ref{enum:equals} and \ref{enum:less} are satisfied. 

%
It is not immediately clear how to minimize $\lambda_S$; the function need not be submodular, for instance.
The main idea here is to do a binary search over $\lambda$, as defined in Algorithm \ref{alg.leastseg}. 
Consider the function $f_{\plow,\lambda}$ as defined in line \ref{eq.fl} of this algorithm, 
and notice that $f_{\plow,\lambda} $ is decreasing in $\lambda$. 
From the preceding discussion, it follows that the segment $S^*$ we seek is such that 
$f_{\plow,\lambda_{S^*}} (S^*) = 0 $ and $\forall S \subset A, f_{\plow,\lambda_{S^*}} (S) \geq 0 $.
This implies  that $S^*$ must minimize $f_{\plow,\lambda_{S^*}}$ over all subsets of $A'$. 
Thus, given any $\lambda$ and a minimizer of $f_{\plow,\lambda}$, we can tell whether the desired $\lambda_{S^*}$ is above or below this $\lambda$, and a binary search gives us the desired segment. A minimizer of $f_{\plow,\lambda}$ can be found efficiently 
since this is (as we will show) a submodular function. 

\begin{algorithm}[t]
\caption{Subroutine \nextseg$(\plow, \CA',(m_i)_{i \in A'},(r_i)_{i\in A'})$ }\label{alg.leastseg}
\begin{algorithmic}[1]
	\State Initialize $\lambda^0\la 0, \lambda^1 \la \max_{i \in A'} m_i$
	\State Define $f_{\plow, \lambda}(S) := m(S) - \plow  r(S) - \lambda r(S) (r(S) +1)/2$ \label{eq.fl} 
	\State Set $S^0 \in \argmin_{S \subseteq A', S\neq \emptyset} f_{\plow,\lambda^0}(S)$ and $S^1 \in \argmin_{S \subseteq A',S\neq \emptyset} f_{\plow,\lambda^1}(S)$
	\While{$S{^0}\neq S{^1}$}
		\State Set $\lambda^* \la\frac{\lambda^0 + \lambda^1}{2}$ and $S^* \in \argmin_{S\subseteq A', S\neq\emptyset} f_{\plow,\lambda^*}(S)$  
		\If{$f_{\plow,\lambda^*}(S^*)>0$}
			 Set $\lambda^0 \la \lambda^*$ and $S^0 \la S^*$
		\Else\ 
			Set $\lambda^1 \la \lambda^*$ and $S^1 \la S^*$
		\EndIf
	\EndWhile
	\State Return $S^0$
\end{algorithmic}
\end{algorithm}

In addition to the feasibility of LP \eqref{eq:feasibilityLP}, the main technical aspect of proving the correctness of the algorithm is to show that each agent gets an optimal allocation. 
This follows essentially from showing Condition \ref{enum:convexprices}, that the prices indeed form a piecewise linear convex curve, or equivalently, that the $\lambda^k$s form an increasing sequence.  
It is fairly straightforward to see that the running time of the algorithm is polynomial.

\section{Algorithm under Extensibility}\label{sec:general}
In this section we present the algorithm that proves Theorem \ref{thm:algorithm}; 
we will parallel the presentation in Section \ref{sec:cloud}.  
We first present equilibrium characterization for the general model in Theorem \ref{thm:lpl2eq} (complete proof is in Appendix \ref{sec:eqchar}), and then describe the key ideas in designing the algorithm (the missing proofs and other details of this part are in Appendix \ref{sec:fullproofs}). We show run of our algorithm on a network example in Section \ref{sec:examples}, Figure \ref{fig:sp}.

Recall that in our general model, each agent has a delay function and a set of constraints on the bundle of goods she gets. Unlike in Section \ref{sec:cloud}, there is no simple ordering among the goods that enables a geometric description of an equilibrium, 
therefore some parts that are immediate in that setting may require a proof here. 
Recall that the first step in Section \ref{sec:cloud} was to find an equilibrium characterization only in terms of prices. 
This used the geometry of the instance in order to partition the time slots into segments. 
More generally, it will be more convenient to consider a partition of agents rather than a partition of goods. 
By abuse of terminology, in this section, by ``segment'' we refer to a subset of agents. 
Each agent $i$ in $A$ has a parameter $\lambda_i$, that previously corresponded to the slope of the segment they were in.
Similarly, now too, all agents in a segment $S^k$ have the same $\lambda_i$. 
This will also correspond to the optimal dual variable for \ref{eq:budget} in the agent's optimization problem at the equilibrium. 

Given a vector of $\lambda_i$s, denoted by $\ll \in \Rplus^{|A|}$, next we define a parameterized linear program and its dual. Intuition for this definition comes from the optimal bundle LP of each agent at given prices. In the following, $LP(\ll)$ has allocation variables $x_{ij}$s, 
 the constraint \ref{eq:CC} for each agent $i$, and the \supplyrespecting constraint for each good $j$. 
 The corresponding dual variables are respectively $\alik$s and $p_j$s, where $p_j$ can be thought of as the price for good $j$. 
\begin{equation}\label{eq:lpl}
\begin{array}{cc}
LP(\ll): & DLP(\ll): \\
\begin{array}{ll}
\min: & \sum_i \li \sum_j d_{ij} x_{ij} \\
s.t. & \sum_j a_{ijk} x_{ij} \ge r_{ik},\ \forall (i,k)\\
& \sum_i x_{ij} \le 1,\ \forall j\\
& x_{ij} \ge 0,\ \forall (i,j)
\end{array}
\ \ \ \ \ 
& 
\ \ \ \ \ 
\begin{array}{ll}
\max: & \sum_{i,k}r_{ik} \alpha_{ik} - \sum_j p_j \\
s.t. & \lambda_i d_{ij} \ge \sum_k a_{ijk} \alpha_{ik} - p_j,\ \forall(i,j)\\
& p_j \ge 0,\ \forall j;\ \ \ \ \alpha_{ik} \ge 0,\ \forall (i,k)
\end{array}
\end{array}
\end{equation}

Remarkably, the next theorem shows that the problem of computing an equilibrium reduces to solving the above LP and its dual for a \emph{right} parameter vector $\ll \in \Rplus^{|A|}$.

\begin{restatable}{theorem}{lpleq}
	\label{thm:lpl2eq}
For a given $\ll>0$ if an optimal solution $\xx$ of $LP(\ll)$ and an optimal solution $(\aal, \pp)$ of $DLP(\ll)$ satisfy \ref{eq:budget} for all agents $i\in \CA$ with equality, then they constitute an equilibrium of market $\CM$. 
\end{restatable}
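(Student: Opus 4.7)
The plan is to derive the three equilibrium conditions of $\CM$ from the KKT conditions of the primal--dual pair $LP(\ll)/DLP(\ll)$, combined with the budget-tightness hypothesis $\sum_j p_j x_{ij} = m_i$. Two of the three conditions are essentially free: $\xx$ is supply-respecting because it is primal feasible in $LP(\ll)$, and the market-clearing condition ``if $\sum_i x_{ij} < 1$ then $p_j = 0$'' is exactly the complementary slackness between the supply constraints of $LP(\ll)$ and their duals $p_j$ in $DLP(\ll)$.

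The substantive step is to show that for each agent $i$, the bundle $\x_i$ minimizes $\sum_j d_{ij} x_{ij}$ subject to \eqref{eq:CC} and \eqref{eq:budget} at prices $\pp$. Since the agent-level LP is convex, it suffices to exhibit dual multipliers witnessing KKT optimality at $\x_i$. The natural candidates are rescalings of the given $DLP(\ll)$ duals: take $\mu_i := 1/\lambda_i$ for the budget constraint and $\beta_{ik} := \alpha_{ik}/\lambda_i$ for the covering constraints, both nonnegative since $\lambda_i > 0$. Dividing the $DLP(\ll)$ stationarity inequality $\lambda_i d_{ij} + p_j \ge \sum_k a_{ijk}\alpha_{ik}$ through by $\lambda_i$ yields exactly the stationarity inequality $d_{ij} + \mu_i p_j \ge \sum_k a_{ijk}\beta_{ik}$ for the agent's LP. The primal--dual slackness $x_{ij}\bigl(\lambda_i d_{ij} + p_j - \sum_k a_{ijk}\alpha_{ik}\bigr)=0$ and the covering slackness $\alpha_{ik}\bigl(\sum_j a_{ijk}x_{ij}-r_{ik}\bigr)=0$ coming from $LP(\ll)/DLP(\ll)$ optimality likewise rescale to the stationarity and covering slackness for the agent's LP. The remaining slackness $\mu_i\bigl(\sum_j p_j x_{ij} - m_i\bigr)=0$ is immediate from the budget-tightness hypothesis, and primal feasibility of $\x_i$ for the agent's LP is inherited from $LP(\ll)$ together with the same hypothesis.

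I do not anticipate a real obstacle here: the argument is essentially a bookkeeping exercise in KKT duality. The only thing one has to notice is the rescaling of duals, which is forced by the asymmetric way $\lambda_i$ enters $LP(\ll)$ (it multiplies the delay in the objective, while the budget constraint is absent from $LP(\ll)$ altogether). The hypothesis $\ll > 0$ is used exactly to make this rescaling well defined and to ensure that each agent's budget multiplier $\mu_i$ is strictly positive, so that budget-tightness is the correct slackness condition to impose.
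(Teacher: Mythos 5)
Your proposal is correct and is essentially the same argument the paper gives (via its Lemma on equilibrium characterization): both proofs equate the complementary slackness system of $LP(\ll)/DLP(\ll)$ with the KKT system of each agent's optimal-bundle LP through the rescaling $\gamma_i = 1/\lambda_i$, $\beta_{ik} = \alpha_{ik}/\lambda_i$, with the supply slackness furnishing market clearing and the budget-equality hypothesis closing the one condition that is not automatic. The only cosmetic difference is the direction of the change of variables — the paper starts from the agent-LP duals $(\gamma_i, \beta_{ik})$ and substitutes $\lambda_i = 1/\gamma_i$, whereas you start from $(\lambda_i, \alpha_{ik})$ and divide through by $\lambda_i$.
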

\rh{We note that the proof of Theorem \ref{thm:lpl2eq} uses only complementary slackness conditions of optimal bundle LP of each buyer, and therefore the theorem holds for the most general model, i.e., without any of the {\em extensibility}, {\em enough demand}, or {\em strong feasibility} assumptions.}
Theorem \ref{thm:lpl2eq} gives us the ``geometry'' of an equilibrium outcome, and is roughly equivalent to 
Condition \ref{enum:convexprices} from Section \ref{sec:cloud}. It reduces the problem to one of finding a {\em right} parameter vector $\ll$; however there is still the entire $\Rplus^{|A|}$ to search from. 
As was done in Section \ref{sec:cloud}, our main goal is to further reduce this task to a sequence of single parameter searches, each involving submodular minimization and binary search.

Theorem \ref{thm:lpl2eq} is applicable when all agents spend exactly their money at a primal-dual pair of optimal solutions for a given vector $\ll$. 
Now the question is to characterize such parameter vectors $\ll$. Note that, there really is no equivalent of Condition \ref{enum:slots} from Section \ref{sec:cloud}, since 
some goods may be allocated across agents in different segments. This is the source of many of the difficulties we face.  
Next, in Lemma \ref{lem:feas}, we derive an (approximate) equivalent of Conditions \ref{enum:equals} and \ref{enum:less} from Section \ref{sec:cloud}. 
This guarantees the existence of (allocation, prices)  that satisfy the budget constraints of agents. 
One difference here is that this is going to be a global condition that involves the entire vector $\ll$, rather than a local condition that we could apply to a single segment like in Section \ref{sec:cloud}. For this we need a number of properties of optimal solutions of $LP(\ll)$ and $DLP(\ll)$ that we show in Lemma \ref{lem:lp-prop} next. 

Let us define $\delay_i(\xx) = \sum_j d_{ij} x_{ij}$ and $\pay_i(\pp,\xx)=\sum_j p_j x_{ij}$. Similarly, for a subset of
agents $S\subseteq A$, $\delay_S(\xx) = \sum_{i\in S}\sum_j d_{ij} x_{ij}$ and $\pay_S(\pp,\xx)=\sum_{i\in S} \sum_j p_j x_{ij}$. 
Let $[d]$ denote the set $\{1,\dots,d\}$ of indices. By abuse of notation, let us define
\[
\lambda(S) = \twopartdef{\text{the $\lambda$ value of agents in $S$}} {\text{if all agents in $S$ have the same $\lambda_i$.}}
						{\text{undefined}}	{\text{otherwise}} 
.\]

Using {\em extensibility}, in the next lemma we show that optimal solutions of $LP(\lambda)$ and $DLP(\lambda)$ 
satisfy some invariants regarding delays and payments of agents, e.g., we will show that higher the $\lambda$ the better the delay at primal optimal. 
For a fixed dual optimal the total payment of a segment
remains fixed at all optimal allocations, and  as the delay of a subset decreases, its payment increases.
Recall Definition \ref{def:bestfor} for ``\bestfor''.

\begin{restatable}{lemma}{lpprop}\label{lem:lp-prop}
Given $\ll$, partition agents by equality of $\lambda_i$ into sets $S_1,\dots,S_d$ such that 
$\lambda(S_1)<\dots<\lambda(S_d)$. 
\begin{enumerate}
\item At any optimal solution $\xx$ of $LP(\ll)$ delay is minimized first for set $S_d$, then for $S_{(d-1)}$, and so on, finally for $S_1$. This is equivalent to $\xx$ being \bestfor each $T_g, \forall g \in [d]$ where $T_g = \cup_{q=g}^d S_q,$, and for any other optimal solution $\yy$ we have $\delay_{S_g}(\yy)=\delay_{S_g}(\xx),\ \forall g \in [d]$.  

\item Given two dual optimal solutions $(\aal,\pp)$ and $(\aal',\pp')$, if the first part of dual objective is same at both for some $g\in [d]$, {\em i.e.,} $\sum_{i \in S_g,k} r_{ik} \alik = \sum_{i \in S_g,k} r_{ik} \alpha'_{ik}$, then for any optimal solution $\xx$ of $LP(\ll)$, $\pay_{S_g}(\xx,\pp) = \pay_{S_g}(\xx,\pp')$.

\item Given two optimal solutions $\xx$ and $\xx'$ of $LP(\ll)$, and an optimal solution $(\aal, \pp)$ of $DLP(\ll)$, if for any subset $S\subseteq S_g$ for $g\in [d]$, $\delay_S(\xx)\le \delay_S(\xx')$, then $\pay_S(\xx,\pp) \ge \pay_S(\xx',\pp)$. The former is strict iff the latter is strict too.
\end{enumerate}
\end{restatable}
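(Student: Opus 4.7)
My plan is to handle Parts 2 and 3 first via a single complementary-slackness identity (no extensibility needed), then to do the real work of Part 1 using extensibility in an essential way. For any primal-optimal $\xx$ of $LP(\ll)$ and any dual-optimal $(\aal,\pp)$ of $DLP(\ll)$, complementary slackness yields $\lambda_i d_{ij}=\sum_k a_{ijk}\alpha_{ik}-p_j$ whenever $x_{ij}>0$, and $\sum_j a_{ijk}x_{ij}=r_{ik}$ whenever $\alpha_{ik}>0$. Multiplying the first by $x_{ij}$, summing over $j$, and then using the second to collapse the $\alpha$-terms, I obtain the per-agent identity $\lambda_i\,\delay_i(\xx)+\pay_i(\pp,\xx)=\sum_k r_{ik}\alpha_{ik}$. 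Summing over any $S\subseteq S_g$ (on which all $\lambda_i=\lambda(S_g)$) gives
\[
\lambda(S_g)\,\delay_S(\xx)+\pay_S(\xx,\pp)\;=\;\sum_{i\in S,\,k}r_{ik}\alpha_{ik}.
\]
Part 2 then falls out by applying this with $S=S_g$ to two duals $(\aal,\pp),(\aal',\pp')$ and the same $\xx$ and subtracting: the equal $\alpha$-sums force $\pay_{S_g}(\xx,\pp)=\pay_{S_g}(\xx,\pp')$. Part 3 falls out by applying the identity to two primals $\xx,\xx'$ with a common dual and subtracting: $\lambda(S_g)(\delay_S(\xx)-\delay_S(\xx'))+\pay_S(\xx,\pp)-\pay_S(\xx',\pp)=0$, so delay and payment move in opposite directions, with the strict-iff-strict claim following from $\lambda(S_g)>0$.

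For Part 1, my plan is to construct one reference allocation $\yy^\star$ that is \bestfor $T_g$ simultaneously for every $g$, and then argue via the LP objective that every other LP-optimum must coincide with $\yy^\star$ on each $\delay_{T_g}$. Starting from $\yy^d$ \bestfor $T_d=S_d$, I repeatedly invoke extensibility: add the agents of $S_{d-1}$ one at a time, each single-agent step producing an allocation \bestfor the enlarged set while keeping the individual $\delay_i$ of every already-present agent intact; then process the agents of $S_{d-2}$; and so on down to $S_1$. The subtle point is that extensibility preserves \emph{individual} delays (not merely totals), so once an agent $i\in S_g$ is included at step $g$ with the $\delta_i$-value dictated by \bestfor-ness of $T_g$, that value remains frozen through all later rounds. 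A descending induction on $g$ then yields $\delay_{T_g}(\yy^\star)=\delay^\star_{T_g}$ for every $g$, where $\delay^\star_{T_g}$ is the minimum of $\delay_{T_g}(X)$ over all \supplyrespecting allocations $X$ that satisfy \eqref{eq:CC} for every $i\in T_g$.

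I finish with an Abel-summation rewrite of the LP objective,
\[
\sum_i \lambda_i\,\delay_i(\cdot)\;=\;\sum_g \lambda(S_g)\,\delay_{S_g}(\cdot)\;=\;\sum_g \mu_g\,\delay_{T_g}(\cdot),\qquad \mu_g:=\lambda(S_g)-\lambda(S_{g-1})>0,
\]
with the convention $\lambda(S_0):=0$; positivity of every $\mu_g$ uses $\ll>0$ and the strict ordering $\lambda(S_1)<\dots<\lambda(S_d)$. Every LP-feasible $\xx$ is a candidate for the minimum defining $\delay^\star_{T_g}$, so $\delay_{T_g}(\xx)\ge\delay^\star_{T_g}$ for each $g$; hence $\yy^\star$ is LP-optimal. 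Conversely, for any LP-optimum $\xx$ the equality $\sum_g \mu_g(\delay_{T_g}(\xx)-\delay^\star_{T_g})=0$ has non-negative summands and positive weights, so each vanishes, giving $\xx$ \bestfor every $T_g$ and $\delay_{S_g}(\xx)=\delay^\star_{T_g}-\delay^\star_{T_{g+1}}$ common to all optima. The main obstacle is precisely the construction in Part 1: extensibility is doing the heavy lifting by collapsing what would otherwise be a nontrivial Pareto frontier of the tuple $(\delay_{T_g})_g$ down to a single vertex where every coordinate attains its minimum simultaneously; without this structural assumption, the LP-optimal $\delay_{T_g}$-profile would generically be non-unique.
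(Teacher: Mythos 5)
Your proof is correct and takes essentially the same route as the paper: Parts~2 and~3 come from the per-agent complementary-slackness identity $\lambda_i\,\delay_i(\xx)+\pay_i(\xx,\pp)=\sum_k r_{ik}\alpha_{ik}$ (the paper's Lemma~\ref{lem:csc}), and Part~1 combines the Abel-summation rewrite of the objective as $\sum_g(\lambda(S_g)-\lambda(S_{g-1}))\delay_{T_g}(\cdot)$ with an extensibility-built allocation that is \bestfor every $T_g$ simultaneously. The only stylistic difference is that you spell out the one-agent-at-a-time construction of $\yy^\star$ explicitly, whereas the paper asserts its existence in a single sentence; both are the same argument.
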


In the above lemma, the first claim follows from extensibility. The second and third claim follow from the first claim together with the fact that any pair of primal and dual optimal satisfies complementary slackness.

Recall Conditions \ref{enum:equals} and \ref{enum:less} of Section
\ref{sec:cloud} requiring respectively {\em budget balanceness}, and that when a
subset of agents in a segment are given the ``best'' allocation, their total
payment should be at least their total budget (or else they will over demand
some good). Using the first and last part of Lemma \ref{lem:lp-prop} the latter can be roughly translated to 
saying that when the rest of the agents are given the ``worst'' allocation, the rest underpay in total.
Based on this intuition next we define conditions {\em budget balance ($\BB$)} and {\em subset condition ($\SC$)} in the following.

\begin{definition}\label{def:feas}
Given $(\ll, \pp)$, and a set $S\subseteq A$, we say that condition
\begin{itemize}
\item $\BB$ is satisfied: If at any optimal solution $\xx$ of $LP(\ll)$, we have $\pay_{S}(\xx,\pp)=m(S)$. 
\item $\SC$ is satisfied: $\forall~ T \subseteq S$ let $\xx$ to be an optimal solution of $LP(\ll)$ where $\delay_T$ is maximized. Then, $m(T)
\ge \pay_T(\xx,\pp)$. 
\end{itemize}
\end{definition}

We will show that if $\BB$ and $\SC$ are satisfied for each ``segment'' at any given $\ll>0$ then $\ll$ is the {\em right} parameter vector. 
We will call such a $(\ll,\pp)$ \efeasible, formally defined next. 


\begin{definition}\label{def:feasible}
We say that pair $(\ll,\pp)$ is \efeasible\ if $\exists ~\aal$ such that $(\aal,\pp)$ is an optimal solution of $DLP(\ll)$, and pair $\ll,\pp$ satisfies
$\BB$ and $\SC$ for subsets $S_g,\ \forall g\le d$, where $S_1,\dots, S_d$ is the partition of $\CA$ by equality of $\ll_i$. 
\end{definition}

The next lemma shows that the parameter vector $\ll$ corresponding to a \efeasible pair would ensure existence of allocation where each agent spends exactly her budget, and thereby will give an equilibrium using Theorem \ref{thm:lpl2eq}. 

\begin{restatable}{lemma}{feas}\label{lem:feas}
If pair $(\ll^*,\pp^*)$ is \feasible for $\ll^*>0$ then there exists an optimal solution $\xx^*$ of the primal $LP(\ll^*)$ such that $\pay_i(\xx^*,\pp^*)=m_i,\ \forall i\in A$. 
\end{restatable}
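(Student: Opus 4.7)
The plan is to work segment by segment. Partition $A$ into segments $S_1, \dots, S_d$ by equality of $\lambda^*_i$. By Lemma \ref{lem:lp-prop} part 1, any optimal solution of $LP(\ll^*)$ is lexicographically optimal across these segments in order of decreasing $\lambda$, and the per-segment delay $\delay_{S_g}$ is invariant across optima; this will let me handle each $S_g$ in isolation and then stitch the within-segment allocations together.

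Fix a segment $S_g$ and consider the payment polytope $Q_g := \{(\pay_i(\xx, \pp^*))_{i \in S_g} : \xx \text{ is optimal for } LP(\ll^*)\}$. By $\BB$ together with Lemma \ref{lem:lp-prop} part 2, every $\y \in Q_g$ satisfies $\sum_{i \in S_g} y_i = m(S_g)$. Applying $\SC$ to a subset $T \subseteq S_g$ and invoking Lemma \ref{lem:lp-prop} part 3 (which identifies the payment minimizer for $T$ with the delay maximizer for $T$) gives $f(T) := \min_{\y \in Q_g} \sum_{i \in T} y_i \le m(T)$; applying $\SC$ to the complement $S_g \setminus T$ and subtracting from the total yields the symmetric inequality $g(T) := \max_{\y \in Q_g} \sum_{i \in T} y_i \ge m(T)$. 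Hence the budget vector $\m := (m_i)_{i \in S_g}$ satisfies $f(T) \le m(T) \le g(T)$ for every $T \subseteq S_g$, with equality at $T = S_g$.

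The next and main step is to show $\m \in Q_g$. I would do this by establishing that $Q_g$ is the base polytope of a polymatroid with rank function $g$, so that membership of $\m$ reduces to the standard conditions $m(S_g) = g(S_g)$ and $m(T) \le g(T)$ for all $T$, which we have already verified. The structural claim requires submodularity of $g$, which I would prove by an exchange argument on optimal solutions of $LP(\ll^*)$: given $\xx$ attaining $g(T)$ and $\xx'$ attaining $g(T')$, the allocations can be swapped among agents in $T \cap T'$ and $T \triangle T'$ inside $S_g$ to produce optimal solutions whose aggregate $T\cap T'$ and $T\cup T'$ payments sum to at least $g(T)+g(T')$. The swap preserves optimality precisely because all agents in $S_g$ share a common $\lambda_i$ and hence contribute identically to the $LP(\ll^*)$ objective per unit of delay, and complementary slackness with the fixed dual $(\aal^*,\pp^*)$ is preserved whenever the exchange occurs among $(i,j)$ pairs in the dual-tight support. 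This submodularity / polymatroid property is the heart of the argument and the main obstacle; once it is established, standard base-polytope theory delivers $\m \in Q_g$.

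Finally, having produced for each segment $S_g$ a within-segment optimal allocation achieving $\pay_i(\cdot, \pp^*) = m_i$ for every $i \in S_g$, Lemma \ref{lem:lp-prop} part 1 allows me to paste them together into a single allocation $\xx^*$ that is optimal for $LP(\ll^*)$ globally, without altering any $\delay_{S_g}$ and hence any $\pay_{S_g}$, so that $\pay_i(\xx^*, \pp^*) = m_i$ holds for every $i \in A$, as required.
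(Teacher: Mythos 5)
Your approach — reducing to polymatroid base-polytope membership per segment and then pasting — is genuinely different from the paper's proof, which is a direct, global argument: it minimizes a potential $f(\xx)=\sum_k T_k(\xx)$ (where $T_k$ is the sum of the $k$ largest budget surpluses $\tau_i(\xx)=m_i-\pay_i(\xx,\pp^*)$) over optimal solutions of $LP(\ll^*)$, and shows via $\BB$, $\SC$, extensibility, and a convex-combination perturbation that the minimizer must have all $\tau_i=0$. Your route is attractive because it would explain \emph{why} the budget vector lies in the achievable payment polytope, but as written it has two gaps.

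First, the step ``$Q_g$ is the base polytope of the polymatroid with rank function $g$'' does not follow from what you have. You correctly derive that every $\y\in Q_g$ satisfies $\sum_T y_i\le g(T)$ and $\sum_{S_g}y_i=g(S_g)$, i.e.\ $Q_g$ is \emph{contained in} the base polytope, and submodularity of $g$ is indeed plausible (it is roughly dual to the paper's Lemma~\ref{lem:submodular}). But you need the \emph{reverse} inclusion, or at least that $\m_{S_g}$ (which is merely \emph{some} point of the base polytope) is attained by an optimal $\xx$. A polytope can satisfy the tightness conditions $\max_{\y\in Q_g}\sum_T y_i=g(T)$ for every $T$ while being a strict subset of the base polytope, so ``standard base-polytope theory'' does not automatically deliver $\m_{S_g}\in Q_g$. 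You would need a further argument (for instance, that every greedy-order vertex of the base polytope is realized by an optimal allocation, which is where extensibility would have to enter).

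Second, and more fundamentally, the pasting step is not justified. Lemma~\ref{lem:lp-prop} part~1 says the aggregate delay $\delay_{S_g}$ is invariant across optima; it does \emph{not} say the set of per-segment payment profiles is a product $\prod_g Q_g$. Goods can be shared across segments (the paper explicitly emphasizes that ``some goods may be allocated across agents in different segments''), so a per-segment optimal $\xx^{(g)}$ achieving $\pay_i=m_i$ on $S_g$ and a per-segment optimal $\xx^{(g')}$ achieving it on $S_{g'}$ may use conflicting allocations of the same good and cannot simply be concatenated into a single supply-respecting $\xx^*$. The paper confronts exactly this coupling in its own proof: the key internal claim is that the single allocation $\hxx$ that is lex-optimal for the sorted-$\tau$ ordering \emph{simultaneously} maximizes $\delay_{L_g}$ for every segment, and this is proved by an explicit extensibility-based construction, not by projecting and stitching. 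To rescue your route you would either need to establish a product structure for the joint payment polytope across segments (which itself requires extensibility and is nonobvious), or work with a single global submodular function over all of $A$ — and then derive cross-segment bounds, since $\SC$ in Definition~\ref{def:feas} is only assumed for subsets within a single $S_g$.
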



Given such a $\ll^*$ and solution $(\aal^*, \pp^*)$ of $DLP(\ll^*)$ that satisfy conditions of Lemma \ref{lem:feas}, the above lemma ensures existence of allocation that satisfies  \ref{eq:budget}, $\forall i\in \CA$. We derive following feasibility LP in $\xx$ variables to compute such an allocation. 
\begin{equation}\label{eq:feasLP}
\begin{array}{ll}
\mbox{$\hxx$ is a solution of $LP(\ll^*)$}:& \sum_{i}\lambda^*_i \sum_j d_{ij} x_{ij} = \sum_{i} \lambda^*_i \sum_j d_{ij} \hx_{ij}\\
\forall (i,k):&  \sum_{j} a_{ijk} x_{ij} \ge r_{ik} \\
\forall j: & \sum_i x_{ij} \le 1 \\
\forall i: & \sum_j p^*_j x_{ij} = m_i \\
\forall (i,j): &  x_{ij}\ge 0
\end{array}
\end{equation}

Now our goal has reduced to finding a \efeasible $(\ll,\pp)$ pair. That is, if we think of partition of agents by equality of $\lambda_i$ as ``segments'',  then we wish to find a vector $\ll$ such that $\BB$ and $\SC$ are satisfied for each ``segment''. Our algorithm, defined in Algorithm \ref{alg}, tries to fulfill exactly this goal. At a high level, like in Section \ref{sec:cloud}, our algorithm will build the segments bottom up, {\em i.e.,} lowest to
highest $\lambda$ segments. We will start by setting all the $\lambda$s to same value, and find lowest $\lambda$ value where $\BB$ and $\SC$ are satisfied for a subset. Once found we freeze this subset as a segment and start increasing $\lambda$ for the rest to find the {\em next segment}, and repeat. 

\begin{algorithm}[t]
	\caption{Algorithm to compute market equilibrium under extensibility}\label{alg}
	
	\begin{algorithmic}[1]
		\State Input: $A,(m_i)_{i \in A},$ (\ref{eq:coveringLP})$_{i\in A}$
		\State Initialize $\CA'\la\CA$, $\pcur\la0, \llcur\la 0$ and $k\la 1$ 
		\While{$\CA'\neq \emptyset$}
		\State $(S^k, \llnew, \pnew) \la $ \nextseg$(\llcur, \pcur, \CA',(m_i)_{i \in A},$ (\ref{eq:coveringLP})$_{i\in A})$
		\State $\CA' \la\CA'\setminus S^k$, and $k\la k+1$
		\State $\llcur \la \llnew$, and $\pcur \la \pnew$.
		\EndWhile
		\State Compute allocations $\x_i$ satisfying \ref{eq:budget} for all $i \in A$, by solving LP \eqref{eq:feasLP} for $\pp^*=\pcur$ and $\ll^* =\llcur$.
		\State Output allocations $\xx$ and prices $\pcur$.
	\end{algorithmic}
\end{algorithm}

In this process of finding the {\em next segment} we need to make sure that $\BB$ and $\SC$ conditions are maintained for the previous segments. 
In Section \ref{sec:cloud} we were able to do this by simply {\em fixing the prices} of the goods in 
earlier segments, because goods were not shared across segments. Here, some of the
goods allocated to agents in the earlier segments may also be allocated to
agents in the later segments, and additionally these allocations are not fixed and may keep
changing during the algorithm.  (We fix the allocation only at the end.)
Furthermore, the prices are required to be dual optimal w.r.t. the $\ll$ vector that we eventually find. 
On the other hand in order to maintain $\BB$ and $\SC$ conditions for the previous segments 
we need to ensure that the total payment of previous segments do not change.

The next lemma shows that this is indeed possible by proving that prices of goods bought by agents from previous segments can be held fixed. 
In fact, we will be able to fix $\alpha_{ik}$s as well, for agents in the previous segments. \rh{The proof involves an application of Farkas' lemma,  leveraging extensibility. During computation of {\em next segment}, we will hold fix the $\lambda$s of agents in segments found so far, and increase the $\lambda$s of the remaining agents. To facilitate this we define $\ones_S\in \{0,1\}^A$ as the indicator vector of
$S\subseteq A$, i.e., $\ones_S(i) = 1$ if $i\in S$, and is $0$ otherwise. }

\begin{restatable}{lemma}{price}\label{lem:price}
Given a $\ll$, partition agents into $S_1,\dots, S_d$ by equality of $\lambda_i$, where $\lambda(S_1)<\dots<\lambda(S_d)$. For $R\subseteq
S_d$ consider primal optimal $\hxx$ that is \bestfor $R$, and let $(\haa,\hpp)$ be a dual optimal. 
Consider for some $a > 0$, the vector $\ll'= \ll + a \ones_R$. Then $\hxx$ is optimal in $LP(\ll')$ and there exists an optimal solution $(\aal',\pp')$ of $DLP(\ll')$ such that, 
\[
\begin{array}{ll}
\forall j: & p'_j \ge \hp_j \ \ \ \mbox{ and } \ \ \ \sum_{i \notin R} \hx_{ij} > 0 \Rightarrow p'_j=\hp_j\\
\forall i\notin R,\ \forall k,& \alpha'_{ik} = \ha_{ik}\ . 
\end{array}
\]
\end{restatable}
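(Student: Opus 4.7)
The plan is to prove the two assertions of the lemma in sequence.

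For primal optimality of $\hxx$ in $LP(\ll')$, I would decompose the new objective as $\sum_i \li' \delay_i(\cdot) = \sum_i \li \delay_i(\cdot) + a\,\delay_R(\cdot)$ and argue both parts are minimized by $\hxx$. For any $\yy$ feasible in $LP(\ll')$, the first part is bounded below by primal optimality of $\hxx$ in $LP(\ll)$, and the second by the \bestfor-$R$ property of $\hxx$: since $\yy$ satisfies \eqref{eq:CC} for agents in $R$ and is \supplyrespecting, $\delay_R(\hxx) \le \delay_R(\yy)$ by Definition~\ref{def:bestfor}. Weighting the two inequalities by $1$ and $a\ge 0$ and summing yields optimality of $\hxx$ for $LP(\ll')$.

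For the construction of $(\aal',\pp')$, I plan to introduce the restricted LP
\[
LP^R:\ \min_{\yy}\sum_{i\in R,j} d_{ij} y_{ij}\ \text{ s.t. }\ \sum_j a_{ijk} y_{ij} \ge r_{ik}\ \forall\, i\in R,k;\ \sum_{i\in R} y_{ij} \le 1 - \sum_{i\notin R} \hx_{ij}\ \forall\, j;\ y_{ij}\ge 0,
\]
with dual variables $\bar{\alpha}_{ik},\bar{p}_j\ge 0$. A direct extension argument -- extend any feasible $\yy_R$ of $LP^R$ to a full allocation by matching $\hxx$ on agents outside $R$ -- shows $\hxx|_R$ is optimal for $LP^R$: otherwise the extended allocation would contradict $\hxx$ being \bestfor\ $R$ in the full market.

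Given a dual optimum $(\bar{\alpha},\bar{p})$ of $LP^R$, I would set $\alpha'_{ik}=\ha_{ik}$ for $i\notin R$, $\alpha'_{ik}=\ha_{ik}+a\bar{\alpha}_{ik}$ for $i\in R$, and $p'_j=\hp_j+a\bar{p}_j$, and then verify dual feasibility of $DLP(\ll')$ and complementary slackness with $\hxx$ by combining the analogous properties of $(\haa,\hpp)$ in $DLP(\ll)$ with those of $(\bar{\alpha},\bar{p})$ in $LP^R$. The scaling by $a$ cleanly absorbs the increment $\li'-\li=a$ on the $R$-rows of the dual feasibility constraint $\sum_k a_{ijk}\alpha_{ik}-p_j\le \li' d_{ij}$; on rows $i\notin R$, whose dual variables are held at the hatted values, $p'_j\ge \hp_j$ together with the original dual feasibility of $(\haa,\hpp)$ suffices.

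The hard part will be ensuring $p'_j=\hp_j$ on goods $j$ with $\sum_{i\notin R}\hx_{ij}>0$, which requires $\bar{p}_j=0$ for such $j$ at the chosen $LP^R$ dual optimum. A priori the $LP^R$ supply constraint for these goods can be tight and $\bar{p}_j$ could be positive. My plan is to show via Farkas' lemma that the polytope of $LP^R$ dual optima admits a point with $\bar{p}_j=0$ on all such goods; a Farkas certificate for the negation would produce a primal direction in $LP^R$ that, after a swap between $R$ and the outside agents at the shared good made possible by \emph{extensibility}, strictly improves $\delay_R$ in the full market and contradicts $\hxx$ being \bestfor\ $R$. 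Making this reroute-and-swap argument precise, while also respecting complementary slackness on the $\alpha_{ik}$ for non-tight covering constraints (which forces $\alpha'_{ik}=0$ and hence $\bar{\alpha}_{ik}=0$ for those indices), is the principal obstacle.
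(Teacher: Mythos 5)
Your primal-optimality decomposition checks out, as does the scheme of lifting a dual optimum $(\bar{\alpha},\bar{p})$ of $LP^R$ to a candidate dual optimum of $DLP(\ll')$ via $\alpha'_{ik}=\ha_{ik}+a\bar{\alpha}_{ik}$ for $i\in R$, $\alpha'_{ik}=\ha_{ik}$ for $i\notin R$, and $p'_j=\hp_j+a\bar{p}_j$: dual feasibility for both blocks of rows, complementary slackness against $\hxx$ on the rows with $i\in R$, and $p'_j=0$ on undersold goods all follow by summing the corresponding relations for $(\haa,\hpp)$ and $(\bar{\alpha},\bar{p})$. The obstacle you flag, however, is genuine and is the real content of the lemma: to preserve complementary slackness for $i\notin R$ on goods $j$ that they actually buy, you need $\bar{p}_j=0$ whenever $\sum_{i\notin R}\hx_{ij}>0$, and nothing in $LP^R$ alone forces this when the reduced capacity constraint on such a $j$ is tight at $\hxx|_R$. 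Your proposal stops exactly here; the ``reroute-and-swap plus Farkas'' step is left as a plan rather than carried out, so as written this is a gap, not a proof.

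The paper does not detour through $LP^R$. It parametrizes $\pp'=\hpp+\ddelta$ and writes, as a single linear feasibility system in $\bigl(\ddelta,(\alpha'_{ik})_{i\in R}\bigr)$, exactly the conclusions to be proved: $\ddelta\ge 0$; $\delta_j=0$ on goods that are undersold or bought by some agent outside $R$; the $(i,j)$ dual constraint of $DLP(\ll')$ with equality when $\hx_{ij}>0$, $i\in R$; and the inequality form otherwise. It then applies Farkas' lemma to this system and derives a contradiction from the infeasibility certificate $(y_{ij})_{i\in R}$ by a two-case analysis on the sign of $\sum_{i\in R,j}d_{ij}y_{ij}$: the nonnegative case contradicts dual feasibility of $(\haa,\hpp)$ in $DLP(\ll)$, while the negative case produces a feasible perturbation of $\hxx$ along $y$ on $R$'s rows that strictly decreases the $LP(\ll)$ objective, contradicting optimality. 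Your intended Farkas certificate would play exactly the role of that $y$, so the two routes are close in spirit, but the paper never needs a separate argument about the polytope of $LP^R$ dual optima; if you want to salvage your route, the cleanest repair is to drop $LP^R$ and set up the Farkas system directly in the price increments $\ddelta$, as the paper does.
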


As discussed above our algorithm will build segments {\em inductively} from the lowest to highest $\lambda$ value, by increasing $\lambda$ of only the ``remaining'' agents. Suppose, we have built segments $S_1$ through $S_{k-1}$, and 
let $\CA'=\CA\setminus \cup_{g=1}^{k-1} S_g$ be the remaining agents.  
Let $\llcur$ be the current $\ll$ vector where $\llcur(S_1) <\dots<\llcur(S_{k-1})< \llcur(\CA')$.
Let $\pcur$ be the corresponding dual price vector which is optimal for $DLP(\llcur)$. 
For ease of notations we define the following. 
	\begin{equation}\label{eq:la}
	\mbox{For any $a\ge 0$, define }\ll^a = \llcur + a\ones_{\CA'}. 
	\end{equation}
	Fix an allocation $\xxcur$ of $LP(\llcur)$. 
	 We call an optimal solution $(\aal,\pp)$ of $DLP(\ll^a)$ {\em valid} if
	 prices are monotone w.r.t. $\pcur$ and $\xxcur$, in the sense as guaranteed by Lemma
	 \ref{lem:price} (where prices of goods allocated to previous segments are
	 held fixed and prices of the rest of the goods are not decreased), and
	 $\aal_i$s are fixed for agents outside $\CA'$. We will call the corresponding prices {\em valid prices}. 
  For simplicity we will assume {\em uniqueness of valid prices}.\footnote{This is without loss of generality since perturbing the parameters of the market ensures this. A typical way to simulate perturbation is by lexicographic ordering \cite{Sch-book}.}
\begin{equation}\label{eq:pa}
\mbox{Define }\pp^a \mbox{ to be the {\em valid} price vector at dual optimal of $DLP(\ll^a)$} 
\end{equation}

Since the correctness is proved by induction, the {\em inductive hypothesis} is that w.r.t. $(\llcur,\pcur)$, both $\SC$ and $\BB$ are satisfied for $S_1, \ldots S_{k-1}$, and $\SC$ is satisfied for the remaining agents $\CA'$. The base case is easy with the $\lambda_i$s all set to $0$. 
Our next goal is to find the next segment $S_k \subseteq \CA'$, a new vector $\llnew$ and a new price vector $\pnew$ such that 
the following properties hold. 

\begin{enumerate}
	\item Parameter vector $\llnew$ is obtained from $\llcur$ by fixing $\lambda_i$s of agents outside $\CA'$, increase
	$\lambda_i$s of agents in $S_k$ by the same amount, and those of agents $\CA'\setminus S_k$ by some more. The latter increase is to separate $S_k$ from $\CA'$. That is for some $a\ge 0$ and $\epsilon > 0, \llnew=\ll^a + \epsilon \ones_{\CA'\setminus S_k}$.
	\item Price vector $\pnew$ is valid and optimal for $DLP(\llnew)$. 
	\item W.r.t. $(\llnew,\pnew)$, $S_1,\dots, S_k$ satisfy both $\BB$ and $\SC$, and $ \CA'\setminus S_k$ satisfies $\SC$. 
\end{enumerate}


\begin{algorithm}[t]
\caption{Subroutine \nextseg$(\llcur, \pcur, \CA',(m_i)_{i \in A},$ (\ref{eq:coveringLP})$_{i\in A})$ }\label{alg.main.leastseg}
\begin{algorithmic}[1]
	\State Initialize $a^0\la 0, a^1 \la \Delta$, where $\Delta = (\sum_{i,j,k} |\aijk|+ \sum_{i,k} |r_{ik}|+ \sum_{i,j} |d_{ij}|+
	\sum_i |m_i|)^{2mn|C|}$. 
	\State Define 
	function $f_a$ as in (\ref{eq.fa}). 
	\State Set $S^0 \in \argmin_{S \subseteq A', S\neq \emptyset} f_{a^0}(S)$ and $S^1 \in \argmin_{S \subseteq
	A',S\neq \emptyset} f_{a^1}(S)$
	\While{$S{^0}\neq S{^1}$}
		\State Set $a^* \la\frac{a^0 + a^1}{2}$ and $S^* \in \argmin_{S\subseteq A', S\neq\emptyset}
		f_{a^*}(S)$  
		\If{$f_{a^*}(S^*)>0$}
			 Set $a^0 \la a^*$ and $S^0 \la S^*$
		\Else\ 
			Set $a^1 \la a^*$ and $S^1 \la S^*$
		\EndIf
	\EndWhile
	\State $S^*\la S^0$. Compute $a^*$ by solving feasibility LP for $S^*$ mentioned in Lemma \ref{lem:bin} such that $f_{a^*}(S^*)=0$ 
	\State // Next we compute maximal minimizer of function $f_{a^*}$ containing set $S^*$.
	\State $A'\la A'\setminus S^*$.
	\While{$A'\neq \emptyset$}
	\State $S\la \argmin_{S\subseteq A', T\neq \emptyset} f_{a^*}(S\cup S^*)$
	\State {\bf if} $f_{a^*}(S\cup S^*)>0$ {\bf then} {\bf break}
	\State {\bf else} set $S^* \la S^* \cup S$, $A'\la A'\setminus S$
	\EndWhile
	\State Set $\llnew \la \ll^{a^*}$, $\lambda_i^{new} \la \lambda_i^{new} +\epsilon \ones_{A'}$, and $\pnew\la $valid price at $\llnew$, where $\epsilon \la
	\frac{1}{\Delta}$
	\State Return $(S^*,\llnew,\pnew)$
\end{algorithmic}
\end{algorithm}

The computation of the next segment $S^k$ satisfying the above conditions is done by the subroutine \nextseg, which we describe next, and which is
formally defined in Algorithm \ref{alg.main.leastseg}. As in Section \ref{sec:cloud}, the basic idea is to reduce this problem to a single parameter binary search. Since Condition $\SC$ is satisfied for the remaining agents $\CA'$ at $(\llcur,\pcur)$ while $\BB$ is not, total payment of $\CA'$ is less than their total budget $m(\CA')$. In order to keep track of this surplus budget, consider the following function on $S\subseteq \CA'$. 
\begin{equation}\label{eq:f}
f_{\ll,\pp}(S) = m(S) - \pay_S(\xx,\pp), \mbox{ where $\xx$ is an optimal solution to $LP(\ll)$ that maximizes $\delay_S$.}
\end{equation}

We translate Condition 3 above in terms of this function, in the following lemma, which essentially reduces the problem to a single parameter search. 

\begin{restatable}{lemma}{segCond}\label{lem:segCond}
Suppose that for some $a\ge 0$,  
$$\textstyle S_k \in \argmin_{S\subseteq \CA',
		S\neq \emptyset} \{ f_{\ll^a,\pp^a}(S) \}.$$
	Further, suppose that $f_{\ll^a,\pp^a}(S_k) = 0$, and $S_k$ be a maximal such set. 
Then there exists a rational number $\epsilon >0$ of polynomial-size such that,
w.r.t. $(\llnew,\pnew)$ as defined above, $S_1,\ldots, S_{k}$ satisfy both $\BB$
and $\SC$, and $\CA'\setminus S_k$ satisfy $\SC$. 
\end{restatable}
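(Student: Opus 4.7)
The plan is to verify conditions $\BB$ and $\SC$ for each relevant segment of the partition at $(\llnew,\pnew)$ by transferring them from $(\ll^a,\pp^a)$ via Lemma \ref{lem:price}. First I would apply Lemma \ref{lem:price} with $R=\CA'\setminus S_k$: by extensibility there exists an allocation $\hxx$ that is optimal at $\ll^a$ and \bestfor $R$, and the lemma then guarantees $\hxx$ is also optimal at $\llnew$ together with a dual $(\aal^{new},\pnew)$ satisfying $\pnew_j = \pp^a_j$ on every good $j$ that $\hxx$ allocates to an agent outside $R$, and $\alpha^{new}_{ik} = \alpha^a_{ik}$ for each such agent. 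A key observation here is that since $\CA'$ is a single segment at $\ll^a$, $\delay_{\CA'}$ is constant over $\OPT(\ll^a)$; thus being \bestfor $\CA'\setminus S_k$ within $\OPT(\ll^a)$ is equivalent to maximizing $\delay_{S_k}$, so the same $\hxx$ does both.

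Next I would handle $S_1,\dots,S_{k-1}$ and $\BB$ for $S_k$. Chaining Lemma \ref{lem:price} from $(\llcur,\pcur)$ through $(\ll^a,\pp^a)$ to $(\llnew,\pnew)$ preserves the $\alpha_{ik}$'s of agents in the earlier segments as well as the prices of the goods they use. Combining this with Lemma \ref{lem:lp-prop} (parts 1, 2, 3) and the inductive hypothesis at $(\llcur,\pcur)$, the conditions $\BB$ and $\SC$ for $S_1,\dots,S_{k-1}$ transfer verbatim. For $\BB$ on $S_k$, the hypothesis $f_{\ll^a,\pp^a}(S_k)=0$ together with Lemma \ref{lem:lp-prop} part 3 yields $\pay_{S_k}(\hxx,\pp^a)=m(S_k)$; price preservation on $S_k$'s goods then gives $\pay_{S_k}(\hxx,\pnew)=m(S_k)$. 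Because $S_k$ is the second-highest segment at $\llnew$, Lemma \ref{lem:lp-prop} parts 1 and 3 force $\pay_{S_k}(\cdot,\pnew)$ to be constant across $\OPT(\llnew)$, establishing $\BB$ on $S_k$.

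The main obstacle is verifying $\SC$ at $(\llnew,\pnew)$ for both $S_k$ and $\CA'\setminus S_k$. The hypothesis at $(\ll^a,\pp^a)$ only gives $f(T)\ge 0$ for every non-empty $T\subseteq\CA'$, with strict positivity (thanks to maximality of $S_k$) for $T\subseteq\CA'$ that lies outside the lattice of minimizers rooted at $S_k$. Going to $(\llnew,\pnew)$, $\OPT(\llnew)$ shrinks to the allocations that are also \bestfor $\CA'\setminus S_k$, which in principle could lower $f$ on subsets of $S_k$. The technical heart is a sensitivity argument: complementary slackness combined with $\alpha^{new}_{ik}=\alpha^a_{ik}$ for $i\in S_k$ shows that $\pnew_j = \pp^a_j$ on every good that any agent of $S_k$ consumes in any primal optimal at $\llnew$, so the change in $f(T)$ for $T\subseteq S_k$ is bounded by a polynomial in the LP data times $\epsilon$. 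Choosing $\epsilon=1/\Delta$ as in Algorithm \ref{alg.main.leastseg} preserves all strict positivities and hence $\SC$ on $S_k$; a symmetric argument on the top segment handles $\CA'\setminus S_k$ via complementary slackness at $(\llnew,\pnew)$ together with the monotonicity $\pnew_j\ge\pp^a_j$ from Lemma \ref{lem:price}. The resulting $\epsilon$ is rational of polynomial bit-length, since $\Delta$ is a polynomial-size bound on the LP data.
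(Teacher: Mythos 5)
Your overall scaffolding matches the paper's: chain Lemma \ref{lem:price} to freeze the $\alpha_{ik}$'s and prices of the goods used by agents outside $R=\CA'\setminus S_k$, pass $\BB$ and $\SC$ for $S_1,\dots,S_{k-1}$ through from the inductive hypothesis, and get $\BB$ for $S_k$ from $f_{\ll^a,\pp^a}(S_k)=0$ plus price preservation. That much is right.

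Where the proposal goes off the rails is in the treatment of $\SC$, and specifically in where the $\epsilon$ is actually needed. You place the ``sensitivity'' argument on subsets $T\subseteq S_k$ and then claim that $\SC$ on $\CA'\setminus S_k$ follows by a ``symmetric argument.'' This is backwards. For $T\subseteq S_k$ no $\epsilon$-dependent perturbation enters at all: because $S_k$ is a \emph{minimizer} of $f_{\ll^a,\pp^a}$ with minimum value $0$, every nonempty $T\subseteq S_k$ already satisfies $f_{\ll^a,\pp^a}(T)\ge 0$; by Lemma \ref{lem:segCond1} the max-$\delay_T$ value is identical over $\OPT(\ll^a)$ and $\OPT(\llnew)$, and Lemma \ref{lem:price} fixes the relevant prices exactly, so $f_{\llnew,\pnew}(T)=f_{\ll^a,\pp^a}(T)\ge 0$ with no slack to manage. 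Your own premise (``$\pnew_j=\pp^a_j$ on every good that any agent of $S_k$ consumes'') already implies the change is zero, not ``bounded by a polynomial times $\epsilon$''---the sensitivity framing here is a red herring.

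The genuinely hard direction, which your proposal essentially waves off, is $\SC$ for $\CA'\setminus S_k$. There the prices $\pnew_j$ on the goods used by $\CA'\setminus S_k$ are \emph{not} frozen---they can strictly increase with $\epsilon$---so one must show that the strict positivity $f_{\ll^a,\pp^a}(T)>0$ for all $\emptyset\neq T\subseteq\CA'\setminus S_k$ (which follows from maximality of $S_k$ plus submodularity of $f_{a}$, namely $f(T)\ge f(T\cup S_k)-f(S_k)=f(T\cup S_k)>0$) survives the $\epsilon$-perturbation. The paper does not do this via an abstract Lipschitz/sensitivity bound on the parametric LP; instead it isolates this step into a separate statement (its Lemma \ref{lem:ii}), and for each $T\subseteq\CA'\setminus S_k$ it sets up an explicit feasibility LP that computes the critical increase $a^T>0$ at which $f(T)$ would first hit zero, invokes Lemma \ref{lem:rational} to argue each $a^T$ is a positive rational of polynomial bit-length, and then takes $\epsilon$ below the minimum over all such $T$. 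Your proposal has no analogue of this step---``the change in $f(T)$ is bounded by a polynomial in the LP data times $\epsilon$'' is not established (parametric LP solutions are only piecewise linear, and you need to rule out a basis change in $(0,\epsilon)$), and ``choosing $\epsilon=1/\Delta$ preserves all strict positivities'' is asserted rather than proved. As written this is a real gap, not a detail.
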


The above lemma reduces the task of finding next segment to finding an appropriate $a$ such that the minimum value of
$f_{\ll^a,\pp^a}$ is zero under the {\em valid} price $\pp^a$. 
This requires two things: first we need to find a minimizer of $f_{\ll^a,\pp^a}$ for a given $a>0$, and second we need to find the right value of $a$. The next lemma shows that the first can be done using an algorithm for submodular minimization, and therefore in a polynomial time \cite{schrijver.book}. 
For convenience of notation, we define the following functions. 
\begin{equation}\label{eq.fa}
f_a(S):= f_{\ll^a,\pp^a}(S)\ \ \ \mbox{ and } \ \ g(a):=\min_{S\subseteq \CA', S\neq \emptyset} f_a(S)
\end{equation}

\begin{restatable}{lemma}{submodular}\label{lem:submodular}
Given $a\ge 0$, function $f_a$ is submodular over set $\CA'$.
\end{restatable}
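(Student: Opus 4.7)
The plan is to re-express $f_a$ in a form where its dependence on $S$ is modular except for a single term, and then show that remaining term is submodular using extensibility.

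Taking any primal-dual optimal pair $(\xx,\aal^a,\pp^a)$ of $LP(\ll^a)$ and $DLP(\ll^a)$, I would sum the complementary slackness condition $x_{ij}(\lambda^a_i d_{ij}-\sum_k a_{ijk}\alpha^a_{ik}+p^a_j)=0$ over $j$, combine with $\alpha^a_{ik}(\sum_j a_{ijk}x_{ij}-r_{ik})=0$, and obtain the per-agent identity $\pay_i(\pp^a,\xx)=\sum_k\alpha^a_{ik}r_{ik}-\lambda^a_i\delay_i(\xx)$. Since every $i\in\CA'$ shares the common value $\lambda^a(\CA')$, for $S\subseteq\CA'$ this sums to $\pay_S(\xx,\pp^a)=\sum_{i\in S,k}\alpha^a_{ik}r_{ik}-\lambda^a(\CA')\delay_S(\xx)$. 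Picking $\xx^S$ to minimize $\pay_S$ (equivalently, maximize $\delay_S$) over the polytope $P$ of primal optima of $LP(\ll^a)$, and writing $h(S):=\max_{\xx\in P}\delay_S(\xx)$, we obtain
\[
f_a(S)\;=\;m(S)\;-\;\sum_{i\in S,k}\alpha^a_{ik}r_{ik}\;+\;\lambda^a(\CA')\,h(S).
\]
The first two terms are modular in $S$ and $\lambda^a(\CA')\ge 0$, so it suffices to prove $h$ is submodular over $\CA'$.

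For this I would establish the stronger geometric claim that for any $A,B\subseteq\CA'$ there is a single $\xx^*\in P$ simultaneously attaining $\delay_{A\cup B}(\xx^*)=h(A\cup B)$ and $\delay_{A\cap B}(\xx^*)=h(A\cap B)$. Given such an $\xx^*$, the inclusion-exclusion identity $\delay_A(\xx^*)+\delay_B(\xx^*)=\delay_{A\cup B}(\xx^*)+\delay_{A\cap B}(\xx^*)$, combined with $h(A)\ge\delay_A(\xx^*)$ and $h(B)\ge\delay_B(\xx^*)$, immediately yields $h(A)+h(B)\ge h(A\cup B)+h(A\cap B)$. To construct $\xx^*$, apply extensibility iteratively along the nested chain $\CA'\setminus(A\cup B)\subseteq\CA'\setminus(A\cap B)\subseteq T_d\subseteq T_{d-1}\subseteq\dots\subseteq T_1=\CA$, where $T_g=\bigcup_{q\ge g}S_q$ as in Lemma \ref{lem:lp-prop}. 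Start with any feasible $\xx_0$ that is \bestfor $\CA'\setminus(A\cup B)$ (which exists by LP feasibility, and is vacuous if that set is empty) and extend one agent at a time up the chain. The endpoint $\xx^*$ is then \bestfor every set in the chain; in particular \bestfor each $T_g$, so by Lemma \ref{lem:lp-prop}, part 1, $\xx^*\in P$, and \bestfor $\CA'\setminus(A\cup B)$ and $\CA'\setminus(A\cap B)$ as well. To convert these \bestfor statements into $P$-maximization, note that $\xx^*\in P$ forces $\delay_{\CA'}(\xx^*)$ to equal the constant value $D$ from Lemma \ref{lem:lp-prop}, part 1, whence $\delay_{A\cup B}(\xx^*)=D-\delay_{\CA'\setminus(A\cup B)}(\xx^*)$; since $\xx^*$ minimizes $\delay_{\CA'\setminus(A\cup B)}$ over all feasible allocations (hence over $P$, which is a subset of feasibles), this equals $h(A\cup B)$, and the argument for $A\cap B$ is identical.

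The main obstacle is justifying the inductive preservation step: extensibility only directly guarantees that the new allocation is \bestfor the enlarged set, yet we need it to be \bestfor every earlier set in the chain simultaneously. The crucial point is that each extensibility step preserves the per-agent delays on the set that was \bestfor before the step, so the sum of those delays stays at its feasibility minimum and the new allocation is again \bestfor that set; iterating this observation along the chain gives the required \bestfor-ness of $\xx^*$ for all sets at once, and the submodularity of $f_a$ follows.
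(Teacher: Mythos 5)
Your proof is correct, and it takes a genuinely different route from the paper's. The paper works with the diminishing-marginals formulation: for $S\subset T$ and $a\notin T$, it uses the same complementary-slackness identity to write $f_a(S')-f_a(S)=m_a-\sum_k\alpha_{ak}r_{ak}+\lambda(\CA')\delay_a(\xx^S)$ (and similarly for $T$), where $\xx^S$ is built by the extensibility procedure in the order $\CA'\setminus S',\ a,\ S$, and then compares $\delay_a(\xx^S)$ against $\delay_a(\xx^T)$: since $\CA'\setminus T'\subset\CA'\setminus S'$, agent $a$ is inserted into a more congested partial allocation when building $\xx^S$, so $\delay_a(\xx^S)\ge\delay_a(\xx^T)$, giving the desired inequality. (The paper's printed intermediate inequality has a sign typo, but the argument it is executing is this one.) You instead isolate the combinatorial core: you factor $f_a$ as a modular function plus $\lambda^a(\CA')\,h$ where $h(S)=\max_{\xx\in P}\delay_S(\xx)$ over the optimal face $P$, and prove $h$ submodular directly via the union-intersection form by exhibiting one allocation $\xx^*$ that is simultaneously an $h$-maximizer for both $A\cup B$ and $A\cap B$. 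This is the same exchange/extensibility mechanism packaged differently: where the paper constructs two allocations $\xx^S,\xx^T$ and compares one coordinate, you construct a single $\xx^*$ along the chain $\CA'\setminus(A\cup B)\subseteq\CA'\setminus(A\cap B)\subseteq T_d\subseteq\cdots\subseteq T_1$ and invoke inclusion-exclusion. Your route is arguably cleaner: the modular/submodular decomposition makes transparent exactly which part of $f_a$ carries the combinatorial content, and the ``single witness $\xx^*$'' argument avoids any side-by-side comparison of two separately constructed allocations. One point worth making explicit (you flag it yourself and your resolution is right): each extensibility step preserves the per-agent delays of the set it extends from, hence preserves \bestfor-ness of every earlier set in the chain, and membership of $\xx^*$ in $P$ then follows from Lemma \ref{lem:lp-prop}, part 1, applied to the tail $T_d\subseteq\cdots\subseteq T_1$ of the chain.
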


Now the question remains, how does one find an $a$ such that the minimum value is $0$, {\em i.e.,} $g(a)=0$. 
%
We will do binary search for the same. 
In the next lemma we derive a number of properties of $g$ that facilitates binary search, under {\em sufficient demand} assumption
(Definition \ref{def:ED}), while crucially using Lemmas \ref{lem:lp-prop} and \ref{lem:price}. 

\begin{restatable}{lemma}{bin}\label{lem:bin}
Function $g$ satisfies the following: $(i)$ $g(0) \geq 0$.
$(ii)$ $f_a(S)$ is continuous and monotonically decreasing in $a$, $\forall S\subseteq \CA'$, therefore $g$ is continuous and monotonically decreasing.
$(iii)$ $\exists a_h\ge 0$ a rational number of polynomial size such that $g(a_h)\le 0$, and $g$ has a zero of polynomial-size.
$(iv)$ Given a set $S\subseteq \CA'$, if $f_a(S)>0$ and $f_{a'}(S)<0$ for $a'>a>0$, then $\exists a^*\ge 0$ such
that $f_{a^*}(S)=0$ and such an $a^*$ can be computed by solving a feasibility linear
program of polynomial-size. 
\end{restatable}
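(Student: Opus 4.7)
The plan is to prove the four parts in order, leaning heavily on Lemmas \ref{lem:lp-prop} and \ref{lem:price}. Part (i) is essentially unpacking the inductive hypothesis: at $a=0$ we have $(\ll^0,\pp^0)=(\llcur,\pcur)$, and the hypothesis that $\SC$ holds for the segment $\CA'$ at $(\llcur,\pcur)$ says exactly that $f_0(S)\ge 0$ for every $S\subseteq \CA'$, so $g(0)\ge 0$.

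For Part (ii), I would establish monotonicity of $f_a(S)$ in $a$ for each fixed $S\subseteq\CA'$ via a two-step comparison between $a<a'$. First, at the fixed price vector $\pp^a$, increasing $\lambda_i$ for $i\in\CA'$ forces primal-optimal solutions to weakly decrease $\delay_S$ by Lemma \ref{lem:lp-prop}(1), which by Lemma \ref{lem:lp-prop}(3) translates into a weak increase of $\pay_S$. Second, swapping $\pp^a$ for $\pp^{a'}$, Lemma \ref{lem:price} guarantees that $\pp^{a'}$ dominates $\pp^a$ on goods with nonzero allocation to $\CA'$ and coincides with $\pp^a$ elsewhere, so $\pay_S$ can only rise further. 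Continuity follows from standard LP sensitivity applied to the parameterized pair $LP(\ll^a),DLP(\ll^a)$: optimal values and extreme optimal solutions depend continuously and piecewise-rationally on $a$, and $\pay_S$ can be expressed via complementary slackness in terms of these. Minimizing over $S\subseteq \CA'$ preserves both properties, yielding (ii).

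For Part (iii), I would argue that $\pay_{\CA'}(\xx,\pp^a)\to\infty$ as $a\to\infty$, forcing $f_a(\CA')$ eventually below zero. Complementary slackness for $LP(\ll^a),DLP(\ll^a)$ gives the identity $\pay_i(\xx,\pp^a)=\sum_k r_{ik}\alpha_{ik}-\lambda_i\delay_i(\xx)$; summing over $\CA'$ and substituting via strong duality rewrites $\pay_{\CA'}(\xx,\pp^a)$ as $\sum_{i\notin\CA'}\lambda_i\delay_i(\xx)-\sum_{i\notin\CA',k}r_{ik}\alpha_{ik}+\sum_j p_j^a$. By Lemma \ref{lem:price} the $\alpha_{ik}$ for $i\notin\CA'$ are frozen and so are the prices of goods used by such agents, so the first two summands stay bounded; meanwhile, sufficient demand ensures that some good demanded by an agent in $\CA'$ becomes supply-tight with an unboundedly growing valid price (otherwise, as $\lambda_i$ grows, the dual constraint $\lambda_i d_{ij}\ge\sum_k a_{ijk}\alpha_{ik}-p_j$ cannot be tight on the over-demanded good for a bounded dual solution). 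Hence $\sum_j p_j^a\to\infty$ and some $a_h$ achieves $g(a_h)\le 0$. Because $g$ is piecewise linear with pieces indexed by the finitely many stable optimal-basis regions of $LP(\ll^a),DLP(\ll^a)$, all breakpoints and zeros are rationals of polynomial bit-complexity by classical LP bounds, giving both the polynomial-size $a_h$ and a polynomial-size zero.

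For Part (iv), given $S$ with $f_a(S)>0$ and $f_{a'}(S)<0$, I would localize a single linear piece of $f_\cdot(S)$ between $a$ and $a'$ by a few bisection refinements (each using submodular minimization of $f_{a^*}$ to check the sign of $g$), then encode ``the triple $(\xx,\aal,\pp)$ is a valid-price primal/dual optimal pair at $\ll^{a^*}$ with $f_{a^*}(S)=0$'' as a polynomial-size feasibility LP in the variables $(a^*,\xx,\aal,\pp)$: primal and dual feasibility, the active-basis equalities (so that complementary slackness becomes linear on the chosen piece), the valid-price fix/monotonicity from Lemma \ref{lem:price}, and the single equation $\pay_S(\xx,\pp)=m(S)$. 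Solving this LP yields the desired rational $a^*$. The main obstacle is Part (iii): one must verify that sufficient demand really forces $p_j^a\to\infty$ on goods allocated to $\CA'$ (rather than only on frozen goods, which would contribute nothing to the identity above) and then establish the piecewise-linear structure of $g$ tightly enough to invoke LP bit-complexity bounds; Parts (i), (ii), and (iv) then reduce to careful bookkeeping on top of the lemmas already in hand.
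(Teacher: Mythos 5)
Your Parts (i) and (ii) match the paper's argument (Lemmas \ref{lem:f0} and \ref{lem:mona}); the two-stage decomposition you propose in (ii) collapses to the paper's one-step argument once you notice that adding $a\ones_{\CA'}$ leaves the $\lambda$-partition and hence, by Lemma \ref{lem:lp-prop}(1), the relevant primal-optimal set unchanged, so the first stage of your comparison is vacuous and only the price-monotonicity from Lemma \ref{lem:price} does work. Part (iv) is also on the right track, but your bisection-to-a-linear-piece and ``active-basis equalities'' machinery is unnecessary: because the same allocation $\hxx$ (the one maximizing $\delay_S$ among optimal solutions of $LP(\ll^a)$) is optimal for \emph{every} $a$, the paper just writes down one feasibility LP in the variables $(a,\ddelta,\aal)$ with $\hxx$ fixed, imposing complementarity with $\hxx$, validity of $\hpp+\ddelta$, and $\pay_S(\hxx,\hpp+\ddelta)=m(S)$; no piece-localization or basis tracking is needed, and rationality/polynomial-size of $a^*$ come for free from that LP.

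The genuine gap is in Part (iii), and you have correctly flagged it yourself. Your strong-duality identity $\pay_{\CA'}(\hxx,\pp^a)=\sum_{i\notin\CA'}\lambda_i\delay_i(\hxx)-\sum_{i\notin\CA',k}r_{ik}\alpha_{ik}+\sum_j p^a_j$ is correct, but it makes the argument circular: since the prices of all goods touched by agents outside $\CA'$ are frozen by Lemma \ref{lem:price}, $\sum_j p^a_j\to\infty$ is \emph{equivalent} to $\pay_{\CA'}\to\infty$, which is precisely what you need to establish. Your parenthetical justification (``otherwise the tight dual constraint $\lambda_i d_{ij}\ge\sum_k a_{ijk}\alpha_{ik}-p_j$ cannot hold'') does not close this, because the $\alpha_{ik}$ for $i\in\CA'$ are free to grow with $a$ — and indeed they do, proportionally to $\lambda_i$ — so tightness can be maintained with bounded prices in principle. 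The paper's Lemma \ref{lem:rational} sidesteps the issue: it does not try to prove unboundedness of prices directly; instead it sets up an auxiliary maximization LP whose objective dominates $\pay_{\CA'}(\hxx,\cdot)$ and shows that this LP cannot have a finite optimum, because any candidate optimum $(a^*,\ddelta^*,\aal^*)$ is beaten by a strictly larger feasible point obtained from the local strict-increase guarantee in the second part of Lemma \ref{lem:mona} (which itself uses sufficient demand together with a Farkas argument). Unboundedness of that LP then forces some $a_h$ with $\pay_{\CA'}(\hxx,\pp^{a_h})\ge m(\CA')$, i.e.\ $f_{a_h}(\CA')\le 0$, and the polynomial bit-size of $a_h$ and of the zero of $g$ are extracted from the same feasibility-LP construction as in Part (iv), not from a separate piecewise-linearity argument about $g$.
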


The first part follows essentially from the fact that  $\CA'$ satisfies $\SC$ condition w.r.t. $(\llcur,\pcur)$ equivalently $(\ll^0,\pp^0)$. 
For the second part, we show that for any $S\subseteq \CA'$, function $f_a(S)$ is monotonically decreasing and continuous in $a$. 
Since $\min$ of continuous and decreasing functions is also continuous and decreasing, we get the same property for $g$. 
It turns out that whatever be the current value of $g(a)$, there is another $a'>a$ where $g(a')$ is strictly smaller (using {\em sufficient demand} assumption). This strict decrease property ensures existence of $a$ where $g(a)$ is zero. Setting $a_h$ to higher than any such $a$ would give
$g(a_h)\le 0$ since $f_a(S)$ for all $S$ are monotonically decreasing in $a$,
thereby we get the third part. Finally for the fourth part, existence of $a^*$ follows from monotonicity and continuity of $f_a(S)$ in $a$, and 
using the fact that complementary slackness ensures optimality we construct a
feasibility linear program to compute $a^*$, given $S\subseteq \CA'$, such that
$f_{a^*}(S)=0$. 



We initialize our binary search with a lower pivot $a_0=0$ and a higher pivot $a_1$ to a value that is guaranteed to be higher than
where some set goes tight (third part of Lemma \ref{lem:bin}).
Finally, since submodular minimization, binary search over a polynomial-sized range, and solving linear program all can be done in 
polynomial-time, we get our main result, Theorem \ref{thm:algorithm}, using Lemmas \ref{lem:feas}, \ref{lem:segCond} and \ref{lem:bin}, and Theorem
\ref{thm:lpl2eq}.

\newcommand{\p}{\mbox{\boldmath $p$}}

\section{Examples}\label{sec:examples}
In this section we show several interesting examples that illustrate important properties of our market and of equilibria.  
In addition we demonstrate a run of our algorithm on a routing example. 

\paragraph{Non-existence of equilibria.} 
Consider the networks (typically used to show Braess' paradox) in Figure \ref{fig:non-exist}, where the label on each edge specifies its (capacity, delay cost). There are two agents, each with a requirement of 1 from $s$ to $t$. Their $m_i$s are $100$ and $1$ respectively.
The network on the left has enough capacity to route two units of flow, but does not satisfy {\em strong feasibility} condition and does not have an equilibrium. This demonstrates importance of {\em strong feasibility} condition, without which even a simple market may not have an equilibrium. 

The network in the middle does satisfy {\em strong feasibility} but not {\em extensibility} and has an equilibrium. The network on the right satisfies {\em extensibility} but not {\em strong feasibility} and has an equilibrium. This demonstrates that conditions of {\em strong feasibility} and {\em extensibility} are incomparable.

\begin{figure}[!h]
\vskip -2cm
\includegraphics[width=\textwidth]{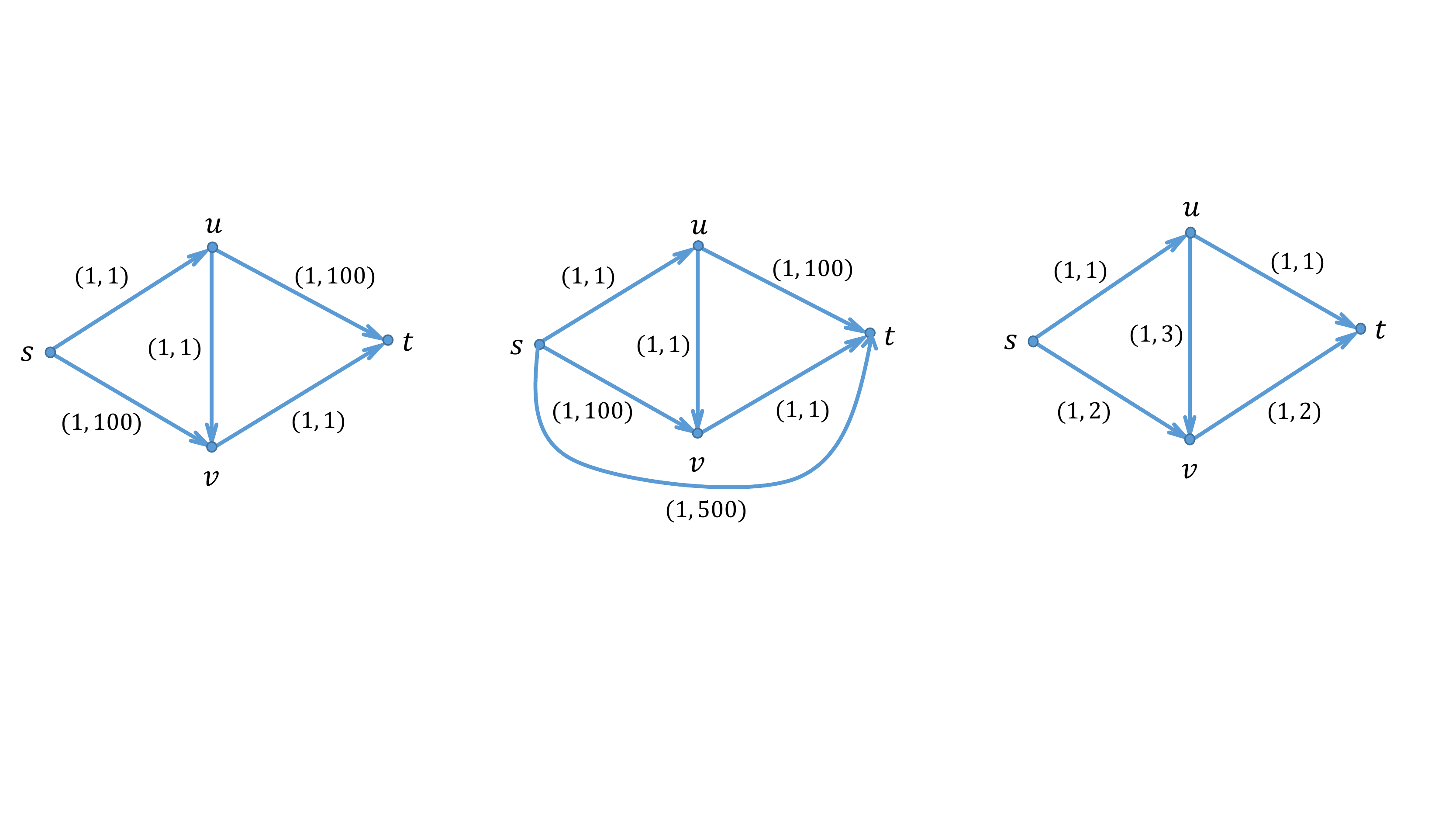}
\vskip -3cm
\caption{Non-existence of equilibria: The network on the left does not satisfy {\em strong feasibility} and has no equilibria. Edge labels specify (capacity, delay\ cost), there are two agents, each with a requirement of 1 from $s$ to $t$. The $m_i$s are $100, 1$. The same network with an additional edge with huge cost in the middle, does satisfy {\em strong feasibility} but not {\em extensibility} and has an equilibrium. The network on the right with different edge labels satisfies the {\em extensibility} but not {\em strong feasibility}. It has an equilibrium and our algorithm will find one.}\label{fig:non-exist} 
\end{figure}

\paragraph{Non-convexity of equilibria.}
Consider a market in the scheduling setting of Section \ref{sec:cloud}, with 6 agents, each with a requirement of 1. 
Their $m_i$s are $30, 17, 9, 4, 3, 1$. Table \ref{fig:hole1} depicts some of the equilibrium prices for this instance. 
A bigger  example with 9 agents is in  Table \ref{fig:hole5}. 
These examples show that the equilibrium set is not convex, but forms a connected set. 
Since these are in high dimension, it is not easy to determine the exact shape of the entire equilibrium set, 
but one can see that it is quite complicated. 

\begin{table}[!h]
  \caption{An example in the scheduling setting of Section \ref{sec:cloud}  where the set of equilibrium prices is non-convex. There are 6 agents, each with a requirement of 1.
  	Their $m_i$s are $30, 17, 9, 4, 3$ and $1$. We depict only a subset of all equilibria here. In particular, we depict 6 equilibrium prices, $\p_1, \p_2, \cdots, \p_6$. 
  	 All prices either along solid lines connecting any two of these points, or in the shaded region are equilibria. 
  	 However,  if any two of these prices are not connected by a solid line, then none of the points on the line joining them is an equilibrium. For example, none of the prices on the line joining $\p_1$ and $\p_6$,  $\p_2$ and $\p_5$, or $\p_3$ and $\p_4$ is an equilibrium. There are more equilibrium points not depicted here. As far as we can tell, the shape of the equilibrium set is something akin to a cup, with  empty space inside, but forming a single connected region. }
\vskip -2cm
\begin{minipage}{\textwidth}
\hspace{-1cm}
  \begin{minipage}{0.6\textwidth}
	\includegraphics[width=1.6\textwidth]{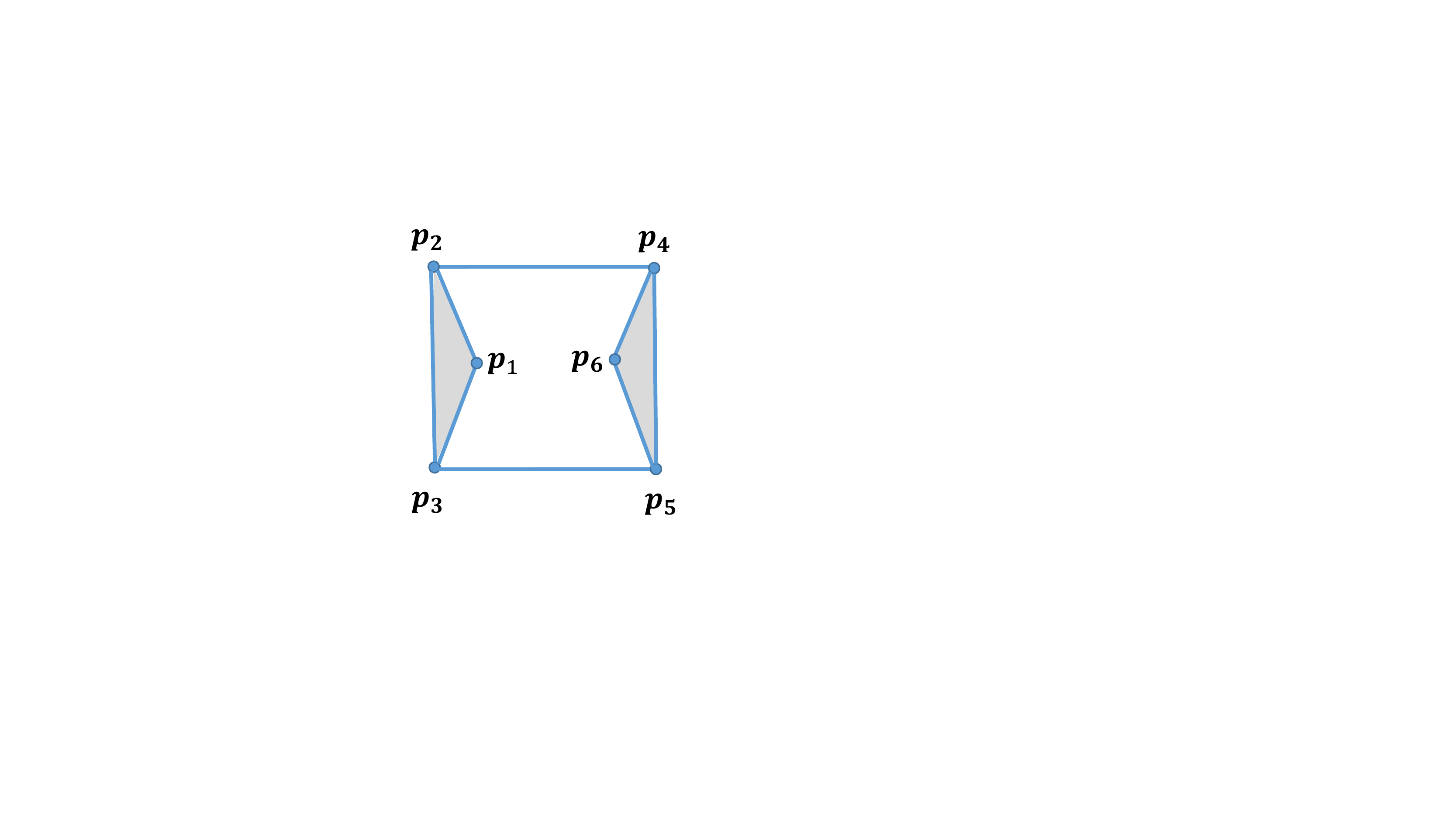}
  \end{minipage}
  \begin{minipage}{0.4\textwidth}
	\begin{tabular}{|cl|}
	\hline
	$\p_1:$&     \hspace{-0.3cm} $(30,$  $17,$  $9,$  $5,$  $2,$  $1)$ \M\\
	$\p_2:$&     \hspace{-0.3cm} $(30,$  $17,$  $9,$  $4\nfrac13,$  $2\nfrac23,$  $1)$ \M\\
	$\p_3:$&     \hspace{-0.3cm} $(34,$  $13,$  $9,$  $5,$  $2,$  $1)$ \M\\
	$\p_4:$&     \hspace{-0.3cm} $(34,$  $13,$  $9,$  $5\nfrac13,$  $2\nfrac23,$  $0)$ \M\\
	$\p_5:$&     \hspace{-0.3cm} $(35,$  $13,$  $8,$  $4\nfrac13,$  $2\nfrac23,$  $1)$ \M\\
	$\p_6:$&     \hspace{-0.3cm} $(35,$  $13,$  $8,$  $5\nfrac13,$  $2\nfrac23,$  $0)$ \M\\
	\hline
	\end{tabular}
    \end{minipage}
  \end{minipage}
 \vskip -2cm 
  	\label{fig:hole1}
 \end{table}
 \begin{table}[!h]
 	\caption{An example in the scheduling setting of Section \ref{sec:cloud}  where the set of equilibrium prices is non-convex. 
 		There are 9 agents, each with a requirement of 1.
 		Their $m_i$s are $56, 45, 33, 23, 17, 10, 4, 3$ and $1$. 
 		We depict only a subset of all equilibria here. In particular, we depict 20 equilibrium prices, $\p_1, \p_2, \cdots, \p_{20}$. 
 		All prices either along solid lines connecting any two of these points, or in the shaded region are equilibria. 
 		However,  if any two of these prices are not connected by a solid line, then none of the points on the line joining them is an equilibrium.}
	\vspace{-1.5cm}
 	\begin{minipage}{\textwidth}
 		\begin{minipage}{0.6\textwidth}
 			\hspace{-3.5cm}\includegraphics[width=1.6\textwidth]{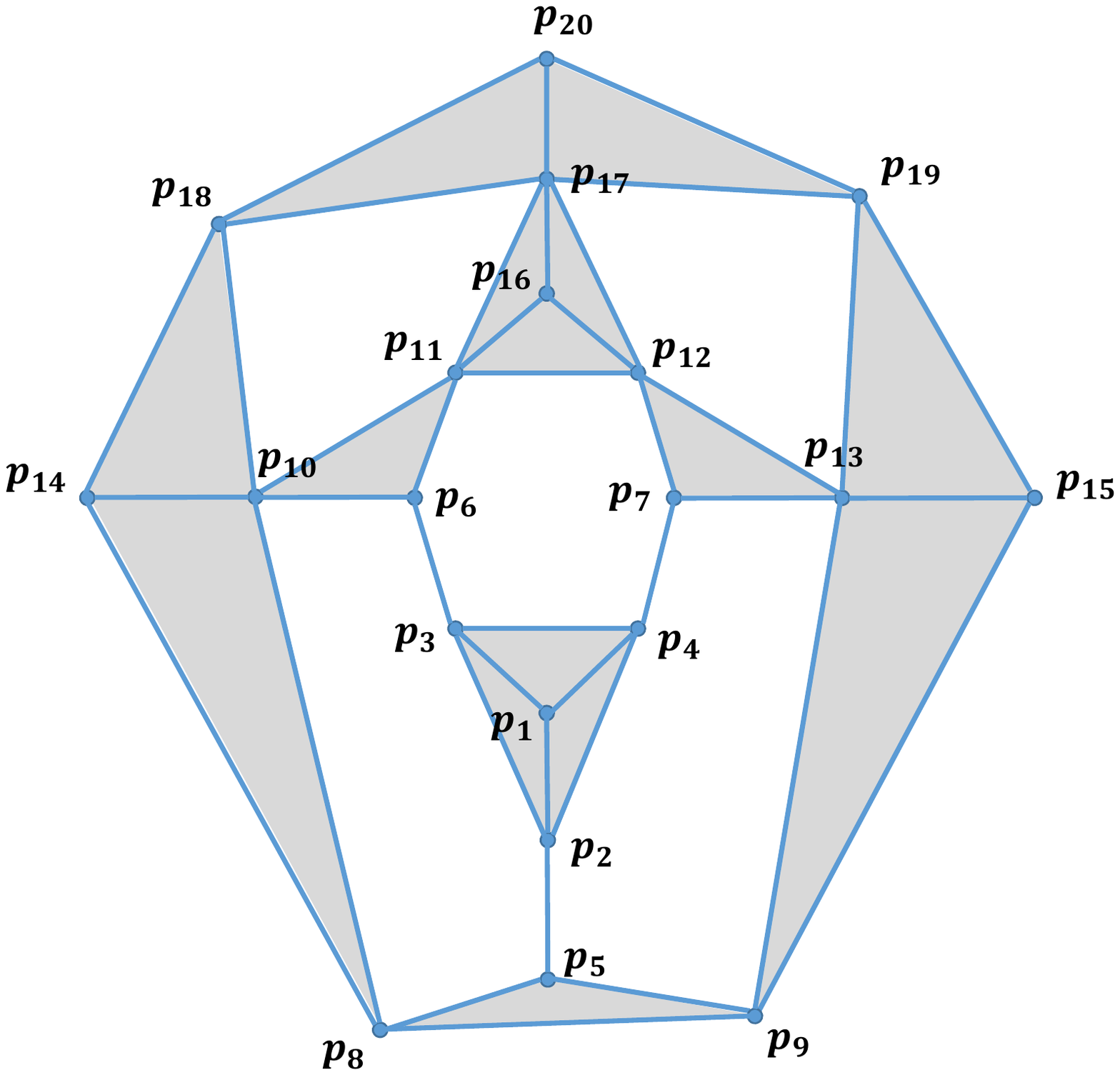}
 		\end{minipage}
 		\hspace{-1cm}
 		\begin{minipage}{0.4\textwidth}
 			{\footnotesize
 				\begin{tabular}{|cl|}
 					\hline
 					$\p_1:$ & \hspace{-0.3cm} $(57,$ $44,$  $33,$  $24,$  $16,$  $10,$  $5,$  $2,$  $1)$ \M\\
 					$\p_2:$&     \hspace{-0.3cm} $(57,$ $44\nfrac23,$  $32\nfrac13,$  $24,$  $16,$  $10,$  $5,$  $2,$  $1)$ \M\\
 					$\p_3:$&     \hspace{-0.3cm} $(57,$ $44,$  $33,$  $24,$  $16\nfrac23,$  $9\nfrac13,$  $5,$  $2,$  $1)$ \M\\
 					$\p_4:$&     \hspace{-0.3cm} $(57,$ $44,$  $33,$  $24,$  $16,$  $10,$  $5,$  $2\nfrac23,$  $\nfrac13)$ \M\\
 					$\p_5:$&     \hspace{-0.3cm} $(57\nfrac13,$  $44\nfrac23,$  $32,$  $24,$  $16,$  $10,$  $5,$  $2,$  $1)$ \M\\
 					$\p_6:$&     \hspace{-0.3cm} $(57,$  $44,$  $33,$  $24\nfrac13,$  $16\nfrac23,$  $9,$  $5,$  $2,$  $1)$ \M\\
 					$\p_7:$&     \hspace{-0.3cm} $(57,$  $44,$  $33,$  $24,$  $16,$  $10,$  $5\nfrac13,$  $2\nfrac23,$  $0)$ \M\\
 					$\p_8:$&   \hspace{-0.3cm} $(57\nfrac13,$  $44\nfrac23,$  $32,$  $24,$  $16\nfrac23,$  $9\nfrac13,$  $5,$  $2,$  $1)$ \M\\
 					$\p_9:$&   \hspace{-0.3cm} $(57\nfrac13,$  $44\nfrac23,$  $32,$  $24,$  $16,$  $10,$  $5,$  $2\nfrac23,$  $\nfrac13)$ \M\\
 					$\p_{10}:$&   \hspace{-0.3cm} $(57,$  $44\nfrac23,$  $32\nfrac13,$  $24\nfrac13,$  $16\nfrac23,$  $9,$  $5,$  $2,$  $1)$ \M\\
 					$\p_{11}:$&   \hspace{-0.3cm} $(57,$  $44,$  $33,$  $24\nfrac13,$  $16\nfrac23,$  $9,$  $5,$  $2\nfrac23,$  $\nfrac13)$ \M\\
 					$\p_{12}:$&   \hspace{-0.3cm} $(57,$  $44,$  $33,$  $24,$  $16\nfrac23,$  $9\nfrac13,$  $5\nfrac13,$  $2\nfrac23,$  $0)$ \M\\
 					$\p_{13}:$&   \hspace{-0.3cm} $(57,$  $44\nfrac23,$  $32\nfrac13,$  $24,$  $16,$  $10,$  $5\nfrac13,$  $2\nfrac23,$  $0)$ \M\\
 					$\p_{14}:$&  \hspace{-0.3cm} $(57\nfrac13,$  $44\nfrac23,$  $32,$  $24\nfrac13,$  $16\nfrac23,$  $9,$  $5,$  $2,$  $1)$ \M\\
 					$\p_{15}:$&  \hspace{-0.3cm} $(57\nfrac13,$  $44\nfrac23,$  $32,$  $24,$  $16,$  $10,$  $5\nfrac13,$  $2\nfrac23,$  $0)$ \M\\
 					$\p_{16}:$&  \hspace{-0.3cm} $(57,$  $44,$  $33,$  $24\nfrac13,$  $16\nfrac23,$  $9,$  $5\nfrac13,$  $2\nfrac23,$  $0)$ \M\\
 					$\p_{17}:$&\hspace{-0.3cm}  $(57,$  $44\nfrac23,$  $32\nfrac13,$  $24\nfrac13,$  $16\nfrac23,$  $9,$  $5\nfrac13,$  $2\nfrac23,$  $0)$ \M\\
 					$\p_{18}:$&\hspace{-0.3cm}  $(57\nfrac13,$  $44\nfrac23,$  $32,$  $24\nfrac13,$  $16\nfrac23,$  $9,$  $5,$  $2\nfrac23,$  $\nfrac13)$ \M\\
 					$\p_{19}:$&\hspace{-0.3cm}  $(57\nfrac13,$  $44\nfrac23,$  $32,$  $24,$  $16\nfrac23,$  $9\nfrac13,$  $5\nfrac13,$  $2\nfrac23,$  $0)$ \M\\
 					$\p_{20}:$ &\hspace{-0.3cm} $(57\nfrac13,$  $44\nfrac23,$  $32,$  $24\nfrac13,$  $16\nfrac23,$  $9,$  $5\nfrac13,$  $2\nfrac23,$  $0)$ \M\\
 					\hline
 				\end{tabular}
 			}
 		\end{minipage}
 	\end{minipage}
 	\label{fig:hole5}
 \end{table}

\paragraph{A run of the algorithm.} 
We describe the run of our algorithms on simple examples here. 
The run of Algorithm \ref{alg.scheduling} on the example in Table \ref{fig:hole1} is as follows. 
Each row below depicts one iteration, where we find a new segment. 
We first give the set of agents in this new segment, then the corresponding $\lambda$, and then 
the prices of the slots determined in this iteration. 
The last column shows the sets which give the second and third lowest $\lambda$s in that iteration, and hence were not selected. 

\begin{center}
\begin{tabular}{llll}
$S^1 = \{6\}$, & $\lambda_{S^1} = 1$, & $p_6 = 1$ & ($\lambda_{\{5,6\}} = 1\nfrac13, \lambda_{\{4, 5, 6\}} = 1\nfrac13, \dots$)\\
$S^2 = \{4, 5\}$, & $\lambda_{S^2} = 1\nfrac23$, & $p_5 = 2\nfrac23,\ p_4 = 4\nfrac13$ & ($\lambda_{\{5\}} = 2, \lambda_{\{3, 4, 5\}} = 2\nfrac16, \dots$)\\
$S^3 = \{3\}$, & $\lambda_{S^3} = 4\nfrac23$, & $p_3 = 9$ & ($\lambda_{\{2,3\}} = 5\nfrac79, \lambda_{\{1, 2, 3\}} = 7\nfrac16$)\\
$S^4 = \{2\}$, & $\lambda_{S^4} = 8$, & $p_2 = 17$ & ($\lambda_{\{1,2\}} = 9\nfrac23$)\\
$S^5 = \{1\}$, & $\lambda_{S^5} = 13$, & $p_1 = 30$ & \\
\end{tabular}
\end{center}

The equilibrium price found in this run is the point $\p_2$ in Table \ref{fig:hole1}. 
This price curve is shown in Figure \ref{fig:price-plot}. 
The allocation obtained by solving the feasibility LP \eqref{eq:feasLP} is as follows: 
$x_{11} = 1, x_{22} = 1, x_{33} = 1, x_{44} = 4/5, x_{45} = 1/5, x_{54} = 1/5, x_{55} = 4/5, x_{66} = 1$. 

\begin{figure}[!h]
	\vskip -4.5cm
	\includegraphics[width=0.7\textwidth]{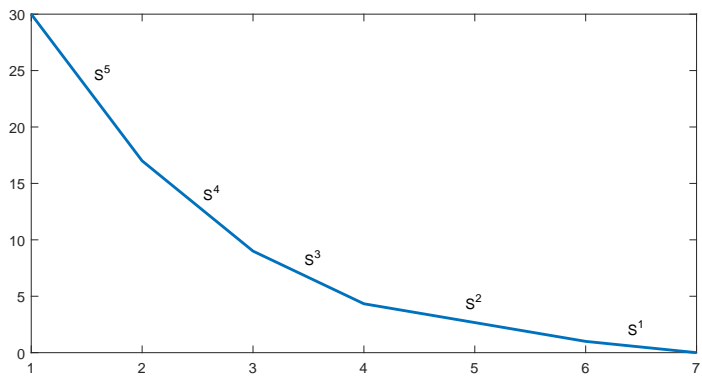}
	\vskip -4.5cm
	\caption{Piecewise linear convex decreasing curve of equilibrium prices obtained by the algorithm for the example in Table \ref{fig:hole1}.}\label{fig:price-plot}
\end{figure}


%

We next describe the run of the algorithm on a network flow example, described  in Figure \ref{fig:sp}. 
 The figure shows the network structure and the edge labels specify (capacity, delay\ cost). 
 There are five agents with requirements $10, 11, 12, 13, 14$ from $s$ to $t$ respectively. 
 Their $m_i$s are $12, 10, 4, 2, 2$. 
 This network is not series-parallel, yet it satisfies the \emph{extensibility} condition, so our algorithm finds an equilibrium. 

\begin{figure}[!h]
\vskip -3cm
\includegraphics[width=\textwidth]{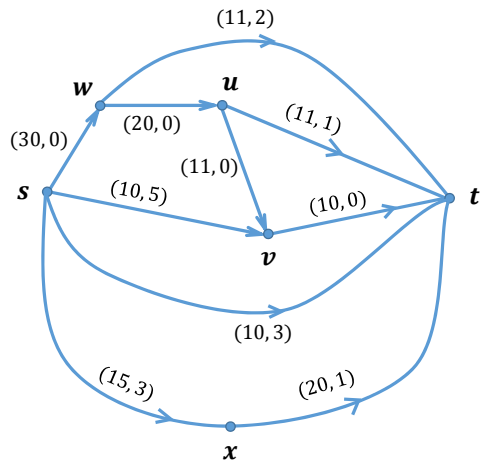}
\vskip -3cm
\caption{A network flow example where there are 5 agents with requirements $10, 11, 12, 13, 14$ from $s$ to $t$ respectively. Their $m_i$'s are $12, 10, 4, 2, 2$. The network satisfies the \emph{extensibility} condition, so our algorithm finds an equilibrium.}
\label{fig:sp}
\end{figure}

The run of Algorithm \ref{alg} on this example (in Figure \ref{fig:sp}) is as follows. 
Once again, each row below depicts one iteration, where we find a new segment. 
We first give the set of agents in the new segment, then the corresponding $\lambda$, and then 
the prices of the edges that are fixed in this iteration. 
The last column shows the second and third lowest $\lambda$s in that iteration. 

\begin{center}
\begin{tabular}{llll}
$S^1 = \{3,4,5\}$, & $\lambda_{S^1} = \nfrac{8}{37}$, & $p_{sx} = p_{xt} = p_{wt} = 0, p_{st} = \nfrac{8}{37}, p_{sw} = \nfrac{16}{37}$ & ($\lambda_{\{4,5\}} = \nfrac{4}{13},$ \\ & & & $\lambda_{\{2, 3, 4, 5\}} = \nfrac{9}{35}$)\\
$S^2 = \{1, 2\}$, & $\lambda_{S^2} = \nfrac{478}{1147}$, & $p_{wu} = \nfrac{478}{1147},\ p_{vt} = \nfrac{478}{1147}, p_{sv} = p_{uv} = p_{ut} = 0$ & ($\lambda_{\{2\}} = \nfrac{194}{407}$)\\
\end{tabular}
\end{center}

The allocation from the feasibility LP \eqref{eq:feasLP}:
\begin{itemize}
	\item Agent $1$ sends $\nfrac{2012}{239}$ units of flow on path $s-w-u-v-t$ and $\nfrac{378}{239}$ units of flow on path $s-w-u-t$. 
	\item Agent $2$ sends $\nfrac{2251}{239}$ units of flow on path $s-w-u-t$ and $\nfrac{378}{239}$ units of flow on path $s-w-u-v-t$. 
	\item Agent $3$ sends $8$ units of flow on path $s-w-t$, $\nfrac52$ units of flow on path $s-t$ and $\nfrac32$ units of flow on path $s-x-t$. 
	\item Agent $4$ sends $2$ units of flow on path $s-w-t$, $\nfrac{21}{4}$ units of flow on path $s-t$ and $\nfrac{23}{4}$ units of flow on path $s-x-t$. 
	\item Agent $5$ sends $2$ units of flow on path $s-w-t$, $\nfrac{21}{4}$ units of flow on path $s-t$ and $\nfrac{27}{4}$ units of flow on path $s-x-t$. 
\end{itemize}


%
%
%
%
%

\bibliographystyle{plainnat}
\bibliography{kelly}

\appendix

\section{Related work on computation and applications of market equilibrium}\label{sec:marketapps}
\paragraph{Computation and Complexity:}
The computational complexity of market equilibrium has been extensively studied for the tradition models in the past decade and a half. 
This investigation has involved many algorithmic techniques, such as  primal-dual and flow based methods \citep{DPSV,DV,orlin2010improved,vegh2012concave,vegh2012strongly}, auction algorithms \citep{GK}, 
ellipsoid \citep{JainAD} and other convex programming based techniques \cite{CMV}, cell-decomposition \citep{DPS,DK08,VY},
distributed price update rules \cite{cole2008fast,wu2007proportional,cheung2013tatonnement}, 
and complementary pivoting algorithms \citep{GMSV,GMV}, to name some of the most prominent. 
The algorithms have been complemented by hardness results, either for  PPAD \cite{CSVY,Chen.plc} or for FIXP \citep{EY07, GargFIXP}, pretty much closing the gap between the two. 
Most of these papers focus on  traditional utility functions used in the economics literature. 
A notable exception that considers \emph{combinatorial} utility functions is 
\cite{jain2010eisenberg}, that study 
a market where agents want to send flow in a network, motivated by rate control algorithms governing the traffic in the Internet. 


Beyond being an important component in the complexity theory of total functions \cite{megiddo1991total}, 
the computation of market equilibria has been studied by economists for much longer \cite{BSAD,scarf.book,Smale}.  
The classic case for the use of equilibrium \emph{computation} is counter-factual evaluation of policy or design changes \cite{Shoven}, 
based on the assumption that markets left to themselves operate at an equilibrium. 

\paragraph{Fair allocation:}
Recently, market equilibrium outcomes have been used for fair allocation. 
Market equilibrium conditions are often considered inherently fair, 
therefore equilibrium outcomes have been used to allocate resources by a central planner seeking a fair allocation 
even when there is no actual market or monetary transfers. 
E.g., the proportional fair allocation, which is well known to be equivalent to the equilibrium allocation in a Fisher market \citep{KV}, is widely used in the design of computer networks. 
Exchange of bandwidth in a bittorrent network is modeled as a process that converges to a market equilibrium by \citet{wu2007proportional}. 
\citet{BudishJPE11} proposes ``competitive outcome from equal incomes" (CEEI) as a way to allocate courses to students: 
the allocation is an equilibrium in a market for courses in which the students participate with equal budgets (with random perturbations to break ties). {This scheme has been successfully used at the Wharton business school \cite{Budish2014changing}. }
\citet{ColeGG13} show that a suitable modification of the Fisher market equilibrium allocation can be used as a solution to a problem of fair resource allocation, without money. The mechanism is truthful, and satisfies an approximate per-agent welfare guarantee. Truthful mechanisms have also been designed for scheduling, where it is the auctioneer who has jobs to be scheduled and the agents are the one providing the required resources e.g., see \cite{NisanRonen,LaviSwamy}. This is in contrast to our setting where the agents have scheduling requirements. 

\paragraph{Market based mechanisms:} 
 There is also a long history of ``market based mechanisms'', where a mechanism (with monetary transfers) implements an equilibrium outcome. 
 The New York Stock Exchange uses such a mechanism to determine the opening prices, and copper and gold prices in London are fixed using a similar procedure \cite{Rustichini1994convergence}.  
There are different ways to do this:  use a sample (either historic or random)  or a probabilistic model of the population to compute the equilibrium price, and offer these prices to new agents. This is preferable to asking the bidders to report their preferences,
 computing the equilibrium on reported preferences and offering the equilibrium prices back. 
 The latter leads to obvious strategic issues; 
 \citet{Hurwicz1972}
 shows that strategic behavior by agents participating in such a mechanism can lead to inefficiencies.
 \citet{BabaioffLNL14} show price of anarchy bounds on such mechanisms. In any case, such mechanisms are ``incentive compatible in the large'', meaning that as the market size grows and each agent becomes insignificant enough to affect prices on his own, 
 his best strategy is to accept the equilibrium outcome. 
Nonetheless such mechanisms have been proposed and used in practice, e.g., for selling TV ads \cite{nisan2009google}. 

\paragraph{Budget constraints:}
Budget constraints in auctions has gained popularity in the last decade due to ad auctions 
\cite{MSVV.design,dobzinski2012multi,bhattacharya2010incentive,fiat2011single}, but  has been studied for quite some time \cite{laffont1996optimal,che2000optimal}. 
There has also been a recent line of work considering budget constraints in a procurement setting \cite{singer2010budget,badanidiyuru2012learning}.

\section{Relation to Myerson's ironing}
\label{app:myerson}

\begin{figure}
	\label{fig:ironing}
	\centering
	\includegraphics[width=\linewidth]{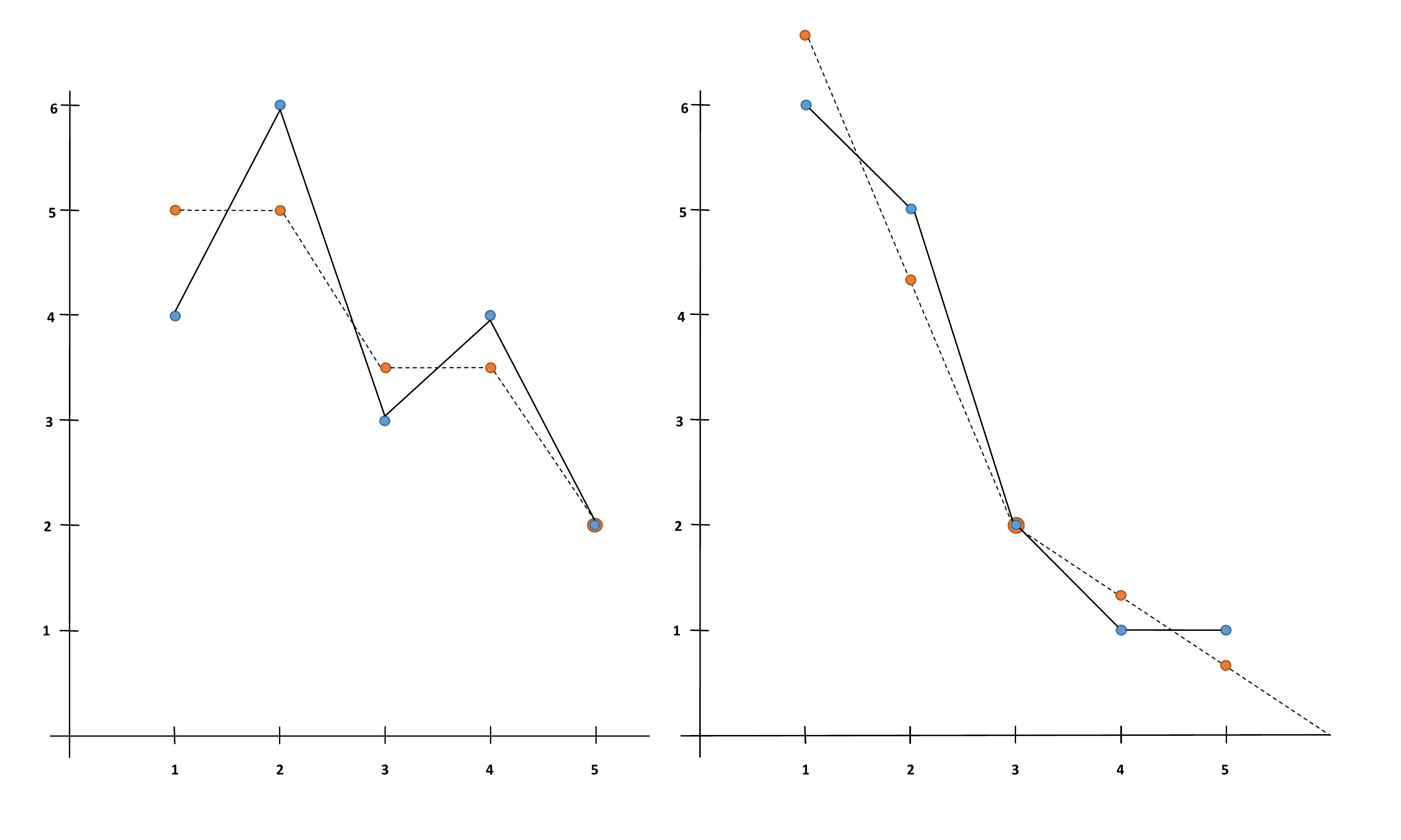}
	\caption{On the left is an example of Myerson's ironing. The solid curve with 
		the blue dots is the given curve, which is non-monotone. The dashed curve with the orange dots is the ironed curve, which is monotone. On the right is an example of our problem. The solid curve is the money function, which is monotone but not convex. The dashed one is the price function, which is convex. Both dashed curves are such that their ``area under the curve" is higher than that for the solid curves, and satisfy a minimality condition among all such curves.}
		\label{fig:MyersonandSchedulingMEironing}
		\end{figure}

		For a special case of the scheduling setting, when $r_i =1, \forall~i\in A$, we show that equilibrium conditions are equivalent to a set of conditions that are reminiscent of the \emph{ironing} procedure used in the characterization of optimal auctions by \citet{Myerson}.
		It is in fact ``one higher derivative" analog of Myerson's ironing. 
		Let's first restate Myerson's ironing procedure for the case of a uniform distribution over a discrete support.
		Suppose that we plot on the $x$-axis the quantiles, 
		in the decreasing order of value, and on the $y$-axis the corresponding virtual values. 
		This is possibly a non-monotone function, and Myerson's ironing asks for an ironed function that is \emph{monotone non-increasing},
		and is such that the area under the curve (starting at 0) of the ironed function is always higher 
		than that for the given function. Further, the ironed function given by this procedure is the {\em minimal} among all such functions. This means that wherever 
		the area under the curve differs for the two functions, the ironed function is \emph{constant}. (See \prettyref{fig:MyersonandSchedulingMEironing}.)
		
		In the special case of our model with a single good and when requirements are all one, the equilibrium price of the good as a function of time is obtained as an ironed analog of the {\em money function}: the function $i\mapsto m_i$, where we assume the $m_i$s are sorted in the decreasing order. 
		This money function is monotone non-increasing by definition but it need not be a convex function. 
		The price as a function of time must be a {\em monotone non-increasing  and convex} function. 
		The area under the curve of the price function must always be higher than that of the money function;
		further, wherever the two areas are different, the price function must be \emph{linear}. One can see that the conditions are the same as that of Myerson's ironing, except each condition is replaced by a higher derivative analog. 
		Unlike Myerson's, the solution to our problem is no longer unique and the solution set may be non-convex!

\section{Existence of Equilibrium under Strong Feasibility}\label{sec:existenceproof}
In this section we show existence of equilibrium for market instances satisfying {\em strong feasibility} (Definition
\ref{def:strongfeasibility} in Section \ref{sec:model}).
Given such an instance $\CM$ with set $A$ of $n$ agents and set $G$ of $m$ goods, let us create another
instance $\CM'$ by adding an extra good $s$ with ``large quantity'' and very high delay cost. 
Recall that the number of goods and agents in market $\CM$ be $m$ and $n$ respectively. 

Set of agents and goods in market $\CM'$ are respectively $A$ and $G'=G\cup \{s\}$. 
After normalizing to get quantity $1$ for good $s$, we set coefficient of variable $x_{is}$ in all the constraints of \ref{eq:CC} to $a_s=(n+1)r_{max}$ for each agent, and set the delay cost for good $s$ and for every agent to $d_s=m^{(m+1)} d_{max} a_s \left(\frac{a_{max}}{a_{min}}\right)^m$,
where $a_{max}=\max_{i,j,k} |a_{ijk}|, a_{min}=\min\{ 1, \min_{i,j,k} |a_{ijk}|\},$ and $r_{max}=\max_{i,k} r_{ik}$.  
Thus, given prices $(p_1,\dots,p_m, p_s)$ of goods in $\CM'$, the optimal bundle of agent $i$ at these prices can be found by solving the following linear program. 

\begin{equation}\label{eq:opti}
\begin{array}{rl}
OPT_i(\pp) = \argmin: & \sum_j d_{ij} x_{ij} + d_s x_{is}\\
s.t. & \sum_j a_{ijk} x_{ij} +a_s x_{is}\ge r_{ik},\ \forall k \in C\\
& \sum_j p_j x_{ij}+p_s x_{is} \le m_i \\
& x_{ij}\ge 0,\ \forall j\in G'
\end{array}
\end{equation}

The next lemma follows from the construction of market $\CM'$.

\begin{lemma}\label{lem:sf-cmp}
If $\CM$ satisfies strong feasibility then so does $\CM'$.
\end{lemma}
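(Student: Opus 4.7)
The plan is to show directly that $\CM'$ satisfies strong feasibility, and in fact the hypothesis on $\CM$ will not be used---the large coefficient $a_s=(n+1)r_{max}$ alone does all the work. I would take any $S\subset A$ and any minimally feasible, supply-respecting $(\x'_i)_{i\in S}$ for $\eqref{eq:CC}_{i\in S}$ in $\CM'$, and extend it by assigning every $i\in S^c$ the bundle $x'_{is}=1/(n+1)$ and $x'_{ij}=0$ for all $j\in G$. The case $S=A$ is vacuous, so we may assume $S^c\neq\emptyset$.

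The main technical step is to bound the use of good $s$ by agents in $S$: I claim $x'_{is}\le 1/(n+1)$ for each $i\in S$. If $x'_{is}=0$ this is trivial; otherwise, minimality of $\x'_i$ means that reducing the $s$-coordinate by any amount would violate some constraint in $\eqref{eq:CC}_i$, and since the coefficient $a_s>0$ appears in every covering constraint, some constraint $k$ must be tight at $\x'_i$, i.e., $\sum_j a_{ijk}x'_{ij}+a_s x'_{is}=r_{ik}$. Nonnegativity of the $a_{ijk}$s and $x'_{ij}$s then forces $a_s x'_{is}\le r_{ik}\le r_{max}$, giving $x'_{is}\le r_{max}/a_s=1/(n+1)$.

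Everything else is routine verification. Feasibility of the new bundle for each $i\in S^c$ is immediate: for every $k$, $a_s\cdot(1/(n+1))=r_{max}\ge r_{ik}$, while the original constraints of $\eqref{eq:CC}_i$ receive a nonnegative contribution from $a_s x'_{is}$ alone. Supply-respect on goods in $G$ is preserved since the extension uses none of them, and on $s$ the total allocation is $\sum_{i\in S}x'_{is}+|S^c|/(n+1)\le |S|/(n+1)+|S^c|/(n+1)=n/(n+1)<1$.

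The only subtlety is in correctly parsing minimal feasibility as a per-agent, per-coordinate condition: that is precisely what lets us pick, for each $i$ with $x'_{is}>0$, a single tight constraint and deduce the bound on $x'_{is}$. Once this is in place, the construction in $\CM'$ is entirely self-contained and needs no appeal to strong feasibility of $\CM$.
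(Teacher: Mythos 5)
Your approach is sound in spirit and matches the paper's informal remark after Theorem~\ref{thm:existence} (a ``default good'' occurring in every constraint with a positive coefficient forces strong feasibility). The paper only asserts Lemma~\ref{lem:sf-cmp} ``follows from the construction'' without giving details, so a direct verification like yours is exactly what is needed, and it is clean to observe that the hypothesis on $\CM$ plays no role.

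However, your key step---bounding $x'_{is}\le 1/(n+1)$ for a minimally feasible $\x'_i$---invokes ``nonnegativity of the $a_{ijk}$s,'' and this is not a cosmetic assumption: without it the bound can actually fail. Consider, e.g., an agent $i$ with two constraints $x_{i1}-c\,x_{i2}+a_s x_{is}\ge r$ and $-c\,x_{i1}+x_{i2}+a_s x_{is}\ge r$ for some $c>1$. The allocation $x_{i1}=x_{i2}=t$, $x_{is}=\big(r+(c-1)t\big)/a_s$ has both constraints tight; every coordinate is non-reducible (reducing $x_{i1}$ violates the first constraint, reducing $x_{i2}$ the second, reducing $x_{is}$ both), so the point is minimally feasible for any $t>0$, yet $x_{is}$ grows without bound as $t$ increases. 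Thus the per-agent bound on good $s$ is an artifact of the covering structure, not of the construction of $\CM'$ alone, and your proof is correct only once the matrix $A_i$ is assumed entrywise nonnegative. Since the paper explicitly says ``our results hold for any LP'' and its proof of Lemma~\ref{lem:s} expressly reckons with negative coefficients, you should either (a) state the nonnegativity assumption up front as the hypothesis under which your argument holds, or (b) in the general case fall back on the hypothesis that $\CM$ satisfies strong feasibility---which your proof currently discards---and argue via a reduction to $\CM$ rather than via a universal bound on $x'_{is}$. As written, the proposal silently proves a statement that is narrower than what the lemma claims at the paper's stated level of generality.

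Two small points that are fine but worth flagging: the paper's $\subset$ in Definition~\ref{def:strongfeasibility} should indeed be read as proper subset (as you assume in dismissing $S=A$), and your phrase ``receive a nonnegative contribution from $a_s x'_{is}$ alone'' is slightly misleading---for $i\in S^c$ you set $x'_{ij}=0$ for all $j\in G$, so the other terms contribute exactly zero, not merely something nonnegative.
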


Price vector $\pp$ is said to be at equilibrium, if when every agent is given its optimal bundle, there is no excess demand of any
good, and goods with excess supply have price zero. That is, $(\xx,\pp)$ such that, 

\begin{equation}\label{eq.eq}
\forall i\in A,\ \x_i \in OPT_i(\pp),\ \ \ \mbox{ and } \ \ \ \forall j\in G',\  \sum_{i\in A} x_{ij}\le 1;\ \ \ p_j >0 \Rightarrow \sum_{i\in A}
x_{ij}=1 
\end{equation}

\begin{lemma}\label{lem:pszero}
If $\x_i \in OPT_i(\pp),\ \forall i \in A$ at prices $\pp\ge 0$ for $\CM'$, then 
$\sum_i x_{is} < 1$. That is $p_s=0$ at equilibrium. 
\end{lemma}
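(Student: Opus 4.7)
My plan is to establish the per-agent bound $x_{is} \le r_{\max}/a_s$ at every optimal bundle and then simply sum. Because $a_s = (n+1) r_{\max}$ by construction, this per-agent bound is $x_{is} \le 1/(n+1)$, and summing over the $n$ agents yields
\[
\sum_{i \in A} x_{is} \,\le\, \frac{n}{n+1} \,<\, 1,
\]
which is the main claim. The second sentence of the lemma then follows immediately from the market-clearing rule \eqref{eq.eq}: at any equilibrium, strict excess supply of good $s$ forces $p_s = 0$.

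The key step is therefore the per-agent bound, which I would prove by a direct improvement argument. Fix an agent $i$ and suppose, for contradiction, that some optimal bundle $\x_i \in OPT_i(\pp)$ has $x_{is} > r_{\max}/a_s$. Then for every constraint index $k \in C$,
\[
a_s x_{is} \,>\, r_{\max} \,\ge\, r_{ik},
\]
and because $a_{ijk} \ge 0$ and $x_{ij} \ge 0$ (covering constraints), the $k$-th constraint $\sum_j a_{ijk} x_{ij} + a_s x_{is} \ge r_{ik}$ of $OPT_i(\pp)$ is strictly slack already from the $x_{is}$ term alone. Hence I can decrease $x_{is}$ by a sufficiently small $\delta > 0$, keeping all other $x_{ij}$ fixed: every covering constraint remains satisfied, the budget constraint $\sum_j p_j x_{ij} + p_s x_{is} \le m_i$ only becomes slacker (since $p_s \ge 0$), and the objective strictly decreases by $d_s \delta > 0$. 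This contradicts optimality of $\x_i$, so $x_{is} \le r_{\max}/a_s$ at every optimum.

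I do not expect any substantial obstacle here: this lemma is essentially a consequence of the calibration of $\CM'$, where $a_s$ was chosen large enough that a tiny amount of good $s$ covers any single agent's requirements, while $d_s$ was chosen so large that using more of $s$ is strictly wasteful. The one point worth double-checking is that the improvement argument rules out \emph{every} optimal bundle violating the per-agent bound, not merely some particular minimizer; this is automatic, since a strictly dominating feasible bundle is exhibited whenever $x_{is} > r_{\max}/a_s$, so no optimum can lie in that region.
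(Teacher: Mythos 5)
Your overall strategy is exactly the one the paper intends: establish the per-agent bound $x_{is}\le r_{\max}/a_s = 1/(n+1)$ and sum over the $n$ agents. The paper itself simply states ``it is easy to see that $x_{is}\le \frac{1}{n+1}$'' with no further detail, so your proposal is supplying the content the paper leaves implicit, and the high-level structure agrees.

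One caveat is worth flagging. Your perturbation argument for the per-agent bound (decrease $x_{is}$ by $\delta$ and show every constraint stays satisfied) relies on $a_{ijk}\ge 0$ so that $a_s x_{is} > r_{ik}$ alone forces strict slackness. But the existence appendix of the paper is explicitly not restricted to covering constraints: it defines $a_{\max}$ and $a_{\min}$ in terms of $|a_{ijk}|$, and the proof of Lemma~\ref{lem:s} speaks of compensating for ``negative coefficients.'' If some $a_{ijk}<0$, the term $\sum_{j}a_{ijk}x_{ij}$ can be negative, a constraint with $a_s x_{is}>r_{ik}$ can nonetheless be tight, and your $\delta$-decrease is blocked. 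A cleaner argument that sidesteps sign issues entirely is to compare $\x_i$ against the default bundle $\x_i'$ with $x'_{ij}=0$ for all $j\neq s$ and $x'_{is}=r_{\max}/a_s$: this bundle satisfies every constraint ($a_s x'_{is}=r_{\max}\ge r_{ik}$, all other terms vanish regardless of sign), is within budget whenever $\x_i$ is (since $p_s x'_{is}<p_s x_{is}\le m_i$ when $x_{is}>r_{\max}/a_s$ and $p_j,x_{ij}\ge0$), and has strictly smaller delay cost ($d_s x'_{is}<d_s x_{is}\le \sum_j d_{ij}x_{ij}+d_s x_{is}$ since $d_{ij},x_{ij}\ge0$). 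This gives the per-agent bound without any assumption on the signs of $a_{ijk}$ and is likely the intended one-liner. So: your proof is correct and matches the paper's route in the covering case, but the perturbation step should be replaced by the direct comparison to the default bundle if the lemma is meant to cover the general LP setting the appendix allows.
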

\begin{proof}
It is easy to see that $x_{is} \le \frac{1}{n+1},\ \forall i$, and hence the proof follows.
\end{proof}

Next we show that equilibria of $\CM$ and $\CM'$ are related. 

\begin{lemma}\label{lem:s}
If $\CM$ satisfies strong feasibility, then every equilibrium of $\CM'$ gives an equilibrium of $\CM$.  
\end{lemma}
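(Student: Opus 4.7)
Let $(\xx, \pp)$ be any equilibrium of $\CM'$. The plan is to show that $\xx$ restricted to $G$, together with $\pp$ restricted to $G$, is an equilibrium of $\CM$. By Lemma~\ref{lem:pszero}, $p_s = 0$ and the good $s$ has slack supply, so the supply-respecting and market-clearing conditions for the goods in $G$ transfer directly from $\CM'$ to $\CM$. Moreover, since the feasible region of each agent in $\CM$ is a restriction of her feasible region in $\CM'$, optimality of $\x_i$ in $\CM'$ implies optimality of $\x_i|_G$ in $\CM$ \emph{provided} $\x_i|_G$ still satisfies covering, i.e.\ provided $x_{is}=0$. Hence the lemma reduces to showing $x_{is}=0$ for every agent $i$ at every $\CM'$-equilibrium.

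The main ingredient is the engineered choice of the constants $a_s=(n+1)r_{\max}$ and $d_s$: $d_s$ is picked large enough that the delay of any feasible $G$-only bundle for any agent is strictly smaller than the delay of any bundle in $\CM'$ with $x_{is}>0$ satisfying the same covering constraints. I would formalize this by arguing that if at prices $\pp$ agent $i$'s $G$-only LP (covering plus the restriction of her budget constraint to $G$) admits any feasible solution $\y_i$, then $\y_i$ strictly beats every bundle of $\CM'$ with $x_{is}>0$, contradicting optimality of $\x_i$ in the $\CM'$-bundle LP. Consequently $x_{is}>0$ forces agent $i$'s $G$-only LP to be infeasible within her budget $m_i$. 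A short substitution argument (swap a tiny amount of $s$ for a free useful good in $G$ whenever possible) additionally forces agent $i$'s budget to be tight in $\CM'$ and every $j\in G$ capable of substituting for $s$ in her binding covering constraints to have $p_j>0$.

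To derive a contradiction from $T=\{i:x_{is}>0\}\ne\emptyset$, I would invoke strong feasibility of $\CM$. The restriction of $\xx$ to $G$ gives a $\CM$-feasible supply-respecting sub-allocation $(\x_i|_G)_{i\in U}$ for $U=A\setminus T$; after shrinking it to a minimally feasible sub-allocation, strong feasibility of $\CM$ extends it to a $G$-only, supply-respecting allocation $\yy_T$ for the agents in $T$. Combining $(i)$ the market-clearing identity $\sum_{j\in G} p_j = \sum_i\sum_{j\in G} p_j x_{ij}\le \sum_i m_i$ (using $p_s=0$ and that $p_j>0$ forces $\sum_i x_{ij}=1$), $(ii)$ the budget-tightness of every $i\in T$ derived above, and $(iii)$ the supply-respecting bound on $\yy_T$, should yield the collective bound $\sum_{i\in T}\sum_{j\in G} p_j y_{ij}\le m(T)$. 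A suitably rearranged per-agent Farkas certificate of $G$-only-LP infeasibility for agents in $T$, matched against this bound, should then produce the contradiction.

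The step I expect to be the main obstacle is the final one: translating the collective affordability $\sum_{i\in T}\sum_j p_j y_{ij}\le m(T)$ (which comes out of strong feasibility plus market clearing) into an individual affordability statement for at least one agent in $T$, since that is what contradicts the infeasibility of every single $G$-only LP. I anticipate handling this via an iterated application of strong feasibility to singletons in $T$, combined with the per-agent Farkas certificates and the specific magnitudes of $d_s$ and $a_s$; a greedy rebalancing of the extension $\yy_T$ across agents in $T$ should pin down some $i\in T$ whose assigned share of $\yy_T$ falls within her own budget, contradicting infeasibility of her $G$-only LP.
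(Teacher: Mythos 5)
Your reduction of the lemma to showing $x_{is}=0$ for every agent, and your identification of the huge $d_s$ as the driver, are both correct, and the paper's proof starts the same way. After that your route is substantially more circuitous than the paper's, and the piece you are missing is a single observation that makes almost everything else unnecessary.

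Here is the paper's argument. Suppose $x^*_{us}>0$ for some agent $u$. Apply strong feasibility once, with $S=A\setminus\{u\}$ and (minimally feasible sub-bundles of) the other agents' equilibrium allocations: this yields a $G$-only bundle $\x'_u$ for $u$, feasible for $CC(u)$, such that $(\x^*_i)_{i\neq u}$ together with $\x'_u$ is supply-respecting. Construct $\x'_u$ to agree with $\x^*_u$ on every $j\neq s$ and replace the $x^*_{us}$-contribution by increasing allocations of some other goods in $G$. The crucial point: every good $j\in G$ whose allocation to $u$ \emph{increases} under this swap must have been under-sold in the equilibrium (otherwise the combined allocation would exceed supply), and therefore by the market-clearing condition has $p^*_j=0$. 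So $\x'_u$ costs $u$ exactly what $\x^*_u$ cost --- the swap is free. What remains is a direct delay comparison driven by the magnitude of $d_s$, contradicting $\x^*_u\in OPT_u(\pp^*)$. No market-clearing accounting, no budget tightness, no Farkas.

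Your route instead argues per-agent infeasibility of each budgeted $G$-only LP and then tries to contradict it using one global application of strong feasibility to $T=\{i:x^*_{is}>0\}$ plus market clearing and budget bounds. Two remarks. First, the ``main obstacle'' you flag is not actually an obstacle: per-agent infeasibility of the budgeted $G$-only LP for $i\in T$ means every covering-feasible $G$-only bundle for $i$ costs strictly more than $m_i$, so given the collective bound $\sum_{i\in T}\sum_j p_j y_{ij}\le m(T)$ you get a contradiction immediately by summing the per-agent lower bounds --- no Farkas certificates, no rebalancing of $\yy_T$ across agents. Second, and more seriously, the collective bound itself is not secured the way you sketch it: strong feasibility only applies once the $U=A\setminus T$ sub-allocations have been shrunk to minimally feasible ones, and that shrinking may free up positive-price goods for $\yy_T$ to consume, which can break $\sum_{i\in T}\sum_j p_j y_{ij}\le m(T)$. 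The paper's observation sidesteps this entirely because the substitutes it invokes are under-sold, hence free, so no global accounting is needed.
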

\begin{proof}
Let $(\xx^*,\pp^*)$ respectively be an equilibrium allocation and prices of $\CM'$. From Lemma \ref{lem:pszero}, 
we know that $p^*_s=0$. It suffices to show that $x^*_{is}=0,\ \forall i\in A$, for the lemma to follow. 

To the contrary suppose for some agent $u$, $x^*_{us}>0$. We will construct another bundle $\x'_u$ that is affordable to agent $u$ at prices $\pp^*$, satisfies CC$(u)$, and has a lower delay than $\x^*_u$, contradicting optimal bundle condition at equilibrium. 

Due to {\em strong feasibility}, after all $i\neq u$ is given their bundle $\x^*_i$, there will be a bundle $\x'_u$ left for $u$ to buy among goods in $G$ such that $CC(u)$ constraints are satisfied. Clearly, one way to construct such $\x'_u$ is that the agent keeps buying all goods $j\neq s$ as in $\x^*_u$, and
starts decreasing $x^*_{us}$ and increasing allocation for some other available good. Note that all such available goods are under-sold at equilibrium and therefore has zero price in $\pp^*$. Thus payment for $\x'_u$ and $\x^*_u$ are the same at prices $\pp^*$. In other words $\x'_u$ is affordable at prices $\pp^*$. 

If good $j$ is increased by $\delta_j$ as we go from $\x^*_u$ to $\x'_u$, then we claim that $\delta_j \le m! a_s \left(\frac{a_{max}}{a_{min}}\right)^m x^*_{us}$, where $m=|G|$. This is because, in a constraint even if coefficient of variable $x_{uj}$ is minimum possible, and it needs to compensate for increase in other goods due to their negative coefficients, this cascade could at most harm by a factor of $m! \left(\frac{a_{max}}{a_{min}}\right)^m$. Difference in delay is 
\[
\begin{array}{lcl}
\sum_j d_{uj} (x'_{uj} -x^*_{uj}) - d_s x^*_{us} & = & \sum_j d_{uj} \delta_j - d_s x^*_{us}\\
& \le&  m d_{max}( \max_j \delta_j) - d_s x^*_{us}\\
& \le & \left(m^{(m+1)} d_{max} a_s \left(\frac{a_{max}}{a_{min}}\right)^m -d_s\right) x^*_{us} <0
\end{array}
\]
\end{proof}

Due to Lemma \ref{lem:s}, to show existence of equilibrium for market $\CM$ it suffices to show one for $\CM'$. 
Next to show existence of equilibrium for market $\CM'$ it suffices to consider price vectors where $p_s=0$ due to Lemma \ref{lem:pszero}, and therefore we consider the following set of possible price vectors. 

\[P=\{\pp\in \Rplus^{(m+1)}\ |\ p_s=0;\ \sum_{j\in G} p_j \le M\} \ \ \ \mbox{ where $M=\sum_i m_i$}\]

Let us first handle trivial instances. 
It is easy to see that the feasible set of $\x_i$s in LP (\ref{eq:opti}) at
$\pp=\zeros$ is a superset of the feasible set at any other prices $\pp$.
Therefore, for agent $i$ if $x_i=\zeros\in OPT_i(\zeros)$, then she will not
buy anything at any prices. In that case, it is safe to discard her from the
market.  Further, if there is an allocation $\xx$ satisfying \ref{eq:supply}
for market $\CM'$ such that $\x_i \in OPT_i(\zeros),\ \forall i\in A$, then we
get a trivial equilibrium of $\CM'$ where all the prices are set to zero; note
that in this case zero prices also constitute an equilibrium of market $\CM$ by
Lemma \ref{lem:s}. To show existence for non-trivial instances, w.l.o.g. now on
we assume the following for market $\CM'$.
\medskip

\noindent{\bf Weak Sufficient Demand (WSD):} 
If $\xx$ is such that $\forall i \in A,\ \x_i \in OPT_i(\zeros)$, then $\x_i\neq \zeros,\ \forall i$, and there
exists a good $j\in G'$ such that $\sum_i x_{ij} > 1$. Clearly, $j\neq s$ due
to Lemma \ref{lem:pszero}.

\begin{lemma}\label{lem:ne}
For any $\pp \in P$, $OPT_i(\pp)$ is non-empty, and assuming WSD, $\zeros \notin OPT_i(\pp),\ \forall i\in A$.  
\end{lemma}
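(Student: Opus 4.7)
}
For the non-emptiness claim, the plan is to exhibit a feasible point of the LP \eqref{eq:opti} and then argue boundedness. The candidate point is the allocation $x_{is}=1/(n+1)$ and $x_{ij}=0$ for $j\in G$. Since $a_s=(n+1)r_{\max}$, the covering constraint for each $k\in C$ is satisfied because $a_s\cdot 1/(n+1)=r_{\max}\ge r_{ik}$, and the budget constraint is satisfied trivially because $p_s=0$ on $P$, making the left-hand side zero. Hence the LP is feasible for every $\pp\in P$. The objective $\sum_j d_{ij}x_{ij}+d_s x_{is}$ is a non-negative linear function of non-negative variables, so it is bounded below by $0$. A feasible LP bounded from below attains its optimum, giving $OPT_i(\pp)\neq\emptyset$.

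For the second claim, suppose towards contradiction that $\zeros\in OPT_i(\pp)$ for some $i$ and some $\pp\in P$. Then $\zeros$ is feasible in \eqref{eq:opti} at prices $\pp$; in particular the covering constraints force $r_{ik}\le 0$ for all $k\in C$, and since the $r_{ik}$'s are non-negative by assumption, we have $r_{ik}=0$ for all $k$. But then $\zeros$ is also feasible at the price vector $\pp=\zeros$ (dropping the budget constraint only enlarges the feasible set), and it achieves objective value $0$, which equals the lower bound of the non-negative objective. Hence $\zeros\in OPT_i(\zeros)$, so the collection $(\x_{i'})_{i'\in A}$ with $\x_i=\zeros$ and $\x_{i'}$ arbitrary from $OPT_{i'}(\zeros)$ for $i'\ne i$ is a valid choice that violates the first part of the WSD assumption. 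This contradiction yields $\zeros\notin OPT_i(\pp)$.

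There is essentially no hard step here: both parts are short structural arguments that just need a clean invocation of the construction of $\CM'$ (for feasibility via the default good $s$) and of the WSD hypothesis (for ruling out $\zeros$). The only thing to be careful about is ensuring that the witness $x_{is}=1/(n+1)$ simultaneously certifies feasibility of \eqref{eq:opti} and is consistent with the normalization under which $\pp\in P$ has $p_s=0$; both are immediate from the definitions.
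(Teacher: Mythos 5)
Your proof is correct. The non-emptiness argument matches the paper's (one-line) treatment: use the default good $s$ at zero price to exhibit a feasible point, and boundedness below by zero of the non-negative objective. For the second claim you take a shorter and more direct route than the paper. You observe that $\zeros \in OPT_i(\pp)$ forces every $r_{ik}=0$, which together with non-negativity of the objective puts $\zeros$ directly into $OPT_i(\zeros)$; you then instantiate WSD with a vector $(\x_{i'})_{i'\in A}$ having $\x_i = \zeros$ to get a contradiction. The paper instead runs a convex-combination argument: it takes $\x^0_i\in OPT_i(\zeros)$ (nonzero by WSD), uses the monotone inclusion of feasible sets to deduce that $\x^0_i$ strictly beats $\zeros$ in delay, concludes $\x^0_i$ must be unaffordable at $\pp$, rescales it by $\lambda = m_i/\sum_j x^0_{ij}p_j$ to fit the budget (using convexity of \ref{eq:CC}), and shows the scaled point still has negative delay, contradicting $\zeros$'s optimality at $\pp$. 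The tradeoff is that your argument leans directly on $\dij\ge 0$ (so that $\zeros$ is a global minimizer of the objective), while the paper's route would survive a general linear objective where $\zeros$ need not minimize delay; under the model as stated both are valid and yours is the leaner one.
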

\begin{proof}
The first part of the proof is easy to see due to the extra good $s$ whose price is zero in $P$. 
For the second part, to the contrary suppose for $\x_i \in OPT_i(\pp)$ we have $\x_i=\zeros$. By WSD assumption we know that for any $\x^0_i \notin OPT_i(\zeros)$, we have $\x^0_i\neq \zeros$. Further, feasible set of LP (\ref{eq:opti}) is at prices $\pp$ is a subset of the feasible set of this LP at zero prices. Hence we know that $\sum_j d_{ij} x^0_{ij} < \sum_j d_{ij}x_{ij}=0$.

It must be the case that $\x^0_i$ is not affordable at prices $\pp$, i.e., $m_i < \sum_j x^0_{ij} p_j$. For $\lambda=m_i/\sum_j x^0_{ij} p_j$, set $\x'_i = \lambda\x^0_i + (1-\lambda) \x_i$. Since $\x_i$ and $\x^0_i$ both satisfy \ref{eq:CC}, so does $\x'_i$. And since $\x_i=\zeros$ bundle $\x'_i$ is affordable at prices $\pp$, thereby $\x'_i$ is feasible in $OPT_i(\pp)$. The delay at $\x'_i$ is $\sum_i d_{ij} x'_{ij}  = \sum_j d_{ij} (\lambda x^0_{ij} + (1-\lambda) x_{ij}) = \lambda \sum_j d_{ij} x^0_{ij} < 0 = \sum_j d_{ij} x_{ij}$, a contradiction to $\x_i$ being optimal bundle at prices $\pp$.
\end{proof}

Next we will construct a correspondence whose fixed points are exactly the market equilibria of $\CM'$.
Let $c_j^{max}$ be the maximum possible demand of good $j$; we can compute $c_j^{max}$ by maximizing $\sum_i x_{ij}$ over the \ref{eq:CC} constraints of all agents $i\in A$. 
Define domain $$D=\{(\xx,\pp)\ |\ \pp \in P;\ \ \ \ \xx\ge 0;\  \ \ \ \forall j,\ \sum_i x_{ij}\le c_j^{max}\}$$
Let $\delta=\min_i m_i$.  Define correspondence $F:D\rightarrow D$ as
follows where for a given $(\bxx,\bpp)\in D$, we have $(\xx',\pp') \in F(\bxx,\bpp)$,

\begin{equation}\label{eq.F}
\forall i\in A,\ \x'_i \in OPT_i(\bpp), \ \ \ \mbox{ and } \ \ \ \pp' \in \argmax_{\pp\in P,\ \delta \le \sum_j  p_j \le \max\{\delta,
\sum_{i\in A, j\in G'} \bar{x}_{ij}\bar{p}_j\}} \sum_{i\in A, j\in G'} \bar{x}_{ij} p_j 
\end{equation}

The correspondence is well defined due to Lemma \ref{lem:ne}.  If $F(\bxx,\bpp)$ is a convex set, and graph of $F$ is closed, then
Kakutani's Theorem \cite{kakutani} implies that F has a fixed point, i.e., $\exists (\xx^*,\pp^*)\in D$ such that $(\xx^*,\pp^*)\in F(\xx^*,\pp^*)$. 
Next we show the same. 

\begin{lemma}\label{lem:closed}
Correspondence $F$ has a fixed point.
\end{lemma}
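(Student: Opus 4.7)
The plan is to apply Kakutani's fixed point theorem to the correspondence $F: D \to D$ defined in (\ref{eq.F}). First I would verify that $D$ is a non-empty, compact, convex subset of a Euclidean space: it is cut out by finitely many linear inequalities (so closed and convex), and it is bounded since $\pp \in P$ forces $\sum_j p_j \le M$ and $p_s = 0$, while $\sum_i x_{ij} \le c_j^{max}$ caps the allocation variables. Non-emptiness is trivial.

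Next I would check the four hypotheses of Kakutani for $F$. Non-emptiness of $F(\bxx,\bpp)$: Lemma \ref{lem:ne} gives $OPT_i(\bpp) \neq \emptyset$ for every agent, and the price argmax is taken over the set $\{\pp \in P : \delta \le \sum_j p_j \le \max\{\delta, \sum_{i,j}\bar{x}_{ij}\bar{p}_j\}\}$, which is a non-empty (it contains any $\pp$ with $\sum_j p_j = \delta$) compact polytope, so the linear objective attains its maximum. Convexity of $F(\bxx,\bpp)$: each $OPT_i(\bpp)$ is the optimal-solution set of a linear program and hence convex, and the price argmax is a face of a polytope so also convex; the product of convex sets is convex.

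The main step is verifying that $F$ has closed graph. Let $(\bxx^n,\bpp^n) \to (\bxx,\bpp)$ in $D$ with $(\xx^n,\pp^n) \in F(\bxx^n,\bpp^n)$ and $(\xx^n,\pp^n) \to (\xx,\pp)$. For the allocation component, I would apply Berge's maximum theorem to the parametric LP (\ref{eq:opti}): the objective is continuous and independent of $\bpp$, and the feasible-set correspondence $\bpp \mapsto \{\x_i : \text{CC}(i), \sum_j \bar{p}_j x_{ij} + p_s x_{is} \le m_i, \x_i \ge 0\}$ is continuous on $P$. Non-emptiness and lower hemicontinuity of this feasible-set correspondence hold precisely because $p_s = 0$ on $P$ and the coefficient $a_s = (n+1)r_{max}$ is large enough that the bundle $x_{is} = r_{max}/a_s$, $x_{ij}=0$ for $j\neq s$ is feasible for every $\bpp \in P$, giving a Slater-type interior point that persists under small perturbations. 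Hence $OPT_i$ is upper hemicontinuous, so $\x_i \in OPT_i(\bpp)$. For the price component, the constraint set $\{\pp \in P : \delta \le \sum_j p_j \le \max\{\delta, \sum_{i,j}\bar{x}_{ij}\bar{p}_j\}\}$ depends continuously on $(\bxx,\bpp)$ (the upper bound is a continuous function of $(\bxx,\bpp)$, and the constraint set is non-empty and lower hemicontinuous since the lower bound $\delta$ is fixed), and the linear objective $\sum_{i,j}\bar{x}_{ij}p_j$ varies continuously, so another application of Berge gives $\pp \in \argmax$. Thus Kakutani yields a fixed point.

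The step I expect to be the most delicate is the upper hemicontinuity of the demand correspondence $OPT_i$, since in general a parametric LP whose constraint region varies with the parameters can fail to be continuous when the feasible region loses an interior. The reason this works here is exactly the construction of the auxiliary good $s$ in $\CM'$: because $p_s \equiv 0$ on $P$ and the single slack variable $x_{is}$ alone can satisfy all covering constraints, the feasible region of (\ref{eq:opti}) is non-empty and has a point bounded away from the budget boundary uniformly over $\bpp \in P$, which is exactly what is needed for lower hemicontinuity and hence for Berge to apply.
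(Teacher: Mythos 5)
Your proof uses the same Kakutani-based strategy as the paper; the only difference is that where the paper dispatches the closed-graph step with a one-line appeal to ``continuity of parameterized LP solutions,'' you carry out the verification via Berge's maximum theorem, correctly identifying that the auxiliary good $s$ (with $p_s \equiv 0$ on $P$ and large coefficient $a_s$) supplies a uniform Slater point making the feasible-set correspondence of $OPT_i$ lower hemicontinuous. This is a more rigorous rendering of the same argument, not a different route.
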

\begin{proof}
Clearly, $F(\bxx,\bpp)$ is a convex set since it is a cross product of solution sets of LPs. 
The lemma follows using Kakutani's fixed-point Theorem \cite{kakutani} if graph of $F$ is closed.

Let $(\bxx^t,\bpp^t)$ for $t=1,2,\dots$ be a sequence of points in $D$, and let $(\xx^t,\pp^t)$ be another sequence such that $(\xx^t,\pp^t)\in
F(\bxx^t,\bpp^t)$. Then essentially, $\xx^t$ and $\pp^t$ are solutions of LPs that are continuously changing with $(\bxx,\bpp)$.
Therefore, if $\lim_{t\ra \infty} (\bxx^t,\bpp^t) = (\bxx^*,\bpp^*)$ and $\lim_{t\ra \infty} (\xx^t,\pp^t) = (\xx^*,\pp^*)$, then by
continuity of parameterized LP solutions, we get that $(\xx^*,\pp^*) \in F(\bxx^*,\bpp^*)$, implying graph of $F$ is closed. 
\end{proof}

\begin{lemma}\label{lem:dj}
If $(\xx^*,\pp^*)$ is a fixed-point of $F$ then $\forall j\in G', \sum_{i\in A} x^*_{ij} \le 1$. 
\end{lemma}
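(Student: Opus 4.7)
The plan is to exploit the fact that $\pp^*$ is a maximizer of a linear price-selection program and compare the maximum possible revenue against the actual revenue $R^* := \sum_{j \in G'} z_j^* p_j^*$, where $z_j^* := \sum_{i \in A} x^*_{ij}$ denotes total demand. First I would record two easy facts: (i) by Lemma~\ref{lem:pszero}, $z_s^* \le n/(n+1) < 1$, so the case $j = s$ is handled immediately; and (ii) since each agent's budget constraint is part of LP~\eqref{eq:opti}, we have $R^* = \sum_{i \in A} \sum_{j \in G'} p_j^* x^*_{ij} \le \sum_{i \in A} m_i = M$. Consequently the upper bound $U := \max\{\delta, R^*\}$ appearing in the price program of \eqref{eq.F} satisfies $U \le M$, so the constraint $\sum_{j \in G} p_j \le M$ built into $P$ is not binding in that program.

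Next I would use that $\pp^*$ lies in the argmax in \eqref{eq.F} with $\bxx = \xx^*$ and $\bpp = \pp^*$. Because $p_s$ is forced to $0$ in $P$, the linear program
\[
\max \; \sum_{j \in G} z_j^* p_j \quad \text{s.t.} \quad p_j \ge 0,\; \delta \le \sum_{j \in G} p_j \le U
\]
has optimal value $U \cdot \max_{j \in G} z_j^*$, attained by putting all the mass on any good achieving the maximum demand. Since $\pp^*$ is optimal and evaluates to $R^*$, this forces the identity
\[
R^* \;=\; U \cdot \max_{j \in G} z_j^*.
\]

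From here the proof splits into the two cases determined by $U$. If $R^* \ge \delta$, then $U = R^*$ and $R^* = R^* \cdot \max_{j \in G} z_j^*$; since $R^* \ge \delta > 0$, we conclude $\max_{j \in G} z_j^* = 1$. If instead $R^* < \delta$, then $U = \delta$ and $\max_{j \in G} z_j^* = R^*/\delta < 1$. In both cases $z_j^* \le 1$ for every $j \in G$, and combined with $z_s^* < 1$ this yields the claim.

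The only step that requires any care is verifying that the upper bound $U$ is indeed binding on the feasible set of the price-selection program (rather than the envelope constraint $\sum_{j \in G} p_j \le M$ from $P$); this is exactly why the preliminary bound $R^* \le M$ is needed. Everything else is a direct consequence of the form of the correspondence $F$ and the fact that at a fixed point the revenue $R^*$ must equal the maximum achievable revenue. I do not anticipate any serious obstacle beyond bookkeeping of the two cases.
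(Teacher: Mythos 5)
Your proof is correct and follows essentially the same route as the paper's: both analyze the price-selection subproblem in the definition of $F$ at the fixed point, observing that its optimum concentrates all price mass on a maximum-demand good and then comparing the resulting revenue with $R^* = \sum_{i,j} x^*_{ij} p^*_j$. The paper argues by contradiction (setting $p_{j'}=U$ for an over-demanded good $j'$ yields revenue exceeding $U \ge R^*$), while you compute the optimal value $U\cdot\max_{j\in G} z_j^*$ explicitly and read off the bound from the identity $R^* = U\cdot\max_{j}z_j^*$; the mathematical content is the same, and your explicit check that $U \le M$ (so the envelope constraint in $P$ is slack) fills in a detail the paper leaves implicit.
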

\begin{proof}
Since $\x^*_i \in OPT_i(\pp^*),\ \forall i\in A$, $\sum_{i\in A} x^*_{is}\le 1$ follows using Lemma \ref{lem:pszero}.
Among the rest of the goods, suppose, for $j'\neq s$ we have $\sum_{i\in A} x^*_{ij'} > 1$. 
Let $U=\max\{\delta,\sum_{i,j} x^*_{ij} p^*_j\}$. Then the optimization problem of (\ref{eq.F}) is 

\[
\max_{\pp \in P,\ \delta \le \sum_j p_j \le U} \sum_{i,j} x^*_{ij}p_j = \max_{\pp\in P,\ \delta \le \sum_j p_j \le U} \sum_{j}p_j \sum_i
x^*_{ij} 
\]
The above quantity can be made more than $U$ by setting $p_{j'}=U$, and therefore optimal value is strictly more than $U$. However, due
to the fixed-point condition $\pp^*$ is a solution of the above, which implies $
\sum_{i,j} x^*_{ij} p^*_j > U$, a contradiction. 
\end{proof}

\begin{lemma}\label{lem:fpf}
Assuming {\em weak sufficient demand (WSD)}, if $(\xx^*,\pp^*)$ is a fixed-point of $F$ then $\pp^*\neq 0$ and $\sum_{i,j}
x^*_{ij} p^*_j \ge \delta$.  \end{lemma}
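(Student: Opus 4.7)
The plan is to establish the two conclusions---$\pp^*\neq 0$ and total spending at least $\delta$---separately, where the first is essentially immediate from the definition of $F$ and the second comes from a short contradiction argument combining WSD with Lemma~\ref{lem:dj}.

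For $\pp^*\neq 0$, I would note that in the definition of $F$ in \eqref{eq.F}, the feasible region of the inner price-LP forces $\sum_j p'_j \ge \delta$ for every $\pp'\in F(\bxx,\bpp)$ (the constraint set is nonempty, since $\sum_j p_j=\delta$ is always attainable). At a fixed point this yields $\sum_j p^*_j\ge \delta>0$, and in particular $\pp^*\neq 0$.

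For the spending bound I would argue by contradiction, supposing $\sum_{i,j} x^*_{ij}p^*_j<\delta$. Then for each agent $i$,
\[
\sum_j p^*_j x^*_{ij} \;\le\; \sum_{i'}\sum_j p^*_j x^*_{i'j} \;<\; \delta \;\le\; m_i,
\]
so the budget constraint of $OPT_i(\pp^*)$ is strictly slack at $\x^*_i$. A short convex-combination step---if some CC-feasible $\y_i\ge 0$ had strictly smaller delay than $\x^*_i$, the point $\lambda \y_i+(1-\lambda)\x^*_i$ would remain budget-feasible for small $\lambda>0$ and lower the delay, contradicting $\x^*_i\in OPT_i(\pp^*)$---shows that a slack budget forces $\x^*_i\in OPT_i(\zeros)$. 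This is the only subtle step, but it is standard.

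Once $\x^*_i\in OPT_i(\zeros)$ holds for every $i$, WSD applies to $\xx^*$ and yields some good $j\in G'$ with $\sum_i x^*_{ij}>1$. The bound $x^*_{is}\le 1/(n+1)$ (which underlies Lemma~\ref{lem:pszero} and follows directly from the choice $a_s=(n+1)r_{\max}$) forces this $j$ to lie in $G$, which contradicts Lemma~\ref{lem:dj}. Hence $\sum_{i,j} x^*_{ij}p^*_j\ge \delta$, completing the proof. The main obstacle is really only the justification that a non-binding budget implies optimality in $OPT_i(\zeros)$; the rest is a straight comparison between what WSD produces and the upper bound provided by Lemma~\ref{lem:dj}.
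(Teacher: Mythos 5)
Your proof is correct and uses the same essential ingredients as the paper's (Lemma~\ref{lem:dj}, WSD, and a convex-combination step exploiting slack budgets), with a modest streamlining that the paper does not take: for $\pp^*\neq 0$ you read the conclusion directly off the constraint $\delta\le\sum_j p_j$ in the definition of $F$ (so WSD is not even needed there), whereas the paper derives a contradiction through the value of the price maximization; and for the spending bound, you prove $\x^*_i\in OPT_i(\zeros)$ for every $i$ and then contradict Lemma~\ref{lem:dj} via WSD, which is the contrapositive packaging of the paper's argument (the paper first invokes WSD and Lemma~\ref{lem:dj} to obtain some $i'$ with $\x^*_{i'}\notin OPT_{i'}(\zeros)$ and then uses the same convex-combination step to reach a contradiction). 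Both routes are sound; yours is slightly more direct.
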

\begin{proof}
For the first part, to the contrary suppose $\pp^*=0$. Then $\forall i \in A, \x^*_i \in OPT_i(\zeros)$ since $(\xx^*,\pp^*)$, and therefore by WSD condition $\exists j\neq s,\ x^*_{ij} > 1$. However $\sum_{i,j} x^*_{ij}p^*_j=0$. Therefore, 
\[
\max_{\pp\in P,\ \delta \le \sum_j  p_j \le \max\{\delta, \sum_{i,j} x^*_{ij} p^*_j\}} \sum_{i,j} x^*_{ij} p_j = \max_{\pp\in P,\
\sum_j p_j = \delta} \sum_{i,j} x^*_{ij} p_j >0
\]
This contradicts the fact that $\pp^*$ is a maximizer of the above.

For the second part, to the contrary suppose $\sum_{i,j}x^*_{ij} p^*_j<\delta$, then $\max\{\delta, \sum_{i,j}x^*_{ij} p^*_j\}=\delta$.
Further, due to Lemma \ref{lem:ne} (together with the WSD assumption) we have $\forall i \in A, \x^*_i \neq \zeros$, and therefore at maximum $\sum_j p_j
=\delta$. Since $\pp^*$ is a maximizer of the above, $\sum_j p^*_j = \delta$. 

However since $\sum_{i,j}x^*_{ij} p^*_j<\delta$, we get $\forall i, \sum_j x^*_{ij} p^*_j < m_i$ where $\x^*_i$ is feasible in LP
(\ref{eq:opti}) at prices $\pp^*$. 
Let $\xx^0$ be a demand vector when all the prices are zero. Due to Lemma \ref{lem:dj} and WSD assumption we get that
for some agent $i$, $\x^*_i \notin OPT_i(\zeros)$. Let $i'$ be this agent. This implies $\sum_j d_{i'j} x^0_{i'j} < \sum_j d_{i'j}
x^*_{i'j}$. Despite lower cost at $\x^0_{i'}$ she demands $\x^*_{i'}$ at prices $\pp^*$, hence it should be the case that she can
not afford $\x^0_{i'}$ at those prices. However, since $i'$ is also not spending all the money at prices $\pp^*$, there exists some
$0<\tau<1$ such that she can
afford $\xx'_{i'}=\tau\x^*_{i'} + (1-\tau) \x^0_{i'}$ at $\pp^*$. Since both $\x^0_{i'}$ and $\x^*_{i'}$ satisfy the
$CC(i')$ constraints of LP (\ref{eq:opti}) for agent $i'$, so does $\xx'_{i'}$. Thus, $\xx'_{i'}$ is a feasible point in
$OPT_{i'}(\pp^*)$ LP, and $\sum_j d_{i'j} x'_{i'j}<\sum_j d_{i'j} x^*_{i'j}$, a contradiction to $\x^*_{i'} \in OPT_{i'}(\pp^*)$. 
\end{proof}

Next we show the main result using Lemmas \ref{lem:closed}, \ref{lem:dj} and \ref{lem:fpf}.

\begin{theorem}\label{thm:eqcmp}
If $\CM'$ satisfies strong feasibility then it has an equilibrium. 
\end{theorem}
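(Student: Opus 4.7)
The plan is to combine Lemmas \ref{lem:closed}, \ref{lem:dj} and \ref{lem:fpf} by verifying that any fixed point of the correspondence $F$ satisfies both parts of (\ref{eq.eq}). First I would dispose of the trivial case in which some allocation with every $\x_i \in OPT_i(\zeros)$ already satisfies the supply bounds; that case yields the zero-price equilibrium immediately, and otherwise the WSD assumption is in force (this is exactly the reduction made just before Lemma \ref{lem:ne}).

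By Lemma \ref{lem:closed} the correspondence $F$ admits a fixed point $(\xx^*,\pp^*) \in D$. The condition $\x^*_i \in OPT_i(\pp^*)$ is immediate from the first block of (\ref{eq.F}), and Lemma \ref{lem:dj} gives the supply-respecting bound $\sum_{i\in A} x^*_{ij} \le 1$ for every $j \in G'$. What remains is the market clearing condition $p^*_j > 0 \Rightarrow \sum_{i\in A} x^*_{ij} = 1$.

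The main step I expect to require care is this market clearing condition, and my plan is to read it off from the price-update block of (\ref{eq.F}). Write $c_j := \sum_{i\in A} x^*_{ij}$ and $U := \max\{\delta,\sum_{i,j} x^*_{ij} p^*_j\}$. Lemma \ref{lem:fpf} gives $\sum_{i,j} x^*_{ij} p^*_j \ge \delta$, so $U$ coincides with $\sum_j c_j p^*_j$, the value actually attained by $\pp^*$. The price argmax of (\ref{eq.F}) then says $\pp^*$ maximizes the linear functional $\pp\mapsto \sum_j c_j p_j$ over $\{\pp\in P : \delta \le \sum_j p_j \le U\}$, and this maximum equals $U$. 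Since $c_j\le 1$ for every $j$ by Lemma \ref{lem:dj}, the optimal value is at most $(\max_j c_j)\cdot U$, which forces $\max_j c_j = 1$; in particular some good is fully demanded, so a candidate $j^*$ with $c_{j^*}=1$ exists.

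Finally, I would close with a standard swapping argument inside the LP: suppose for contradiction some $p^*_{j_0} > 0$ had $c_{j_0} < 1$. Shifting a small amount of price mass from $j_0$ to such a $j^*$ preserves $\pp \in P$, keeps $\sum_j p_j = U$, and strictly increases $\sum_j c_j p_j$, contradicting the maximality of $\pp^*$ (the constraint $p_s=0$ built into $P$ is untouched since $j^*\neq s$ by Lemma \ref{lem:pszero}). Hence $p^*_j > 0 \Rightarrow c_j = 1$, which together with the previous paragraph certifies $(\xx^*,\pp^*)$ as an equilibrium of $\CM'$. The heavy lifting has already been done in Lemmas \ref{lem:closed}--\ref{lem:fpf}; the remaining work is this direct LP-level extraction of market clearing from the fixed-point identity.
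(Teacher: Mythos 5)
Your proposal is correct and takes essentially the same approach as the paper: obtain a fixed point via Lemma \ref{lem:closed}, read off the optimal-bundle and supply-respecting conditions from Lemmas \ref{lem:dj} and \ref{lem:fpf}, and extract market clearing from the price-maximization block of \eqref{eq.F}. Your treatment of the market-clearing step is just a slightly more explicit version of the paper's — you first deduce $\max_j c_j=1$ from $U\le(\max_j c_j)\,U$ and then run a price-mass swap, whereas the paper asserts the standard LP fact that at an optimum of $\max\sum_j c_j p_j$ over the price polytope, mass concentrates on the maximal $c_j$, and separately derives the contradiction $\sum_{i,j}x^*_{ij}p^*_j<U$ against Lemma \ref{lem:fpf} in the case $\max_j c_j<1$; these are equivalent computations dressed differently.
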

\begin{proof}
Due to Lemma \ref{lem:closed}, we know that there exists a fixed-point of correspondence $F$. Let this be $(\xx^*,\pp^*)$. We will show
that it is a market equilibrium of $\CM'$. Clearly, optimal allocation condition is satisfied because $\x^*_i \in OPT_i(\pp^*)$. Market
clearing remains to be shown, which requires: $(a)$
$\forall j,\ \sum_i x^*_{ij} \le 1$, and $(b)$ $p^*_j >0 \Rightarrow \sum_i x^*_{ij} = 1$.

$(a)$ follows from Lemma \ref{lem:dj}. For $(b)$, 
let $U=\max\{\delta,\sum_{i,j} x^*_{ij} p^*_j\}$, then due to Lemma \ref{lem:fpf}, $U=\sum_{i,j} x^*_{ij}
p^*_j$. The optimization problem of (\ref{eq.F}) is 

\[
\max_{\pp\in P,\ \delta \le \sum_j p_j \le U} \sum_{i,j} x^*_{ij}p_j = \max_{\pp\in P,\ \delta \le \sum_j p_j \le U} \sum_{j}p_j \sum_i
x^*_{ij} 
\]

Clearly, at optimal solution of the above $p_j$ is non-zero only where $\sum_i
x^*_{ij}$ is maximum. Since $\pp^*$ is a solution, if $\exists j,\ \sum_i x^*_{ij}=1$, then $(b)$ follows. 

On the other hand suppose for all $j$ we have $\sum_i x^*_{ij}< 1$, then
clearly the optimal value of the above is strictly less than $U$. However since $\pp^*$ is a maximizer it implies that
$\sum_{i,j} x^*_{ij} p^*_j<U$, a contradiction to Lemma \ref{lem:fpf}. 
\end{proof}

The next theorem follows using Lemmas \ref{lem:sf-cmp}, \ref{lem:s}, and Theorem \ref{thm:eqcmp}.

\existence*

\begin{remark}
By similar argument, we can show existence of equilibrium for market instances satisfying only extensibility, and sufficient demand
conditions (see Definitions \ref{def:extensibility} and \ref{def:ED}). However, since our algorithm returns an equilibrium of such a
market, it already gives a constructive proof of existence.  
\end{remark}

\subsection{Quasi-concave utility functions}
\label{sec:quasiconcave}
In this section, we show that the preferences of agents in our model can be captured by quasi-concave utility functions. 
Notation: the symbol $\leq$ when used for vectors represents a co-ordinate wise relation, and $<$ represents that at least one of the inequalities is strict.
Define the utility of an agent $i$ for an allocation $\x_i$ 
to be the smallest delay of a feasible allocation dominated by $\x_i$, times $-1$: 
$$ \textstyle U_i(\x_i) = - \min \left\{ \d_i \cdot \x_i' : \x_i' \leq \x_i~\& ~\x_i' \text{ is feasible for \ref{eq:coveringLP}} \right\} .$$
If there is no  $\x_i' \leq \x_i$ that is feasible for \ref{eq:coveringLP} then the utility is  $-\infty$. 
It is easy to check that this utility function is quasi-concave, and 
induces the same preferences as in our model. 

\section{Special Cases}
\label{app:specialcases}
In this section, we show how several natural problems satisfy the extensibility condition (Definition \ref{def:extensibility}) in
Section \ref{sec:model}.  

\paragraph{Scheduling.}

Recall the scheduling problem from Section \ref{sec:specialcases}. The agents are jobs that require $d$ different types of machines, and the set of machines of type $k$ is $M_k$; the machines are the goods in the market. Each agent needs $r_{ik}\in \Rplus$ units of machines in all of type $k$, which is captured by the covering constraint $\sum_{j \in M_k} x_{ijk} \ge r_{ik},\ \forall k\in [d]$. All agents experience the same delay $d_{jk}$ from machine $j$ in type $k$. 

\begin{lemma}
Scheduling problem satisfies the extensibility condition.  
\end{lemma}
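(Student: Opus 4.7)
The plan is to exploit the fact that in the scheduling setting all agents experience the same delay $d_{jk}$ on machine $j$ of type $k$, so the joint optimization depends only on the aggregate amount of each machine used, not on which agent uses it. First I would observe that the joint-optimization LP for $S$ decomposes across machine types: for each type $k$, minimize $\sum_{i \in S} \sum_{j \in M_k} d_{jk} x_{ijk}$ subject to $\sum_{j \in M_k} x_{ijk} \ge r_{ik}$ for all $i \in S$ and $\sum_{i \in S} x_{ijk} \le 1$ for all $j \in M_k$. Substituting $y_{jk} := \sum_{i \in S} x_{ijk}$ reduces this to minimizing $\sum_{j \in M_k} d_{jk}\, y_{jk}$ subject to $\sum_{j \in M_k} y_{jk} \ge r_k(S) := \sum_{i \in S} r_{ik}$ and $0 \le y_{jk} \le 1$; since delays are non-negative, the covering constraint is tight at optimum.

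The second step is to characterize the aggregate optimum. Sorting the machines of type $k$ in non-decreasing order of $d_{jk}$, the optimal $y^*$ fills the cheapest machines up to a total of $r_k(S)$: all machines strictly cheaper than some threshold are saturated, one (or several tied) threshold machine(s) may be fractionally used, and strictly more expensive machines are unused. Any allocation $X$ that is jointly optimal for $S$ has aggregate $(\sum_{i \in S} x_{ijk})_{j \in M_k}$ equal to such a profile, and disaggregates it among agents in $S$ in any way that meets the covering constraints with equality.

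To show extensibility, given $X$ jointly optimal for $S$ and a new agent $i^* \in S^c$, I would construct $X'$ as follows. Set $x'_{ijk} = x_{ijk}$ for every $i \in S$, every type $k$, and every $j \in M_k$; this automatically gives $\d_i \cdot \x'_i = \d_i \cdot \x_i$ for all $i \in S$, as required. For each type $k$, allocate $r_{i^*k}$ units to agent $i^*$ by greedily taking the cheapest remaining capacity: first consume the residual $1 - \sum_{i \in S} x_{ijk}$ on the threshold machine(s) already partially used by $S$, then move on to previously unused machines in sort order. The hypothesis that each type has strictly more supply than $\sum_{i \in A} r_{ik}$ guarantees that sufficient capacity remains, so both the covering constraint for $i^*$ and every supply constraint are satisfied.

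The final step is to verify that $X'$ is jointly optimal for $S' = S \cup \{i^*\}$. Its aggregate usage on type $k$ is exactly the cheapest $r_k(S') = r_k(S) + r_{i^*k}$ units in sort order, which by the same characterization is the minimum of the aggregate LP for $S'$; hence $\sum_{i \in S'} \d_i \cdot \x'_i$ attains the joint optimum. The only delicate issue is tie-breaking at the threshold — several machines may share the same $d_{jk}$ — but since the construction simply extends the already-saturated prefix further along the sort order, the total delay is insensitive to which of the tied machines $X$ used, and the argument goes through. This establishes extensibility for the scheduling problem.
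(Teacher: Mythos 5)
Your proof follows essentially the same approach as the paper's --- exploiting that delays are agent-independent, so the joint LP reduces (per type) to filling the cheapest $r_k(S)$ units of aggregate supply, and the new agent simply takes the next-cheapest units. Where the paper's proof is a few terse sentences, you add useful detail: the explicit aggregate substitution $y_{jk} = \sum_{i\in S}x_{ijk}$ that turns the joint LP into a fractional knapsack, and the careful handling of ties at the threshold.

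One loose end, which the paper's proof shares: you assert that ``since delays are non-negative, the covering constraint is tight at optimum,'' and thereafter treat the aggregate of $X$ as exactly the cheapest $r_k(S)$ units. This fails if some machines have $d_{jk}=0$: a jointly optimal $X$ for $S$ could over-cover on zero-delay machines (e.g.\ two machines of delay $0$ and one of delay $1$, all of capacity $1$, $r_k(S)=1$, and $X$ saturates both zero-delay machines). Your construction holds $X$ fixed and gives $i^*$ only the leftover supply, which in that example forces $i^*$ onto the delay-$1$ machine even though a jointly optimal $X'$ for $S'$ would use both zero-delay machines. The extensibility definition only requires preserving the per-agent delays $\d_i\cdot\x_i$, not the allocation $\x_i$ itself, so the correct fix is to first shrink any over-allocation of $S$ on zero-delay machines down to tightness (this leaves every $\d_i\cdot\x_i$ unchanged since the shrinking happens only where $d_{jk}=0$), and only then allocate greedily to $i^*$. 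With that small patch the argument is complete; absent it, the step ``$X'$ is jointly optimal for $S'$'' does not hold for every admissible $X$.
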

\begin{proof}
Consider an arbitrary set of agents $S$ and an agent $\hat{i}$ outside of this set. Let $(\x_i)_{i\in S}$ to be a feasible allocation that minimizes the total delay, i.e., $\sum_{i\in S, j\in M_k, k\in [d]} d_{jk}x_{ijk}$. Since the delays are the same for each agent the total delay minimizes when the agents in $S$ gets $\sum_{i\in S} r_{ik}$ units of machines of type $k$ with the smallest delay. Therefore, if we assign the next $r_{ik}$ units of machines of type $k$ with the smallest delay to agent $\hat{i}$ then $(x_i)_{i\in (S\cup \hat{i})}$ would be the feasible allocation that minimizes the total delay. Therefore, this problem satisfies the extensibility condition.
\end{proof}

\paragraph{Restricted assignment with laminar families.}
%

\begin{lemma}
 Restricted assignment with laminar families satisfies the extensibility condition if the following assumption holds for the instance
\begin{itemize}
\item (Monotonicity) $\forall i,i' \in A$, such that  $S_{i'} \subset S_i $ then $\max_{ j \in S_i \setminus S_i'}d_{jk} \leq \min_{j'\in S_i'} d_{j'k}$ for each type $k \in [d]$.
\end{itemize}
\end{lemma}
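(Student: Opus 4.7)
Proof plan: I will prove extensibility by an explicit construction. Given any allocation $X$ that is \bestfor $S$ and an agent $i \in S^c$, define $X'$ by setting $x'_{i'} = x_{i'}$ for every $i' \in S$ and allocating agent $i$'s requirement greedily from the unused supply in $S_{ik}$ for each type $k$, preferring machines of lower delay. Since the allocation to $S$ is unchanged, individual delays are trivially preserved: $\d_{i'} \cdot x'_{i'} = \d_{i'} \cdot x_{i'}$ for all $i' \in S$. Two nontrivial tasks remain: (a) showing that $S_{ik}$ has at least $r_{ik}$ units of unused supply in $X$, and (b) showing that $X'$ is \bestfor $S \cup \{i\}$.

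Since $d_{jk}$ depends only on the machine and type but not on the agent, the \bestfor $S$ problem decomposes by type. Within a type $k$, set $y_{jk} := \sum_{i' \in S} x_{i'jk}$. Any feasible allocation induces an aggregate $y$ with $y_{jk}\le 1$ and $\sum_{j \in T} y_{jk} \ge R_T^S := \sum_{i' \in S : S_{i'k} \subseteq T} r_{i'k}$ for every $T$ in the laminar family $\mathcal{L}_k$ of $\{S_{i'k}\}_{i' \in S}$ (augmented by $S_{ik}$); conversely, any such $y$ can be realized by a valid individual allocation via a standard flow argument on the laminar tree. The total delay $\delay_S(X)$ is a function of $y$ alone.

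The key structural step is: in any \bestfor $S$ allocation $X$, for every $T \in \mathcal{L}_k$, we have $\sum_{j \in T} y_{jk} = R_T^S$. I will prove this by an exchange argument exploiting monotonicity. Suppose $\sum_{j \in T} y_{jk} > R_T^S$; then some agent $i' \in S$ with $S_{i'k} \not\subseteq T$, hence $S_{i'k} \supsetneq T$ by laminarity, places positive mass on some machine $j \in T$. Monotonicity guarantees that every machine in $S_{i'k} \setminus T$ has delay at most the minimum delay in $T$. If any such machine has spare capacity, a small shift of $i'$'s mass from $j$ to it weakly decreases $\delay_S$, and a careful perturbation (or an assumption of generic tie-breaking) yields a strict decrease, contradicting \bestfor $S$. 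If instead $S_{i'k} \setminus T$ is fully saturated, an induction up the laminar tree, applied to the smallest $T' \supsetneq S_{i'k}$ with $\sum_{j \in T'} y_{jk} > R_{T'}^S$, reshuffles allocations and again contradicts optimality. Combined with feasibility of $S \cup \{i\}$, which requires $R_T^S + r_{ik} \le |T|$ for every $T \supseteq S_{ik}$ in the augmented family, this yields at least $r_{ik}$ unused supply in $S_{ik}$, so the construction of $X'$ is well-defined.

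For optimality of $X'$: let $Y$ be any allocation feasible for $S \cup \{i\}$ with aggregate $z^Y$, and let $\Delta z := z^Y - y$. By the structural lemma and feasibility of $Y$, for each $T \supseteq S_{ik}$ in the augmented family, $\sum_{j \in T} \Delta z_{jk} \ge r_{ik}$, while $\Delta z \ge -y$ and $\Delta z \le 1 - y$. The total delay of $Y$ equals $\delay_S(X) + \sum_{j,k} d_{jk} \Delta z_{jk}$, and our greedy construction places $\Delta z$'s positive mass exactly on the lowest-delay machines of $S_{ik}$ with spare capacity, which minimizes $\sum d_{jk} \Delta z_{jk}$: any attempt to free a lower-delay machine $j$ used by some $i' \in S$ in $X$ for use by $i$ forces $i'$ onto a machine of weakly greater delay, since by \bestfor $S$ every lower-delay machine in $S_{i'k}$ that was available was already used. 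The main obstacle is the exchange argument in the structural lemma, particularly the inductive step when saturation forces reshuffling across multiple laminar levels; an alternative route would invoke the polymatroid structure of the aggregate covering LP and its dual, but the exchange argument above seems to most directly exploit monotonicity.
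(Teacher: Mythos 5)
Your ``key structural step'' --- that \emph{every} allocation jointly optimal for $S$ satisfies $\sum_{j\in T} y_{jk} = R_T^S$ for every $T$ in the laminar family --- is false, and step (a) of your construction fails with it. Take a single type $k$ with machines $\{1,2,3\}$ and $d_{1k}=d_{2k}=d_{3k}$; let $S=\{i',i''\}$ with $S_{i'k}=\{1,2\}$, $S_{i''k}=\{1,2,3\}$, $r_{i'k}=r_{i''k}=1$. This is laminar and satisfies the monotonicity hypothesis (all the required inequalities hold with equality, which the lemma explicitly allows). Every feasible allocation has the same total delay, so in particular $x_{i'1}=x_{i''2}=1$ is jointly optimal for $S$, yet the aggregate on $T=\{1,2\}$ is $2>1=R_T^S$. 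Now let the entering agent $i$ have $S_{ik}=\{1\}$ and $r_{ik}=1$: the instance remains feasible for $S\cup\{i\}$, but there is zero unused supply in $S_{ik}$, so your greedy $X'$ is not even well-defined. Appealing to ``a careful perturbation or generic tie-breaking'' does not repair this: extensibility (Definition~\ref{def:extensibility}) is quantified over \emph{every} jointly optimal $X$, so you are not free to replace the given $X$ by a tie-broken one, and perturbing the delays proves the claim for a different instance. The inductive ``reshuffle up the laminar tree'' you mention at the end is a gesture in the right direction but is never carried out, and your exchange argument only establishes a \emph{weak} delay decrease, which is exactly not a contradiction in the tied case.

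The feature of the definition that rescues the argument, and which your construction declines to exploit, is that only the delays $\d_{i'}\cdot\x'_{i'}$ of agents in $S$ must be preserved --- not their allocations. In the example one must first move $i'$ from machine $1$ to machine $2$ and $i''$ from machine $2$ to machine $3$, both delay-preserving pushes, and only then allocate $i$. So the correct version of your plan is: first normalize $X$ itself by delay-preserving swaps (pushing each agent's mass as far ``inward'' in the laminar tree as monotonicity allows, until the aggregate identity $\sum_{j\in T} y_{jk}=R_T^S$ holds), then allocate $i$ greedily from the resulting slack. The paper's proof takes a different, non-constructive route: it constructs $X'$ by taking the ``next'' cheapest prefix of $S_{\hat i k}$, assumes for contradiction a strictly better allocation $\xx'$ for $S\cup\{\hat i\}$, normalizes $\xx'$ by swaps so $\hat i$ occupies the last machines of a prefix, and derives a contradiction by comparing prefix lengths. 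Both routes exploit delay-preserving swaps to handle ties; yours, as written, forbids itself from using them on $X$, and that is a genuine gap.
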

\begin{proof}
Since the requirement and variables for set of machines of each type is separate, it is enough to show this for $k=1$.
Consider an arbitrary set of agents $T$ and an agent outside of this set $\hat{i}$.
Let $T'=T\cup i$.
Let $(\x_i)_{i\in T}$ to be a feasible allocation that minimizes total delay, i.e., $\sum_{i\in T}\d . \x_i$.
For simplicity let $S_{\hat{i}} = \{ 1,2,\ldots,m\}$ such that $d_{1}\le \cdots \le d_m$. 
Consider two agent $i$ and $i'$. Note that if $j\in S_i$ and $j\in S_{i'}$ then $\forall j' \in S_i$ such that $d_{j'} \leq d_j$ we have $j\in S_{i'}$ and vice versa because $S_i$s form a laminar family and the monotonicity condition. 
Therefore, an optimal allocation allocates only a prefix of machines in $S_{\hat{i}}$.
Let's assign to $\hat{i}$ the next $r_{\hat{i}}$ machines with smallest delay in subset $S_{\hat{i}}$. 
We claim the new allocation $(\x_i)_{i\in S'}$ is minimizing the total delay, i.e., $\sum_{i\in S'}\d . \x_i$.
Let's prove the claim by contradiction.
Suppose the allocation is not optimal. Therefore there exists an optimal allocation $(\x'_i)_{i\in S' \cup \{\hat{i} \} }$ with less total delay.
Suppose allocation $\xx'$ has allocated machines 1 through $l'$ in set $S_{\hat{i}}$ and allocation $\xx$ has allocated machines 1 through $l$ in set $S_{\hat{i}}$.
Note that in allocation $\xx'$ we can assume agent $\hat{i}$ is getting the last machines in $\xx'$ among the machines that has been allocated
 in $S_{i}$ because if there exists agent $\bar{i}$ that has allocation on the right side of the first machine that is allocated to $\hat{i}$ then we can swap the allocations so the total delay wouldn't change. 
If $l=l'$ then the delay of agent $\hat{i}$ is the same in $\xx$ and $\xx'$. This is a contradiction because then it would conclude the total delay allocation 
of $(\x'_i)_{i\in S}$ is less than total delay of $(\x_i)_{i\in S}$ but we assumed $\xx$ is an optimal allocation for $S$.
There are two cases.\\
\textbf{Case1.} $l<l'$. Consider allocation $\xx'$ after removing agent $\hat{i}$. Since $\hat{i}$ is getting the last machines it is easy to see the remaining allocation is optimal for $S$. Since $l<l'$ there are agents that have less allocation in $S_i$ in $\xx$ compare to $\xx'$. 
Because of monotonicity assumption they have been allocated to machines with highest delay instead of available machines in $S_i$. This is a contradiction with the fact $\xx$ is a optimal allocation for $S$.\\
\textbf{Case2.} $l>l'$. This case is very similar to the last case. With the same argument we can argue that this case has contradiction with the fact $\xx'$ is an optimal allocation.
\end{proof}
%

\paragraph{Network Flows.}
Recall that in this setting agent $i$ wants to sent $r_i$ units of flow from $s$ to $t$ in a directed (graph) network where each edge has a capacity and cost per unit flow specified. Here edges are goods, and the covering constraints of agent $i$ has variable $f_{ie}$ for each edge $e$ representing her flow on edge $e$. The constraints impose flow conservation at all nodes except $s$ and $t$, and that net outgoing and incoming flow at $s$ and $t$ respectively is $r_i$. 

\begin{lemma}
A series-parallel network satisfies the extensibility condition.
\end{lemma}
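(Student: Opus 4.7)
The plan is to induct on the recursive construction of the series-parallel graph $G$ with terminals $s,t$: either (i) $G$ is a single edge $(s,t)$, or (ii) $G = G_1 \cdot G_2$ is the series composition of two strictly smaller series-parallel graphs glued at a midpoint $u$, or (iii) $G = G_1 \parallel G_2$ is the parallel composition of two strictly smaller series-parallel graphs sharing both terminals. In each case I will show that, given an allocation $X$ jointly optimal for $S$ and an agent $\hat{\imath}\notin S$, one can construct $X'$ jointly optimal for $S\cup\{\hat{\imath}\}$ that preserves the total delay $\d_i\cdot\x_i$ of every $i\in S$.

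Base case: if $G$ is a single edge $e$ with delay $d_e$, every feasible flow routes all demand through $e$, so the jointly optimal allocation simply sends $r_i$ units through $e$ for each $i\in S$, and one adjoins $r_{\hat{\imath}}$ units for $\hat{\imath}$ on $e$ without touching anyone else. Series case: any $s$--$t$ flow of value $F$ through $G_1\cdot G_2$ decomposes uniquely into an $s$--$u$ flow of value $F$ in $G_1$ and a $u$--$t$ flow of value $F$ in $G_2$, and costs add. Hence the restrictions $X^{(1)},X^{(2)}$ of $X$ are jointly optimal for $S$ in $G_1$ and $G_2$ respectively; applying the inductive hypothesis to each yields extensions $X'^{(1)},X'^{(2)}$ preserving each agent's per-piece delay, and gluing them back gives the desired $X'$ whose per-agent delay (the sum of the two pieces) equals $\d_i\cdot\x_i$ for every $i\in S$.

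Parallel case: let $\phi_k(F)$ be the min cost of routing $F$ units from $s$ to $t$ in $G_k$; each $\phi_k$ is piecewise-linear convex (a standard property of min-cost flow with linear edge costs and capacities), so $\phi(F)=\min_{F_1+F_2=F}\phi_1(F_1)+\phi_2(F_2)$ is also piecewise-linear convex. Let $F_k^*$ be the total flow of $S$ through $G_k$ under $X$, so $F_1^*+F_2^*=r(S)$ and $(F_1^*,F_2^*)$ achieves $\phi(r(S))$. By convexity and the equimarginal principle, the set of minimizers for $\phi(r(S)+r_{\hat{\imath}})$ contains some $(F_1^{**},F_2^{**})$ with $F_k^{**}\ge F_k^*$ (one obtains it by keeping $(F_1^*,F_2^*)$ and pushing the extra $r_{\hat{\imath}}$ units one marginal unit at a time to whichever of $G_1,G_2$ has the smaller right-derivative, which is always a valid optimal choice by convexity). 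Set $\hat{r}_k := F_k^{**}-F_k^*$, so $\hat{r}_1+\hat{r}_2=r_{\hat{\imath}}$. Inside each $G_k$, the restriction $X^{(k)}$ is jointly optimal for $S$ at flow value $F_k^*$, and by the inductive hypothesis applied to the strictly smaller series-parallel network $G_k$ with the new agent $\hat{\imath}$ requesting $\hat{r}_k$ units, we obtain $X'^{(k)}$ jointly optimal for $S\cup\{\hat{\imath}\}$ in $G_k$ with every $i\in S$ retaining the same $G_k$-delay $\d_i^{(k)}\cdot\x_i^{(k)}$. Recombining $X'^{(1)}$ and $X'^{(2)}$ yields an $s$--$t$ flow $X'$ whose cost is $\phi_1(F_1^{**})+\phi_2(F_2^{**})=\phi(r(S)+r_{\hat{\imath}})$, hence jointly optimal for $S\cup\{\hat{\imath}\}$ in $G$, and each $i\in S$ keeps her total delay unchanged.

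The main obstacle lies in the parallel step: the inductive argument requires that we split the new agent's demand so that the flow through each side \emph{only increases}; if instead the new optimum forced us to move existing flow from $G_1$ to $G_2$ or vice versa, we could not appeal to the inductive hypothesis to preserve every existing agent's per-side delay. This is exactly the content of the monotonicity claim $F_k^{**}\ge F_k^*$, which rests on convexity of $\phi_1,\phi_2$ and on selecting the correct representative among the (possibly non-singleton) set of optimal splits at the enlarged flow value---a point that needs to be handled with care since $\phi_k$ is only piecewise linear.
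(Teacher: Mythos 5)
Your proof is correct, but it takes a genuinely different and more self-contained route than the paper. The paper's argument is a one-liner: it observes that removing the allocated capacities of $(\x_i)_{i\in S}$ from the network and routing a min-cost flow of value $r_{\hat{\imath}}$ for $\hat{\imath}$ in the residual graph gives a min-cost flow of the larger total value, appealing to a known theorem that this greedy procedure is optimal in series-parallel networks (cited there). Your approach instead reproves this greedy property from scratch by structural induction on the series-parallel decomposition. The base and series cases are immediate; the substance is the parallel case, where you argue that the optimal split of flow between $G_1$ and $G_2$ can be chosen to be coordinate-wise monotone in the total flow value, which follows from convexity of the parametric min-cost flow functions $\phi_k$ (this is a parametric-monotonicity or Topkis-type fact for $\min_{F_1+F_2=F}\phi_1(F_1)+\phi_2(F_2)$). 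You correctly flag the subtlety that one must choose the right representative from a possibly non-singleton argmin set, and your ``water-filling'' selection handles it. The net effect is that you have essentially re-derived the cited greedy/min-cost-flow result for series-parallel graphs as part of the proof, trading the paper's brevity (at the price of an external citation) for a self-contained argument.
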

\begin{proof}
Consider an arbitrary set of agents $S$ and an agent outside of this set $\hat{i}$.
Let $(f_i)_{i\in S}$ to be a feasible min cost flow.
Let's remove the allocated capacities from the graph and allocate min cost flow of size $r_{\hat{i}}$
to agent $\hat{i}$ in the remaining graph. It is known that this greedy algorithm gives a min cost flow of size $\sum_{i\in S \cup \hat{i}} r_i$ \cite{SeriesParallel}.
\end{proof}

We can consider independent copies of any of these special cases above. E.g., each
agent might want some machines for job processing, as well as send some flows through a network, but have
a common budget for both together. Note that it would remain extensible because each copy is independent and extensible itself.

\section{Equilibrium Characterization}\label{sec:eqchar}
In this section we characterize equilibria of most general market instances. 
Recall that allocation $\x_i$ of agent $i$ has to satisfy its covering constraints \ref{eq:CC}. 
Next we derive sufficient conditions for prices $\pp$ and allocation $\xx$ to be an equilibrium.
As we discussed in Section \ref{sec:model}, given prices $\pp$ the optimal bundle of each agent $i$ is captured by

\[\tag{OB-LP($i$)}
\begin{array}{lll}
\min & \sum_{j\in \CG} d_{ij} x_{ij}\\
s.t. & \sum_{j\in \CG} a_{ijk} x_{ij} \geq r_{ik} & \forall k\in \CC\\
	& \sum_{j\in \CG} p_{j} x_{ij} \leq m_i\\
	& x_{ij}\ge 0,\ \forall j \in \CG
\end{array}
\]

It is well know that solutions of a linear program are exactly the ones that satisfy its complementary slackness conditions \cite{Sch-book}.
Let $\beta_{ik}$ and $\gamma_i$ be the dual variables of first and second of constraints in OB-LP($i$). 
Then, corresponding complementary slackness conditions are, 

\begin{equation}\label{eq:ob-csc}
\begin{array}{llll}
\forall k \in \CC: & \sum_{j\in G} a_{ijk} x_{ij} \geq r_{ik} & \perp & \beta_{ik} \geq 0 \\
\forall j \in \CG: & d_{ij} \geq \sum_{k} a_{ijk} \beta_{ik} - \gamma_i p_j & \perp&  x_{ij} \geq 0	\\
& \sum_{j\in G} p_{j} x_{ij} \leq m_i & \perp & \gamma_i \geq 0 \\
\end{array}
\end{equation}

Here $\perp$ symbol between two inequalities means that both inequalities should be satisfied, and at least one of them has to hold with
equality. Next, under a mild condition of {\em sufficient demand} (see Definition \ref{def:ED}), we show that w.r.t. equilibrium prices $\gamma_i>0,\ \forall i \in A$. This would also imply that each agent spends all of its budget at equilibrium (thereby maximizing profit of the seller). 
We note that the following lemma is not needed for Theorem \ref{thm:lpl2eq}, but will be used to show the reverse result in Lemma \ref{lem:eq2lpl}, namely equilibrium prices and allocation gives solutions of the $LP(\ll)$ and $DLP(\ll)$ for appropriately chosen parameter vector $\ll$. 

\begin{lemma}\label{lem:money}
Let $(\xx^*,\pp^*)$ be an equilibrium of market $\CM$ satisfying sufficient demand, and for all $i\in A$ let $(\gamma^*_i,\B^*_i)$ be the corresponding dual variables of OB-LP$(i)$ at $\pp^*$. Then $\gamma^*_i>0,\ \forall i \in A$. In other words, every agent spends all her money at any equilibrium. 
\end{lemma}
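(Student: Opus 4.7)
The plan is to argue by contradiction: suppose $\gamma^*_{i} = 0$ for some agent $i$, and derive a contradiction from the sufficient-demand hypothesis together with the supply constraints that hold at equilibrium.

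The first step is to show that when $\gamma^*_i = 0$, the allocation $\x^*_i$ is in fact optimal for the covering LP \ref{eq:coveringLP} (i.e., the agent's problem with the budget constraint dropped). Substituting $\gamma^*_i = 0$ into the complementary slackness conditions \eqref{eq:ob-csc}, the reduced-cost inequality becomes $d_{ij} \ge \sum_k a_{ijk}\beta^*_{ik}$, complementary to $x^*_{ij} \ge 0$; together with the requirement conditions, these are precisely the optimality conditions of \ref{eq:coveringLP} with dual multipliers $\B^*_i$. Equivalently: $\B^*_i$ is dual feasible for \ref{eq:coveringLP}, and its objective value $\sum_k r_{ik}\beta^*_{ik}$ equals $\sum_j d_{ij} x^*_{ij}$ (by CS), so by LP duality $\x^*_i$ is optimal for \ref{eq:coveringLP}.

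Next, I would apply the sufficient demand hypothesis (Definition \ref{def:ED}) to the optimal solution $\x^*_i$ of \ref{eq:coveringLP}: this yields some good $j^*$ with $x^*_{ij^*} > 1$. However, $\xx^*$ is an equilibrium allocation and hence \supplyrespecting, so $\sum_{i'\in A} x^*_{i'j^*} \le 1$; combined with non-negativity, this forces $x^*_{ij^*} \le 1$, a contradiction. Therefore $\gamma^*_i > 0$ for every $i$. The second claim then follows immediately from the complementary slackness pairing $\sum_{j} p^*_j x^*_{ij} \le m_i \perp \gamma_i \ge 0$, which forces the budget constraint to hold with equality.

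The main obstacle is the first step: cleanly deducing that $\gamma^*_i = 0$ makes $\x^*_i$ optimal for the \emph{unconstrained} covering LP rather than merely satisfying the KKT conditions of the budgeted one. LP duality handles this because the only term in the OB-LP dual objective that involves $m_i$ is $-\gamma_i m_i$, which vanishes when $\gamma^*_i = 0$; once this reduction is made, sufficient demand and the supply constraint produce the contradiction automatically.
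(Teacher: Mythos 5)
Your proof is correct and takes essentially the same route as the paper's: argue by contradiction from $\gamma^*_i = 0$, observe that the complementary-slackness conditions of OB-LP$(i)$ then reduce to those of the unbudgeted LP \ref{eq:coveringLP}, so $\x^*_i$ is optimal for \ref{eq:coveringLP}, and invoke sufficient demand together with the \supplyrespecting constraint to get a contradiction. The paper phrases the middle step as ``the CS conditions are also satisfied at prices $\pp^0 = \zeros$'' (noting OB-LP at $\pp^0$ is just \ref{eq:coveringLP}), whereas you argue LP-optimality for \ref{eq:coveringLP} directly, but these are the same observation.
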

\begin{proof}
Since $\xx^*$ is an equilibrium allocation, no good is over demanded. Therefore, we know that $\x^*_i\le \ones,\ \forall i\in A$. Let $\pp^0$ be all zero price vector. By {\em sufficient demand} condition $\x^*_i$ cannot be an {\em optimal allocation} of agent $i$ at prices $\pp^0$. 
We will derive a contradiction to this if $\gamma^*_i=0$ for some agent $i$. 

To the contrary suppose for agent $i\in A$, we have $\gamma_i=0$. 
Since $\x^*_i$, $\gamma^*_i$ and $\B^*_i$ satisfy complementary slackness conditions (\ref{eq:csc}) of OB-LP($i$) at prices $\pp^*$, it is easy to see that they also satisfy these conditions at prices $\pp^0$. 
This would imply that $\x^*_i$ is an optimal allocation of agent $i$ at prices $\pp^0$, a contradiction. 
\end{proof}

Note that due to Lemma \ref{lem:money}, it is no loss of generality to assume $\gamma_i>0,\ \forall i\in A$. 
We simplify the conditions with change of variables $\lambda_i=1/\gamma_i$ and
$\alpha_{ik}=\beta_{ik}/\gamma_i$, and write conditions for all the agents together.

\begin{equation}\label{eq:csc}
\begin{array}{llll}
\forall i \in A,\ \forall k \in \CC: & \sum_{j\in G} a_{ijk} x_{ij} \geq r_{ik} & \perp &  \alpha_{ik} \geq 0 \\
\forall i\in A, \forall j \in \CG: & \lambda_i d_{ij} \geq \sum_{k} a_{ijk} \alpha_{ik} - p_j & \perp & x_{ij} \geq 0	\\
\forall i \in A: & \sum_{j\in G} p_{j} x_{ij} = m_i & \mbox{and} & \lambda_i > 0 \\
\end{array}
\end{equation}

At market equilibrium prices, every agent should get an optimal bundle, and market should clear, i.e., \ref{eq:supply} 
are satisfied and every good with positive price should be fully sold (see Section \ref{sec:model} for the formal definition of market
equilibrium). Since optimal allocations at given prices are solutions of
OB-LP($i$) for each $i$, they must satisfy (\ref{eq:csc}). 
This follows from the fact that primal-dual feasibility and complementary slackness conditions are necessary and sufficient for solutions of a linear program. 
We get the following characterization. 

\begin{lemma}\label{lem:eqchar}
If $(\hll,\hxx,\hpp,\haa)$ satisfies (\ref{eq:csc}), and $\forall j\in G,\
\sum_{i\in A} \hx_{ij} \le 1 \ \perp\ \hp_j \ge 0$, then $(\hxx,\hpp)$
constitutes an equilibrium allocation and prices.
\end{lemma}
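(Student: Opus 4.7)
The plan is to directly verify the three defining conditions of a market equilibrium, namely: (a) each $\hx_i$ is an optimal bundle for agent $i$ at prices $\hpp$; (b) the allocation is supply-respecting; and (c) any good with positive price is fully allocated (market clearing). The second hypothesis, $\sum_{i\in A}\hx_{ij}\le 1 \perp \hp_j\ge 0$, literally says (b) and (c) together, so I need only establish (a) from the conditions (\ref{eq:csc}).

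For (a), I will undo the change of variables used to derive (\ref{eq:csc}). That is, I will show that $\hx_i$, together with the recovered dual variables $\hgm_i := 1/\hl_i$ and $\hbt_{ik} := \ha_{ik}/\hl_i$, satisfies the primal--dual feasibility and complementary slackness conditions (\ref{eq:ob-csc}) for OB-LP$(i)$ at prices $\hpp$. Since the KKT conditions are necessary and sufficient for optimality of a linear program, this implies $\hx_i$ is an optimal bundle, giving (a). The recovery is well-defined because (\ref{eq:csc}) asserts $\hl_i>0$ for every agent.

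The verification is line-by-line translation of (\ref{eq:csc}):
\begin{itemize}
\item The first line of (\ref{eq:csc}) gives $\sum_j a_{ijk}\hx_{ij}\ge r_{ik} \perp \ha_{ik}\ge 0$, which is the same relation with $\hbt_{ik}=\ha_{ik}/\hl_i$ in place of $\ha_{ik}$ (since $\hl_i>0$ preserves both the sign and the complementary slackness tie).
\item Dividing the second line $\hl_i d_{ij}\ge \sum_k a_{ijk}\ha_{ik}-\hp_j$ by $\hl_i>0$ yields $d_{ij}\ge \sum_k a_{ijk}\hbt_{ik}-\hgm_i\hp_j$, again complementary with $\hx_{ij}\ge 0$.
\item The third line gives $\sum_j \hp_j\hx_{ij}=m_i$, which implies both $\sum_j \hp_j\hx_{ij}\le m_i$ and its complementarity with $\hgm_i=1/\hl_i>0$.
\end{itemize}
Thus $(\hx_i,\hgm_i,\hbt_i)$ satisfies all of (\ref{eq:ob-csc}), so $\hx_i$ solves OB-LP$(i)$.

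There is no genuine obstacle here; the lemma is essentially a bookkeeping statement that records the effect of the substitutions $\gamma_i=1/\lambda_i$ and $\beta_{ik}=\alpha_{ik}/\lambda_i$ on the KKT system of OB-LP$(i)$, together with the observation that market clearing is exactly what the complementarity on $\sum_i\hx_{ij}\le 1$ encodes. The only point that must be explicitly handled is the strict positivity $\hl_i>0$ (built into (\ref{eq:csc})), which both legitimizes the division by $\hl_i$ and guarantees that the budget constraint is tight, as needed for the equivalence with (\ref{eq:ob-csc}).
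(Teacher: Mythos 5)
Your proof is correct and follows exactly the approach the paper has in mind: the paper does not give an explicit proof of this lemma, instead relying on the preceding discussion about the equivalence of the KKT conditions (\ref{eq:ob-csc}) for OB-LP$(i)$ with (\ref{eq:csc}) under the substitution $\gamma_i=1/\lambda_i$, $\beta_{ik}=\alpha_{ik}/\lambda_i$, and you supply exactly that verification together with the observation that the added complementarity hypothesis encodes supply-respecting and market-clearing. Your line-by-line translation, including the note that $\hl_i>0$ is needed both to legitimize the division and to make $\hgm_i\geq 0$ dual-feasible, is the content the paper leaves implicit.
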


Motivated from Lemma \ref{lem:eqchar} we next define parameterized LP that captures complementary slackness conditions of OB-LP$(i)$ for all the agents together. Suppose we are given $\lambda_i$'s, let us define the following linear program parameterized by the $\ll$ vector, that we call $LP(\ll)$,
and its dual $DLP(\ll)$ (same as (\ref{eq:lpl}) defined in Section \ref{sec:general}):

\[
\begin{array}{cc}
LP(\ll) & DLP(\ll)\\
\begin{array}{lll}
\min : & \sum_{i,j} \lambda_i d_{ij} x_{ij}\\
 s.t. & \sum_{j\in G} a_{ijk} x_{ij} \geq r_{ik}  & \forall i\in A, k\in C \\ 
  & \sum_{i\in A} x_{ij} \le 1& \forall j\in G \\
 & x_{ij} \ge 0& \forall i\in A,j\in G\\
\end{array}
\ \ \ \ & \ \ \ \ 
\begin{array}{lll}
\max:&  \sum_{i\in A,k\in C} r_{ik} \alpha_{ik}  - \sum_{j} p_j \\
 s.t. & \lambda_i d_{ij} \geq \sum_k a_{ijk} \alpha_{ik} - p_j  & \forall i\in A, j\in G  \\
 & \alpha_{ik}, p_j \geq 0 & \forall j\in G, k\in C
\end{array}
\end{array}
\]

Using the equilibrium characterization of Lemma \ref{lem:eqchar} together with complementary slackness conditions between constraints
of $LP(\ll)$ and $DLP(\ll)$, next we show that solutions of $LP(\ll)$ and $DLP(\ll)$ exactly capture the equilibria if given appropriate value of
parameter vector $\ll$. 

\lpleq*
\begin{proof}
It suffices to show that $(\ll,\haa,\hpp,\hxx)$ satisfy conditions of Lemma \ref{lem:eqchar}. 
Out of these the last one of (\ref{eq:csc}) is already assumed in the hypothesis. For the remaining conditions, 
let us write the complementary slackness conditions for $LP(\ll)$.
\begin{equation}\label{csc_3}
\sum_{j} a_{ijk} x_{ij} \geq r_{ik}\ \ \  \perp\ \ \  \alpha_{ik}\geq 0, \ \  \forall i,k
\end{equation}
\begin{equation}\label{csc_1}
\lambda_i d_{ij} \geq \sum_k a_{ijk} \alpha_{ik} - p_j\ \ \ \perp \ \ \ x_{ij}\geq 0, \ \  \forall i,j
\end{equation}
\begin{equation}\label{csc_2}
\sum_{i} x_{ij} \le 1\ \ \  \perp\ \ \  p_j \geq 0, \ \  \forall j
\end{equation}

Conditions (\ref{csc_3}) and (\ref{csc_1}) are exactly the first two conditions of (\ref{eq:csc}), 
and (\ref{csc_2}) ensures market clearing. Thus proof follows using Lemma \ref{lem:eqchar}. 
\end{proof}

Next we show converse of the above theorem under {\em sufficient demand} condition. 

\begin{lemma}\label{lem:eq2lpl}
If market $\CM$ satisfies {\em sufficient demand} and $(\hxx,\hpp)$ is its equilibrium, then $\hxx$ and $(\haa,\hpp)$ gives 
solution of $LP(\hll)$ and $DLP(\hll)$ respectively for some $\hll$ and $\haa$. 
\end{lemma}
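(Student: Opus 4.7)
The plan is to start from the equilibrium $(\hxx,\hpp)$ and recover the parameters $\hll$ and dual variables $\haa$ by invoking LP duality on each agent's optimal bundle problem. Concretely, since $\hx_i$ is an optimal solution of OB-LP$(i)$ at prices $\hpp$ (this is exactly the ``optimal allocation'' clause of the equilibrium definition), strong LP duality yields dual multipliers $\gamma^*_i \geq 0$ (for the budget constraint) and $\beta^*_{ik} \geq 0$ (for the covering constraints) such that $(\hx_i,\gamma^*_i,\B^*_i)$ satisfies the complementary slackness system \eqref{eq:ob-csc}. Under sufficient demand, Lemma \ref{lem:money} guarantees $\gamma^*_i > 0$ for every agent $i$, so we may define
\[
\hat{\lambda}_i := \frac{1}{\gamma^*_i}, \qquad \hat{\alpha}_{ik} := \frac{\beta^*_{ik}}{\gamma^*_i}.
\]

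Next I would verify that $\hxx$ is primal feasible for $LP(\hll)$ and that $(\haa,\hpp)$ is dual feasible for $DLP(\hll)$. The covering constraints $\sum_j a_{ijk}\hx_{ij} \ge r_{ik}$ hold because $\hx_i$ is feasible for OB-LP$(i)$, and the supply constraints $\sum_i \hx_{ij} \le 1$ hold by the market-clearing part of the equilibrium definition. Dual feasibility follows by taking the second line of \eqref{eq:ob-csc}, namely $d_{ij} \ge \sum_k a_{ijk}\beta^*_{ik} - \gamma^*_i \hp_j$, and dividing through by $\gamma^*_i > 0$ to obtain $\hat{\lambda}_i d_{ij} \ge \sum_k a_{ijk}\hat{\alpha}_{ik} - \hp_j$; non-negativity of $\hat{\alpha}_{ik}$ and $\hp_j$ is inherited from $\beta^*_{ik}\ge 0$ and the equilibrium prices $\hp_j \ge 0$.

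The final step is to check the three complementary slackness conditions \eqref{csc_3}, \eqref{csc_1}, \eqref{csc_2} that certify joint primal--dual optimality for $LP(\hll)$ and $DLP(\hll)$. Conditions \eqref{csc_3} and \eqref{csc_1} translate directly from the first two lines of \eqref{eq:ob-csc}: whenever a covering constraint is slack, $\beta^*_{ik}=0$ hence $\hat{\alpha}_{ik}=0$; whenever $\hx_{ij} > 0$, the corresponding inequality in \eqref{eq:ob-csc} is tight, and after dividing by $\gamma^*_i$ so is the inequality $\hat{\lambda}_i d_{ij} \ge \sum_k a_{ijk}\hat{\alpha}_{ik} - \hp_j$. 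Condition \eqref{csc_2} is precisely the market-clearing clause of the equilibrium definition: if $\hp_j > 0$ then $\sum_i \hx_{ij}=1$, and otherwise $\sum_i \hx_{ij}\le 1$ is automatic.

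I do not expect a serious obstacle here: the argument is essentially a rescaling of the per-agent KKT conditions so that the budget multiplier becomes $1$, combined with one use of market clearing. The only point that genuinely requires work beyond bookkeeping is showing $\gamma^*_i > 0$, so that the division by $\gamma^*_i$ is legitimate and the $\hat{\lambda}_i$'s are finite --- and that is exactly what Lemma \ref{lem:money} provides via the sufficient demand hypothesis. Without sufficient demand, an agent could be satiated at zero prices, which would force $\gamma^*_i = 0$ and make the reduction to $LP(\hll)$ break down; this explains why the hypothesis cannot be dropped.
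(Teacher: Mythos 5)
Your proof is correct and takes the same approach as the paper: extract the per-agent dual multipliers $(\gamma^*_i,\beta^*_{ik})$ from OB-LP$(i)$, invoke Lemma \ref{lem:money} to get $\gamma^*_i>0$, rescale to define $\hll$ and $\haa$, and then verify that the resulting triple satisfies the complementary slackness conditions for $LP(\hll)$/$DLP(\hll)$. You have simply spelled out the feasibility and CS checks that the paper compresses into ``it is easy to see using (\ref{eq:csc}).''
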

\begin{proof}
Due to optimal bundle condition of equilibrium we know that $\hhx_i$ is an optimal solution of OB-LP($i$) at prices $\hpp$. Let
$\hbt_{ik}$ and $\hgm_i$ be the value of corresponding dual variable in
(\ref{eq:ob-csc}). Using Lemma \ref{lem:money} we have $\hgm_i>0,\ \forall i$.
Then it is easy to see using (\ref{eq:csc}) that for $\hl_i=\nfrac{1}{\hgm_i},\ \forall i\in A$ and
$\ha_{ik} = \frac{\hbt_{ik}}{\hgm_i},\ \forall i\in A, k\in C$, $(\hxx,\haa,\hpp)$ satisfy
complementary slackness conditions of $LP(\hll)$ and $DLP(\hll)$. Therefore the proof follows.
\end{proof}

Using the equilibrium characterization given by Theorem \ref{thm:lpl2eq} crucially, we design a polynomial-time algorithm to find
an equilibrium for markets with extensibility (Definition \ref{def:extensibility}) in Section \ref{sec:general}.

\section{Missing Proofs and Details of Section 4}\label{sec:fullproofs}

The proof of first theorem of Section \ref{sec:general}, namely Theorem \ref{thm:lpl2eq} is in Appendix \ref{sec:eqchar} where we characterize
market equilibria.
Next we give proof of Lemma \ref{lem:lp-prop}. For this we basically use the fact that any pair of primal, dual solutions of a linear program has to
satisfy complementary slackness \cite{Sch-book}. Recall the $LP(\ll)$ and $DLP(\ll)$ of (\ref{eq:lpl}). 

\begin{equation}
\begin{array}{cc}
LP(\ll): & DLP(\ll)\\
\begin{array}{ll}
\min: & \sum_i \li \sum_j d_{ij} x_{ij} \\
s.t. & \sum_j a_{ijk} x_{ij} \ge r_{ik},\ \forall k\\
& \sum_i x_{ij} \le 1,\ \forall i\\
& x_{ij} \ge 0,\ \forall (i,j)
\end{array}
\ \ \ \ \ 
& 
\ \ \ \ \ 
\begin{array}{ll}
\max: & \sum_{i,k}r_{ik} \alpha_{ik} - \sum_j p_j \\
s.t. & \lambda_i d_{ij} \ge \sum_k a_{ijk} \alpha_{ik} - p_j,\ \forall(i,j)\\
& p_j \ge 0,\ \forall j;\ \ \ \ \alpha_{ik} \ge 0,\ \forall (i,k)
\end{array}
\end{array}
\end{equation}

For any given $\ll>0$, the optimal solutions of the $LP(\ll)$, namely $\xx$, and $DLP(\ll)$, namely $(\aal,\pp)$ has to satisfy the following complementary slackness conditions. 

\begin{equation}
\begin{array}{llcl}
\forall (i,k): & \alpha_{ik} \ge 0 & \perp  & \sum_j a_{ijk} x_{ij}\ge r_{ik}\\
\forall j: & p_j \ge 0 & \perp & \sum_i x_{ij} \le 1\\
\forall (i,j): & x_{ij} \ge 0 & \perp & \lambda_i d_{ij} \ge \sum_k a_{ijk} \alpha_{ik} - p_j
\end{array}
\end{equation}

Recall that the $\perp$ symbol between two inequalities means that both
inequalities should be satisfied, and at least one of them has to hold with
equality. Also recall that for subset $S\subseteq A$, we defined $\delay_S(\xx,\pp)=\sum_{i\in S, j} d_{ij} x_{ij}$ and $\pay_S(\xx,\pp)=\sum_{i\in S,j} x_{ij} p_j$, and for ease of notation we use $\delay_i$ when $S=\{i\}$ and similarly $\pay_i$. Using this we first show a relation between $\delay_i$ and $\pay_i$
next. 

\begin{lemma}\label{lem:csc}
For a given $\ll>0$ if $\hxx$ and $(\haa,\hpp)$ are optimal solutions of $LP(\ll)$ and $DLP(\ll)$ respectively, while $\xx'$ and $(\aal',\pp')$ are feasible in $LP(\ll)$ and $DLP(\ll)$ then, such that, 
\[
\begin{array}{ll}
\forall i: & \lambda_i \delay_i(\hxx) = \sum_{k} r_{ik} \ha_{ik} - \pay_i(\hxx,\hpp)\\
\forall i: & \lambda_i \delay_i(\xx') \ge \sum_k r_{ik} \alpha'_{ik} - \pay_i(\xx',\pp')
\end{array}
\]
\end{lemma}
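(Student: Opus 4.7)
The plan is to prove both statements by the standard linear-programming manipulation: sum the dual feasibility constraint weighted by primal variables, then use the primal covering constraints, with equalities provided by complementary slackness whenever we are at an optimum.

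First I would establish the inequality (the feasibility version). Start from the dual feasibility constraint $\lambda_i d_{ij} \ge \sum_k a_{ijk} \alpha'_{ik} - p'_j$ of $DLP(\ll)$. Since $x'_{ij}\ge 0$, multiply through by $x'_{ij}$ and sum over $j \in G$, to obtain
\[
\lambda_i \sum_j d_{ij} x'_{ij} \;\ge\; \sum_k \alpha'_{ik}\Bigl(\sum_j a_{ijk} x'_{ij}\Bigr) \;-\; \sum_j p'_j x'_{ij}.
\]
The left side is $\lambda_i \delay_i(\xx')$ and the last term is $\pay_i(\xx',\pp')$. Finally, because $\alpha'_{ik}\ge 0$ and $\xx'$ satisfies the covering constraint $\sum_j a_{ijk} x'_{ij}\ge r_{ik}$, each $\alpha'_{ik}(\sum_j a_{ijk} x'_{ij}) \ge \alpha'_{ik} r_{ik}$, giving $\lambda_i \delay_i(\xx') \ge \sum_k r_{ik}\alpha'_{ik} - \pay_i(\xx',\pp')$, as required.

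Next I would prove the equality for the optimal pair $(\hxx, \haa, \hpp)$. The only change is that the two weak steps above now hold with equality by complementary slackness. Namely, CS of $LP(\ll)$--$DLP(\ll)$ gives: (i) whenever $\hx_{ij}>0$, the dual constraint is tight, $\lambda_i d_{ij} = \sum_k a_{ijk} \ha_{ik} - \hp_j$, and (ii) whenever $\ha_{ik}>0$, the primal constraint is tight, $\sum_j a_{ijk} \hx_{ij} = r_{ik}$. Consequently, multiplying the dual constraint by $\hx_{ij}$ and summing over $j$ is an equality, and replacing $\sum_j a_{ijk}\hx_{ij}$ by $r_{ik}$ inside the sum over $k$ is also an equality because any $k$ where the substitution would fail has $\ha_{ik}=0$. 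Reading off the result gives exactly $\lambda_i \delay_i(\hxx) = \sum_k r_{ik} \ha_{ik} - \pay_i(\hxx,\hpp)$.

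Both statements are essentially one-line consequences of weak/strong LP duality applied at the level of a single agent; the only subtlety is to keep track that the per-agent identity decomposes because the objective $\sum_i \lambda_i \sum_j d_{ij} x_{ij}$ is separable across agents, and the dual variables $\alpha_{ik}$ and the primal-constraint structure are likewise indexed by $i$. I do not anticipate any real obstacle, since the only ingredients are primal/dual feasibility and (for the equality) complementary slackness, which hold by assumption.
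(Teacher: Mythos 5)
Your proposal is correct and follows essentially the same argument as the paper's proof: multiply the dual feasibility constraints by the nonnegative primal variables and sum, then bound using the primal covering constraints weighted by the nonnegative duals $\alpha'_{ik}$, and observe that complementary slackness turns both inequalities into equalities at a primal-dual optimal pair. The decomposition by agent $i$ is handled the same way in both, so there is no substantive difference.
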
 
\begin{proof}
For feasible solutions we have $\forall (i,k),\ \alpha'_{ik} \ge 0$ and $\sum_j a_{ijk} x'_{ij}\ge r_{ik}$. Multiplying the two and
summing over $k$ for each $i$ gives, $\sum_k \alpha'_{ik} \sum_{j} a_{ijk} x'_{ij} \ge \sum_k \alpha'_{ik} r_{ik},\ \forall i$. Another pair of constraints are
$\forall (i,j),\ x'_{ij} \ge 0$ and $ \lambda_i d_{ij} \ge \sum_k a_{ijk} \alpha'_{ik} - p'_j$. Multiplying the two gives,
\[
\begin{array}{lcl}
\forall i:\ \lambda_i \delay(\xx') & =&  \lambda_i \sum_j x'_{ij} d_{ij}\\
& \ge&  \sum_j x'_{ij} (\sum_k a_{ijk} \alpha'_{ik}) - \sum_j p'_j x'_{ij} \\
& = & \sum_k \alpha'_{ik} \sum_j a_{ijk} x'_{ij} - \pay_i(\xx',\pp')\\
& \ge&  \sum_k \alpha'_{ik} r_{ik} - \pay_i(\xx',\pp')
\end{array}
\]
This gives the second part. Since optimal solutions satisfy complementary slackness, all inequalities satisfy with equality in the above for
$(\hxx,\haa,\hpp)$ and we get the first part.
\end{proof}

Recall notation $[n]=\{1,\dots,n\}$ for any positive integer $n$, and Definition \ref{def:bestfor} of an allocation $\xx$ being $\bestfor$ a subset of agents $S$.

\lpprop*
\begin{proof}
Note that, $T_1=A$, $T_d=S_d$, and $S_g=T_g\setminus T_{g+1},\ \forall g\in[d-1]$.
For the first part, let us rewrite the objective function of $LP(\ll)$ as 
$$\sum_{g=2}^d (\lambda(S_g)-\lambda(S_{g-1})) \sum_{i\in T_g, j} d_{ij} x_{ij}+ \lambda(S_1) \sum_{i\in T_1, j} d_{ij} x_{ij}$$ 
Since $\CM$ satisfies extensibility, we can construct a minimum delay allocation $\xx^*$ where $S_d$ gets the best, then next
best to $S_{d-1}$, and so on to finally $S_1$. In other words, $\xx^*$ is \bestfor\ $T_g,\ \forall g\le d$. 
Let $\xx'$ be an arbitrary optimal solution of $LP(\ll)$, not constructed as $\xx^*$. Then,
\[
\forall 1\le g\le d,\ \sum_{i\in T_g, j} d_{ij} x^*_{ij} \le \sum_{i\in T_g, j} d_{ij} x'_{ij},
\]
To the contrary suppose at least one is strict inequality. Then, since $\lambda(S_d)>\lambda(S_{d-1})>\dots >\lambda(S_1)>0$ we have
\[
\begin{array}{ll}
\lambda(S_1) \sum_{i\in T_1, j} d_{ij} x^*_{ij} + \sum_{g=2}^d
(\lambda(S_g)-\lambda(S_{g-1})) \sum_{i\in T_g, j} d_{ij} x^*_{ij} < \\ 
\lambda(S_1) \sum_{i\in T_1, j} d_{ij} x'_{ij} + \sum_{g=2}^d
(\lambda(S_g)-\lambda(S_{g-1})) \sum_{i\in T_g, j} d_{ij} x'_{ij}
\end{array}
\]
A contradiction to $\xx'$ being optimal solution of $LP(\ll)$. Since any minimum delay allocation for a subset gives the same total delay,
the first part follows.

The second part essentially follows by applying Lemma \ref{lem:csc} twice. For optimal pair $\xx$ and $(\haa,\hpp)$, and pair $\xx$
and $(\aal',\pp')$ we get, $\forall g$
\[
\lambda(S_g) \delay_{S_g}(\xx) = \sum_{i \in S_g,k} r_{ik} \ha_{ik} - \pay_{S_g}(\xx,\hpp),\ \ \ \mbox{ and }\ \ \ \lambda(S_g)
\delay_{S_g}(\xx) = \sum_{i \in S_g,k} r_{ik} \alpha'_{ik} - \pay_{S_g}(\xx,\pp')
\]
implying, $\pay_{S_g}(\xx,\hpp) = \pay_{S_g}(\xx,\pp')$ if and only if $\sum_{i \in S_g,k} r_{ik} \ha_{ik}=\sum_{i \in S_g,k} r_{ik} \alpha'_{ik}$.

For the third part, applying Lemma \ref{lem:csc} to each agent $i \in S$ and then taking the sum gives $\lambda(S_g)
\delay_S(\hxx) = \sum_{i \in S, k} r_{ik} \alpha_{ik} - \pay_S(\hxx,\pp)$ and $\lambda(S_g)
\delay_S(\xx') = \sum_{i \in S, k} r_{ik} \alpha_{ik} - \pay_S(\xx',\pp)$. 
Combining the two, we get $\lambda(S_g)(\delay_S(\hxx) -\delay_S(\xx')) = \pay_S(\xx',\pp) - \pay_S(\hxx,\pp)$, thereby the lemma follows.
\end{proof}

Next we show existence of a primal-dual solution where every agent spends exactly her money if $BB$ (budget balance) and $SC$ (subset condition) of Definition \ref{def:feas} are satisfied at the current value of $\ll$ and $\pp$. Recall the definition of \feasible pair $(\ll,\pp)$ (Definition \ref{def:feasible}).

\feas*
\begin{proof}
	Let $\tau_i(\xx)=m_i-\pay_i (\xx, \pp^*)$. Without loss of generality suppose agents are ordered such that 
	$\tau_1(\xx)\geq \tau_2(\xx) \geq \cdots \geq \tau_{n}(\xx)$. Define $T_k(\xx) = \sum_{1 \leq i \leq k} \tau_{i}(\xx)$.
	Let's define the following potential function for every allocation $\xx$.
	The potential function is $f(\xx)=\sum_{k} T_k(\xx)$.

	Let $\xx^*$ be an optimal solution of $LP(\ll^*)$ that minimizes $f$.
	Order the agents such that $\tau_1(\xx^*) \geq \tau_2(\xx^*) \geq \cdots \geq \tau_n(\xx^*)$.
	Note that $T_{n}(\xx^*) = 0$ (condition $BB$).
	Therefore, $T_{i}(\xx^*) \geq 0 \ \forall i$ since $\tau_i$s are in decreasing order.
	Therefore, $f(\xx^*)\geq 0$.
	If $f(\xx^*)=0$ then it must be the case that $T_i(\xx^*)=0,\ \forall i\in [n]$ and $\tau_i(\xx^*)=0$, $\forall i\in[n]$. This gives $m_i-\pay_i(\xx^*,\pp^*)=0,\ \forall i$ as we desired.

	To the contrary suppose $f(\xx^*)>0$. Let $\hat{\xx}$ be an optimal allocation of $LP(\ll^*)$ where delay of agent $1$ is minimum, then of agent $2$, and so on, finally of agent $n$. 
	\begin{claim}
	$\sum_{i\leq r} \tau_i (\hat{\xx}) \leq 0, \ \forall r \in [n]$.
	\end{claim}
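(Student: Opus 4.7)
The plan is to partition $R=\{1,\ldots,r\}$ across segments as $R_g := R\cap S_g$ and to establish $\pay_{R_g}(\hat{\xx},\pp^*)\ge m(R_g)$ segment-by-segment. Summing these inequalities then yields $\pay_R(\hat{\xx},\pp^*)\ge m(R)$, which is equivalent to $\sum_{i\le r}\tau_i(\hat{\xx})=m(R)-\pay_R(\hat{\xx},\pp^*)\le 0$, as required.

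For a fixed segment $g$, the subset condition $\SC$ applied to $S_g\setminus R_g$ yields an optimal allocation $\xx'$ of $LP(\ll^*)$ with $\pay_{S_g\setminus R_g}(\xx',\pp^*)\le m(S_g\setminus R_g)$; combining with budget balance $\BB$ on $S_g$ (which gives $\pay_{S_g}(\xx',\pp^*)=m(S_g)$) and subtracting yields $\pay_{R_g}(\xx',\pp^*)\ge m(R_g)$. By Lemma \ref{lem:lp-prop}(3) applied within $S_g$, the maximum of $\pay_{R_g}$ over optimal allocations is attained precisely at allocations minimizing $\delay_{R_g}$, so $\max_{\xx\text{ optimal}}\pay_{R_g}(\xx,\pp^*)\ge m(R_g)$. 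I claim $\hat{\xx}$ itself realizes this maximum: in the ordering fixed by $\xx^*$, $R_g$ consists of the $S_g$-agents with the smallest global indices, so $R_g$ is a prefix of $S_g$. Since $\hat{\xx}$ lex-minimizes delays in the order $1,2,\ldots,n$ among optimal $LP(\ll^*)$ allocations, applying extensibility (Definition \ref{def:extensibility}) iteratively within segment $g$---starting from an optimal allocation minimizing the delay of the first $S_g$-agent, then extending to minimize the cumulative delay of the first two with the first's delay fixed, and so on---produces an optimal allocation whose cumulative delay over every prefix of $S_g$ is simultaneously minimized. The lex-min property then forces $\delay_{R_g}(\hat{\xx})=\min_{\xx\text{ optimal}}\delay_{R_g}(\xx)$, and Lemma \ref{lem:lp-prop}(3) gives $\pay_{R_g}(\hat{\xx},\pp^*)\ge m(R_g)$.

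The principal obstacle is carrying out the iterative extensibility argument within a single segment while retaining $LP(\ll^*)$-optimality on all segments simultaneously. The enabling fact is Lemma \ref{lem:lp-prop}(1): each segment-level total delay $\delay_{S_h}$ is invariant across optimal allocations of $LP(\ll^*)$, so intra-segment reallocations within $S_g$ that preserve $\delay_{S_g}$ leave the optimality conditions on every other $T_h$ intact. This allows us to invoke extensibility inside one segment without disturbing the global optimality structure, so that the lex-min construction of $\hat{\xx}$ really does achieve, within every segment, the minimum cumulative delay on each prefix.
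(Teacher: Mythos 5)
Your proof is essentially the same as the paper's, just phrased from the side of $R$ rather than its complement $\bar S$: you decompose across segments ($R_g = R\cap S_g$, which is the complement of the paper's $L_g$ inside $S_g$), apply $\SC$ to $S_g\setminus R_g = L_g$ together with $\BB$, and use the same extensibility-based intra-segment construction to argue that the lex-min allocation $\hat{\xx}$ attains the extremal delay on each $R_g$; the paper applies $\BB$ globally (to reduce to $\sum_{i\in\bar S}\tau_i\ge 0$) and $\SC$ to each $L_g$, which after summing is the same inequality. The only cosmetic difference is that you invoke Lemma~\ref{lem:lp-prop}(3) explicitly to translate delay minimality into payment maximality, a step the paper leaves implicit.
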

	\begin{proof}
	Fix an $r\in[n]$ and define $S=\{1,\ldots,r\}$ and $\bar{S} = \{ r+1,\ldots,n\}$.
	Since the total delay of all the agents is same at both $\xx^*$ and $\hxx$, their total payment is also same (first and third part of Lemma \ref{lem:lp-prop}). Therefore it suffices to show $\sum_{i\in \bar{S}} \tau_i (\hat{\xx}) \geq 0$ because $\sum_{1\leq i\leq n} \tau_i(\hat{\xx}) =0$.
	Let's define $L_g=\bar{S}\cap S_g$. 
	Note that $\hat{\xx}$ is an optimal allocation in which delay of $\bar{S}$ is maximized. 
	We will show that in an optimal allocation if delay of $\bar{S}$ is maximized then delay of $L_g$ is maximized for all $g$ and so $m(L_g) \geq \pay_{L_g}(\hat{\xx},\pp^*)$ (SC condition). Therefore, $m(\bar{S}) \geq \pay_{\bar{S}} (\hat{\xx},\pp^*)$. That completes the proof. 
	
	In the following we show that if the delay of $\bar{S}$ is maximized then delay of $L_g$ maximized for all $g$  in an optimal allocation.
	Consider an optimal allocation $\xx'$ of $LP(\ll^*)$ which is constructed by first optimizing for $S_d\backslash L_d$, then $L_d$, then $S_{d-1}\backslash L_{d-1}$ then $L_{d-1}$ and so on. This is a valid construction due to the extensibility property. We claim that $\xx'$ is an optimal allocation which maximizes delay of $L_g$, $\forall g$ individually.
	Total delay of $S_g$ $\forall g$ is the same for all optimal allocations (Lemma \ref{lem:lp-prop}) and delay of 
	$(\cup^{d}_{q=g+1} S_q) \cup  (S_{g}\backslash L_g)$ is minimized in $\xx'$. Therefore, delay of $S_{g}\backslash L_g$ is minimized in $\xx'$ and so delay of $L_g$ is maximized since sum of delays of $S_{g}\backslash L_g$ and $L_g$ is constant among all optimal allocations.
	\end{proof}
	Using the above claim we get $\exists \hat{r}$ such that $\sum_{i\leq \hat{r}} \tau_i (\hat{\xx}) < 0$ because otherwise $\tau_i(\hat{\xx}) = 0, \forall i $ and $f(\hat{\xx})=0$. 
	Therefore, 
	\begin{equation}\label{feasibility:ineq} \sum_{r\leq n }\sum_{i\leq r} \tau_i(\hat{\xx}) < 0 \end{equation}
	
	Let's define $\xx(\delta)=(1-\delta)\xx^*+\delta \hat{\xx}$. Since optimal solutions of an LP forms a convex set, $\xx(\delta)$ is an optimal solution of $LP(\ll^*)$ for all $\delta\in[0\ 1]$. 
	For every pair $i$ and $j$ such that $i<j$ and $\tau_i(\xx^*)=\tau_j(\xx^*)$, we assume $\frac{\partial \tau_i (\xx(\delta))}{\partial \delta}>\frac{\partial \tau_j (\xx(\delta))}{\partial \delta}$. Note that the assumption is without loss of generality.
	Therefore, there exists $\delta$ small enough such that
	the order of $\tau_i$'s is the same for $\xx(\delta)$ as at $\xx^*$. Considering that small $\delta$, we have the following
	$$ f(\xx(\delta)) = \sum_{r} \sum_{i\leq r} \tau_i(\xx(\delta))$$
	$$ = \sum_{r} \sum_{i\leq r} (\tau_i(\xx^*)-\delta(\tau_i(\xx^*)-\tau_i(\hat{\xx})))$$
	$$ = f(\xx^*) - \delta(f(\xx^*) -  \sum_{r\leq n }\sum_{i\leq r} \tau_i(\hat{\xx}) ) $$
	$$  = (1-\delta) f(\xx^*) + \delta( \sum_{r\leq n }\sum_{i\leq r} \tau_i(\hat{\xx}) ))$$
	$$ < f(\xx^*)\ \ \ \ \ \ \ \ \ \ \ \ \ \ \ \ \  \text{(Using  (\ref{feasibility:ineq}))}$$
	Therefore, we get $f(\xx(\delta))<f(\xx^*)$ which is a contradiction to $\xx^*$ being optimal solution where $f$ is minimized.
\end{proof}

While searching for the next segment during the algorithm, we would like to fix total payment of the segments that already have been
created. Note that unlike scheduling on a single machine (Section \ref{sec:cloud}), in this general setting we are not able to fix the allocation of the agents on the previous segments. In fact the allocation will heavily depend on
what segments are created later on. However from Lemma \ref{lem:lp-prop} we know that the total delay of previous segments will remain unchanged. If delay is fixed then payments can be controlled using the last part of Lemma \ref{lem:lp-prop} if prices of goods they buy also remain unchanged. 
Therefore the only way to ensure that total payments are fixed seems to be, fixing 
prices of goods they are buying, and also hold their dual variables $\alpha_{ik}$s. 
Next lemma shows that this is indeed possible. 
In a number of proofs that follows we will use the following version of Farkas' lemma. 

\begin{lemma}[Farkas' lemma \cite{schrijver.book}]\label{lem:farkas}
Given a matrix $A$, if the system $A\zz=b$ and $\zz\geq 0$ is infeasible then
there exists vector $\y$ such that $\y^T A \geq 0 $ and $\y^T b <0$.
\end{lemma}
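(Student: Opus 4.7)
The plan is to split the proof into two parts: first establishing that $\hxx$ is primal optimal in $LP(\ll')$, then constructing the desired dual solution via Farkas' lemma, with extensibility invoked to derive the necessary contradiction.

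\emph{Part 1 (primal optimality of $\hxx$).} Since $R\subseteq S_d$ and $a>0$, the partition of $\CA$ induced by $\ll'$ refines the one induced by $\ll$: agents split into $R$ (now strictly the highest $\lambda$-value), then $S_d\setminus R$ when nonempty, followed by $S_{d-1},\dots,S_1$. By Lemma~\ref{lem:lp-prop}(1) applied to $LP(\ll')$, an allocation is primal optimal iff it is \bestfor each tail of this new partition, namely $R,\, S_d,\, T_{d-1},\dots,T_1$. The hypothesis already gives \bestfor $R$, and Lemma~\ref{lem:lp-prop}(1) applied to $\ll$ says $\hxx$ is \bestfor $T_g$ for every $g\in[d]$. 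Combining these, $\hxx$ satisfies every \bestfor requirement for the $\ll'$-partition, and is therefore primal optimal in $LP(\ll')$.

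\emph{Part 2 (constructing the dual via Farkas).} I would pin the ``boundary'' dual values by setting $\alpha'_{ik}=\ha_{ik}$ for every $i\notin R$ and $p'_j=\hp_j$ on every ``shared'' good $j$ (those with $\sum_{i\notin R}\hx_{ij}>0$). With these pinned, the unknowns are $\{\alpha'_{ik}\}_{i\in R}$ and the prices $\{p'_j\}$ on the ``$R$-goods,'' subject to $\alpha'_{ik}\ge 0$ and $p'_j\ge\hp_j$. Dual feasibility in $DLP(\ll')$ together with complementary slackness against $\hxx$ reduces to a linear (in)equality system in these unknowns, and I would apply Farkas' lemma (Lemma~\ref{lem:farkas}) to prove the system is feasible. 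Suppose, toward contradiction, that it is infeasible; Farkas then supplies multipliers $\{y_{ij}\}_{i\in R,\,j\in G}$, free for $(i,j)$ with $\hx_{ij}>0$ and nonnegative otherwise, satisfying $\sum_j a_{ijk}y_{ij}\ge 0$ on every tight CC constraint of an $R$-agent, $\sum_{i\in R}y_{ij}\le 0$ on every fully-sold $R$-good, and the strict inequality $\sum_{i\in R,\,j}y_{ij}\bigl[(\lambda_i+a)d_{ij}+\hp_j\bigr]<0$.

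\emph{Closing the argument (the role of extensibility).} I would interpret $y$ as an infinitesimal perturbation of $\hxx$ on $R$-agents: $\tilde x_{ij}=\hx_{ij}+\epsilon y_{ij}$ for small $\epsilon>0$. The first two Farkas conditions, combined with the slack on non-tight CC constraints and on goods with unused supply, guarantee that $\tilde X$ is CC-feasible for $R$ and supply-respecting on the $R$-goods. The delicate case is a shared good that is fully sold, where the $R$-side increase has to be absorbed by a reduction on non-$R$. Here extensibility is the key tool: applying it segment by segment from $S_d\setminus R$ down to $S_1$, I would reshuffle non-$R$ allocations so that every non-$R$ segment retains exactly its original delay while freeing up precisely the required supply on each shared good. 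The strict Farkas inequality, paired with this delay invariance across non-$R$ segments, then forces $\tilde X$ to have strictly smaller $LP(\ll')$-objective than $\hxx$, contradicting Part~1. The main obstacle is making this extensibility-based reshuffle rigorous: it must be controlled so that the non-$R$ contribution to the $LP(\ll')$-objective change vanishes and the $R$-side change, signed by Farkas, dominates. The desired $(\aal',\pp')$ then emerges directly from the feasibility of the fixed-boundary system, and its stated properties are built into the construction.
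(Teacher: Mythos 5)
You have proved the wrong statement. The statement in question is Farkas' lemma itself (Lemma~\ref{lem:farkas}), which the paper does not prove at all: it is a classical theorem of LP duality cited directly from Schrijver's textbook \cite{schrijver.book} and used as a black box elsewhere. Your proposal, by contrast, is an argument for Lemma~\ref{lem:price} --- the statement about carrying a dual optimal of $DLP(\ll)$ to a dual optimal of $DLP(\ll')$ with the shared-good prices and non-$R$ duals held fixed --- which is one of the places the paper \emph{invokes} Farkas' lemma rather than the lemma itself.

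Since you did attempt Lemma~\ref{lem:price}, a brief comparison against the paper's actual proof of that lemma is worth recording. Your Part~1 (primal optimality of $\hxx$ in $LP(\ll')$ via Lemma~\ref{lem:lp-prop}(1)) matches the paper. Your Part~2 and closing argument, however, only reproduce one of the two cases the paper needs. After Farkas produces the multipliers $y$, the paper splits on the sign of $\sum_{i\in R,j} d_{ij}y_{ij}$: when this quantity is nonnegative, the strict Farkas inequality together with dual feasibility of $(\haal,\hpp)$ in $DLP(\ll)$ already yields a contradiction (no perturbation of $\hxx$ is needed); only when it is negative does the paper perturb $\hxx$ by $\epsilon y$ on the $R$-agents and contradict optimality of $\hxx$ in $LP(\ll)$. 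Your perturbation argument is essentially the paper's second case, but it does not go through on its own --- the Farkas inequality $\sum y_{ij}[(\lambda_i+a)d_{ij}+\hp_j]<0$ includes the price term $\hp_j$ and so does not by itself certify that the $LP(\ll')$ objective decreases --- and you do not supply the first case. You also invoke an ``extensibility-based reshuffle'' of non-$R$ allocations to absorb supply on shared goods, which the paper does not do: the perturbation keeps all non-$R$ allocations fixed, and the Farkas condition $\sum_{i\in R} y_{ij}\le 0$ on fully-sold, non-shared goods is exactly what makes the perturbed point feasible without touching the other agents.

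If the intended task is to prove Farkas' lemma itself, the paper gives no proof to compare against; a correct blind proof would be a standard one via LP duality, the separating hyperplane theorem, or Fourier--Motzkin elimination, none of which your proposal contains.
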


Recall notation $\ones_S\in \{0,1\}^{|A|}$ for a subset $S\subseteq A$ denoting indicator vector of set $S$.
\price*
\begin{proof}
Due to Lemma \ref{lem:lp-prop} it follows that $\hxx$ is an optimal solution of $LP(\ll')$. 
Suppose $\pp'=\hat{\pp}+\ddelta$, then it is sufficient to show that the following system is feasible. 
$$\ddelta\ge 0$$
\begin{equation}\label{cl_1}
\begin{array}{llll}
\forall j & \ s.t.\ & \sum_{i \not \in R} \hat{x}_{ij} >0  \text{ or } \sum_{i} \hat{x}_{ij} < c_j: & \delta_j = 0
\end{array}
\end{equation}
\begin{equation}\label{cl_2}
\begin{array}{llll}
\forall i \in R, \forall j & \ s.t. \ &   \hat{x}_{ij} >0 :  & \lambda'_i d_{ij}  = \sum_{k} a_{ijk} \alpha'_{ik} - \hat{p}_j - \delta_j
\end{array}
\end{equation}
\begin{equation}\label{cl_3}
\begin{array}{llll}
\forall i \in R, \forall j & \ s.t. \ & \hat{x}_{ij} = 0 : & \lambda'_i d_{ij} \geq \sum_{k} a_{ijk} \alpha'_{ik} - \hat{p}_j - \delta_j
\end{array}
\end{equation}

The proof is by contradiction.
Suppose the system is infeasible. We will show a contradiction using Farkas' lemma (Lemma \ref{lem:farkas}).
To convert (\ref{cl_3}) to equality we add slack variable $\gamma_{ij}$.
In addition, we remove all $\delta_j$ that are set to zero in (\ref{cl_1}) from (\ref{cl_2}) and (\ref{cl_3}), and remove (\ref{cl_1}) itself from the system.
Let $T$ denote the set of goods $j$ such that $\sum_{i \not \in R} \hat{x}_{ij} >0  \text{ or } \sum_{i} \hat{x}_{ij} < c_j$  and $\bar{T}$ denote the set of good not in $T$. 
The remaining system can be written as follows in $A\zz=\pb$ form in variables $\delta_j$s and $\gamma_{ij}$s:
\begin{equation}\label{clp_2}
\begin{array}{llll}
\forall i\in R, \forall j & \ s.t. \ &   \hat{x}_{ij} >0 \ \& \ (j\in \bar{T})  :  & \lambda'_i d_{ij}+\hat{p}_j  = \sum_{k} a_{ijk} \alpha' _{ik} - \delta_j \\
\forall i\in R, \forall j & \ s.t. \ &   \hat{x}_{ij} =0 \ \& \ (j\in \bar{T}) :  & \lambda'_i d_{ij}+\hat{p}_j  = \sum_{k} a_{ijk} \alpha' _{ik} - \delta_j + \gamma_{ij}\\
\forall i\in R, \forall j & \ s.t. \ &   \hat{x}_{ij} >0 \ \& \ ( j \in T)  :  & \lambda'_i d_{ij}+\hat{p}_j  = \sum_{k} a_{ijk} \alpha' _{ik}\\
\forall i\in R, \forall j & \ s.t. \ &   \hat{x}_{ij} =0 \ \& \ ( j \in T)  :  & \lambda'_i d_{ij}+\hat{p}_j  = \sum_{k} a_{ijk} \alpha' _{ik} +\gamma_{ij}\\
\end{array}
\end{equation}
Due to Farkas' lemma if the above system is infeasible then there exists $\y$ such that $\y^TA =0$ and $\y^T \pb <0$. 
That is for variables $y_{ij},\ \forall i\in R,\ \forall j$:
\begin{equation}\label{Farkas_1}
\y^T A \geq 0 \Rightarrow 
\begin{cases}
\begin{array}{llll}
\forall i\in R, \forall k & & & : \sum_j a_{ijk} y_{ij} \geq 0 \\
\forall i \in R,\forall j & s.t. &  x_{ij}=0 & : y_{ij} \geq 0 \\
\forall j\in \bar{T} & & & : \sum_{i\in R} y_{ij} \leq 0
\end{array}
\end{cases}
\end{equation}
\begin{equation}\label{Farkas_2}
y^T b < 0 \Rightarrow 
\begin{array}{llll}
\sum_{i\in R,j } \lambda'_i d_{ij} y_{ij} + \sum_{i\in R,j} \hat{p}_j y_{ij} < 0\\
\end{array}
\end{equation}
Let's consider two cases.\\
\textbf{Case 1.} $\sum_{i\in R,j} d_{ij} y_{ij} \geq 0 $. Since $\forall i\in R,\ \lambda(S_d)=\lambda_i < \lambda'_i=\lambda(S_d)+\tau$ and $y^Tb<0$ we get
\begin{equation}\label{case1_eq_1}
\begin{array}{llll}
 \lambda(S_d) \sum_{i\in R,j }d_{ij} y_{ij} + \sum_{i\in R,j} \hat{p}_j y_{ij} < 0\\
\end{array}
\end{equation}
On the other hand, note that
\[
\begin{array}{lll}
\lambda(S_d) \sum_{i\in R,j } d_{ij} y_{ij} + \sum_{i\in R,j} \hat{p}_j y_{ij} & = 
\sum_{i\in R,j} y_{ij} ( \lambda_i d_{ij}+\hat{p}_j ) \\
& \geq \sum_{i\in R,j} y_{ij} ( \sum_k a_{ijk} \hat{\alpha}_{ik} ) & (\because \text{ $(\hat{p},\hat{\alpha})$ is a solution for $DLP(\lambda)$)}\\
& = \sum_{i\in R}\sum_k  \hat{\alpha}_{ik} \sum_j y_{ij} a_{ijk} \geq 0 & (\mbox{Using (\ref{Farkas_1})}) \\
\end{array}
\]
That is a contradiction.\\
\textbf{Case 2.} $\sum_{i\in R,j} d_{ij} y_{ij} < 0 $. We will show that $\hat{\xx}$ does not give min-cost allocation to agents of $R$.
Consider $x_{ij}=\hat{x}_{ij}+\epsilon y_{ij},\ \forall i\in R,\ \forall j$ for a small amount $\epsilon>0$ and $x_{ij}=\hat{x}_{ij},\ \forall i \notin R,\ \forall j$. Using (\ref{Farkas_1}) it follows that $\xx$ is feasible in $LP(\ll)$.
In addition, $$\sum_{i} \lambda_i \sum_j d_{ij} (x_{ij} - \hat{x}_{ij}) = \lambda(S_g) \sum_{i\in R,\ j} d_{ij} (\epsilon y_{ij}) <0$$
This contradicts $\hxx$ being optimal solution of $LP(\ll)$. 
\end{proof}

While creating next segment we need to maintain $BB$ and $SC$ conditions for all the previous segments. Lemma \ref{lem:price} ensures that prices of goods bought by agents in previous segments is fixed. If we also manage to ensure that total $\delay$ remains unchanged for previous segments, then we will be able to leverage properties from Lemma \ref{lem:lp-prop} to show $BB$ and $SC$ do remain satisfied for previous segments. The next lemma establishes exactly this. 

\begin{lemma}\label{lem:segCond1}
Given $\ll>0$ let the partition of agents by equality of $\lambda_i$ be $S_1, \dots, S_{k-1},\CA'$ such that $\lambda(S_1)<\dots<\lambda(S_{k-1})<\lambda(A')$.
For an $S_k \subset \CA'$, let $\ll'\ge \ll$ be such that the induced partition is $S_1, \dots, S_{k},\CA'\setminus S_k$ and $\lambda(S_1)<\dots<\lambda(S_{k})<\lambda(A'\setminus S_k)$. Then for any group $g<k$, we have $\delay_{S_g}(\xx)=\delay_{S_g}(\xx')$ where $\xx$ and $\xx'$ are solutions of $LP(\ll)$ and
$LP(\ll')$ respectively. And for any subset $T\subset S_g$, 
$$\max_{\xx \mbox{ optimal of } LP(\ll)} \delay_T(\xx)= \max_{\xx \mbox{ optimal of } LP(\ll')} \delay_T(\xx)$$
\end{lemma}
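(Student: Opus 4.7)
My plan rests on a set identity between the $T_g$-chains for $LP(\ll)$ and $LP(\ll')$. Under $\ll$, the partition is $S_1,\dots,S_{k-1},\CA'$, so $T^{\ll}_g = \bigcup_{q=g}^{k-1} S_q\cup \CA'$ for $g\le k-1$ and $T^{\ll}_k=\CA'$. Under $\ll'$ the last block splits into $S_k$ and $\CA'\setminus S_k$, and the chain becomes $T^{\ll'}_g = \bigcup_{q=g}^{k-1}S_q\cup \CA'$ for $g\le k-1$, $T^{\ll'}_k = S_k\cup(\CA'\setminus S_k)=\CA'$, and $T^{\ll'}_{k+1}=\CA'\setminus S_k$. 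Hence $T^{\ll}_g=T^{\ll'}_g$ for every $g\le k$, and only $T^{\ll'}_{k+1}$ is new.

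\emph{Part~1 (equality of $\delay_{S_g}$).} By Lemma~\ref{lem:lp-prop}(1), any $\xx$ optimal in $LP(\ll)$ is \bestfor each $T^{\ll}_g$, so $\delay_{T^{\ll}_g}(\xx)$ equals a constant $D_g := \min\{\sum_{i\in T^{\ll}_g,j}d_{ij}y_{ij} : y\text{ feasible and supply-respecting}\}$ that depends only on the set $T^{\ll}_g$. The same $D_g$ characterizes $\delay_{T^{\ll'}_g}(\xx')$ for any $\xx'$ optimal in $LP(\ll')$. Therefore, for every $g<k$, $\delay_{S_g}(\xx)=D_g-D_{g+1}=\delay_{S_g}(\xx')$.

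\emph{Part~2 (equality of the two maxima).} The inequality $\max_{\mathrm{opt}(\ll)}\delay_T\ge\max_{\mathrm{opt}(\ll')}\delay_T$ is immediate from $\mathrm{opt}(LP(\ll'))\subseteq\mathrm{opt}(LP(\ll))$: an optimal $\xx'$ of $LP(\ll')$ is \bestfor every $T^{\ll'}_g$ with $g\le k+1$, in particular \bestfor $T^{\ll}_g$ for $g\le k$, hence optimal for $LP(\ll)$ by Lemma~\ref{lem:lp-prop}(1). For the reverse inequality I plan to pick $\xx^*\in\mathrm{opt}(LP(\ll))$ attaining the maximum of $\delay_T$ and build $\xx'\in\mathrm{opt}(LP(\ll'))$ with $\delay_T(\xx')=\delay_T(\xx^*)$. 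Because $T\subset S_g$ with $g<k$ is disjoint from $\CA'$, any modification of $\xx^*$ confined to the bundles of $\CA'$-agents preserves $\delay_T$. The task therefore reduces to replacing the $\CA'$-block of $\xx^*$ by an allocation on the residual supply that still attains $\delay_{\CA'}=D_k$ (keeping the new allocation in $\mathrm{opt}(LP(\ll))$) and additionally attains $\delay_{\CA'\setminus S_k}=D_{k+1}$ (thereby upgrading it to $\mathrm{opt}(LP(\ll'))$ by Lemma~\ref{lem:lp-prop}(1)). To produce such a replacement I would invoke extensibility applied to the pair $(\CA'\setminus S_k, S_k)$: starting from any globally \bestfor $\CA'\setminus S_k$ allocation, extensibility yields an extension that is \bestfor $\CA'$ while preserving the per-agent delays on $\CA'\setminus S_k$, showing that the two minima $D_k$ and $D_{k+1}$ are simultaneously attainable on the full market.

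\textbf{Main obstacle.} The delicate step will be transferring this simultaneous attainability to the residual market in which the bundles of agents in $S_1\cup\dots\cup S_{k-1}$ are frozen to the $\xx^*$ values. I plan to handle it by a convexity/exchange argument on $\CA'$: the $\CA'$-block of $\xx^*$ already witnesses $D_k$ in the residual market, and I will combine it with the $\CA'$-block of the global simultaneous witness along the supply-respecting difference vector. Using Lemma~\ref{lem:price} to keep the prices valid during the interpolation and Lemma~\ref{lem:lp-prop}(3) to track how $\delay$ and $\pay$ shift within the optimal face of $LP(\ll)$, I will argue that the minimum of $\delay_{\CA'\setminus S_k}$ over the residual face equals the global $D_{k+1}$, yielding the desired $\xx'$ and closing the reverse inequality.
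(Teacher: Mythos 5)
Your Part~1 matches the paper's argument: using Lemma~\ref{lem:lp-prop}(1) and the observation that $T^{\ll}_g=T^{\ll'}_g$ for all $g\le k$, the constancy of $\delay_{S_g}$ across the optimal faces follows immediately.

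Your Part~2 takes a genuinely different route, and the route has a gap that your own ``convexity/exchange'' sketch does not close. You want to take the maximizer $\xx^*\in\mathrm{opt}(LP(\ll))$, freeze the bundles of agents outside $\CA'$, and replace only the $\CA'$-block so that it additionally witnesses $D_{k+1}$. The obstruction you flag is exactly the fatal one: the residual supply after fixing $\xx^*$ on $S_1\cup\cdots\cup S_{k-1}$ is a particular vector in $[0,1]^G$, and there is no reason it admits an allocation to $\CA'$ achieving $\delay_{\CA'\setminus S_k}=D_{k+1}$ -- that quantity is a minimum over the \emph{global} feasible polytope, not over the residual one. Interpolating ``the $\CA'$-block of $\xx^*$ with the $\CA'$-block of the global witness along a supply-respecting difference vector'' is not well defined once the other agents' bundles are frozen: supply constraints couple $\CA'$ to its complement, so a feasible direction must in general perturb bundles outside $\CA'$ as well, which destroys the invariance of $\delay_T$ that your reduction relies on. Interpolating the \emph{entire} allocations $\xx^*$ and the global witness keeps you in $\mathrm{opt}(LP(\ll))$ but moves $\delay_T$ linearly toward the (possibly smaller) value at the witness, so it cannot establish the reverse inequality. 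Appealing to Lemma~\ref{lem:price} here is also misplaced: that lemma controls dual prices under a $\ll$-perturbation, not feasible primal exchanges under a frozen partial allocation.

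The paper's proof avoids this entirely. It never does surgery on a fixed $\xx^*$; instead it notes (via Lemma~\ref{lem:lp-prop}(1) plus extensibility) that the allocation maximizing $\delay_T$ over the optimal face of \emph{either} LP can be produced by a greedy build that first minimizes delay for the aggregate set $U=\CA'\cup\bigl(\bigcup_{q=g+1}^{k-1}S_q\bigr)\cup(S_g\setminus T)$, then for $T$, then for the rest. Extensibility implies the total delay added when a block is appended depends only on the set of agents already placed (the telescoping $D_{U\cup T}-D_U$), not on the internal order in which $U$ was assembled; splitting $\CA'$ into $\CA'\setminus S_k$ then $S_k$ therefore leaves $\delay_T$ unchanged. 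If you want to pursue your ``witness transfer'' idea, you would first need to prove a residual-supply analogue of extensibility, which is not available and not obviously true. The paper's order-insensitivity argument is the intended (and sound) route.
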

\begin{proof}
An optimal solution of both $LP(\ll)$ and $LP(\ll')$ first minimizes total delay of $\CA'$ then of $S_{k-1}$ and so on finally of
$S_1$ (Lemma \ref{lem:lp-prop}). Within $\CA'$, $LP(\ll')$ may first minimize for $\CA'\setminus S_k$ and then of $S_k$. Thus, the optimal solution set may shrink
as we go from $\ll$ to $\ll'$. However, due to extensibility, if $\xx$ and $\xx'$ are optimal solution of $LP(\ll)$ and $LP(\ll')$
then, $\delay_{S_g}(\xx)=\delay_{S_g}(\xx')$. Further, by Lemma \ref{lem:lp-prop} the optimal solution of both $LP(\ll)$ and $LP(\ll')$
where $\delay_T$ is maximized essentially minimizes total delay of $\CA'\cup_{q=g+1}^{k-1} S_{q} \cup (S_g\setminus T)$ and then of
$T$, then of $\cup_{q=1}^{g-1}S_{q}$. By extensibility condition $\delay_T$ at any such allocation remains the same. 
\end{proof}

The next lemma shows that if before we start our search for next segment, already created segments $S_1,\dots, S_{k-1}$ satisfies $BB$
and $SC$ w.r.t. $(\llcur,\pcur)$, and the remaining agents satisfy $SC$, then for the value of $a$ where
minimum of $f_a$ is zero, the minimizer gives the next segment without ruining the former. 
Recall that, for a given $a\ge 0$ and $\ii > 0, \llnew=\ll^a + \ii \ee{\CA'\setminus S_k}$, and the prices $\pnew$ are valid and
optimal for $DLP(\llnew)$, where $\ee{\CA'\setminus S_k}$ is an indicator vector of set $\CA'\setminus S_k$, and $\ll^a$ as defined in
(\ref{eq:la}). We will also use notation $\pp^a$ to denote prices at the valid optimal solution of $DLP(\ll^a)$ (See (\ref{eq:pa})), i.e., in the sense guaranteed by Lemma \ref{lem:price}. 

\segCond*
\begin{proof}
First part of Lemma \ref{lem:lp-prop} implies that every optimal solution of $LP(\llnew)$ minimizes delay of sets $\CA'\setminus S_k$, $S_k$, \dots, $S_1$ in that sequence, while optimal of $LP(\llcur)$ minimizes delay of sets in sequence $\CA'$, $S_{k-1}$,\dots, $S_1$. Therefore clearly the set of optimal solutions of $LP(\llnew)$ is a subset of the optimal solutions of $LP(\llcur)$. 
This together with the fact that at $(\llcur,\pcur)$ $BB$ and $SC$ are satisfied for each $g \le k-1,\ S_g$, for any optimal $\xx'$ of $LP(\llnew)$,
we have $\pay_{S_g}(\xx',\pcur)=m(S_g)$. Let $\aacur$ be corresponding dual variable, then Lemma \ref{lem:csc} gives $\lambda(S_g)
\delay_{S_g}(\xx') = \sum_{i\in S_g, k} r_{ik} \acur_{ik} - \pay_{S_g}(\xx',\pcur)$

Since $\pnew$ is a valid solution of $DLP(\llnew)$, at corresponding valid $(\aanew,\pnew)$ value of $\aanew_i$ for each $i \notin \CA'$ is same as $\aacur_i$. Using Lemma \ref{lem:csc} we get $\lambda(S_g) \delay_{S_g}(\xx') = \sum_{i\in S_g, k} r_{ik} \anew_{ik} - \pay_{S_g}(\xx',\pnew)= \sum_{i\in S_g, k} r_{ik} \acur_{ik} - \pay_{S_g}(\xx',\pnew)$. This
together with the above equality gives $\pay_{S_g}(\xx',\pnew) = \pay_{S_g}(\xx',\pcur)=m(S)$. Hence $BB$ is satisfied by
$S_1$,\dots,$S_{k-1}$ at $(\llnew,\pnew)$. By the same reasoning, and using Lemma \ref{lem:segCond1} and third part of Lemma \ref{lem:lp-prop} we get that
they also satisfy $SC$.

Note that $\llnew$ is $\ll^a$ with $\ii$ added to $\lambda_i$s of agents in $\CA'\setminus S_k$. And $\pnew$ is a valid optimal of
$DLP(\llnew)$ obtained starting from $\pp^a$ where $\alpha_{ik}$s of agents not in $\CA'\setminus S_k$, and prices of 
goods ``bought by them'' are held fixed (Lemma \ref{lem:price}).
Since function $f_{\ll,\pp}$ keeps track of surplus budget, and $S_k$ is the minimizer of $f_{\ll^a,\pp^a}$ where surplus budget of $S_k$ is zero at $\pp^a$ in addition, it follows that for every subset of $S_k$ the surplus budget is non-negative. Thus we get $SC$ for $S_k$ at $(\llnew,\pnew)$ using Lemma \ref{lem:segCond1}. For $BB$ note that we have $\llnew(S_k)<\llnew(\CA'\setminus S_k)$. Hence, due to first part of Lemma \ref{lem:lp-prop},
optimal allocations $\xx$ of $LP(\llnew)$ will give first to $\CA'\setminus S_k$ minimum delay, and then next minimum to $S_k$. 
This is exactly same as maximizing $\delay_{S_k}(\xx)$ among optimal of $LP(\ll^a)$ (where $\ll^a(S^*)=\ll^a(\CA'\setminus
S^*)>\ll^a(S_g),\ \forall g\le k-1$). Due to the fact that $f_{\ll^a,\pp^a}(S_k)=0$, at such an allocation we also have 
$m(S_k)=\pay_{S_k}(\xx,\pp^a)$. This will be same as $\pay_{S_k}(\xx,\pnew)$ due to construction of $\pnew$ from $\pp^a$ in Lemma
\ref{lem:price}. Since at every such allocation $\delay_{S_k}(\xx)$ remains the same, $\pay_{S_k}(\xx,\pnew)$ remains the same (Lemma \ref{lem:segCond1}). 

For the second part, namely $\CA'\setminus S_k$ satisfies $SC$ w.r.t. $(\llnew,\pnew)$, 
it suffices to show existence of $\ii>0$ such that $f_{\llnew,\pnew}(T) \ge 0,\ \forall T\subset \CA'\setminus S_k$. 
We will show this in Lemma \ref{lem:ii} below.
\end{proof}

Above lemma implies that if the minimizer of $f_a$ gives zero value, then it forms the next segment. One crucial task therefore is to find a
minimizer of $f_a$ efficiently. Next lemma and (\ref{eq.fa}) show that $f_a$ is a submodular function, implying its minimizer can be found in polynomial
time.

\submodular*
\begin{proof}
For ease of notation let us use $f$ to denote function $f_a$, $\ll=\ll^a$, $\pp=\pp^a$, and $\alpha$ be the dual vector that forms {\em valid} solution of $DLP(\ll^a)$ together with $\pp^a$. Recall that agent set $A$ is partitioned by equality of $\lambda^a_i$ into sets $S_1,\dots,S_{(k-1)},A'$ such that $\ll^a(S_1)<\dots<\ll^a(S_{(k-1)})<\ll^a(A')$. 

Let $S\subset T \subset A'$ and $a\not \in T$.  Define $S'=S\cup \{a\}$ and $T'=T\cup \{a\}$. 
It suffices to show the following
\[
f(S')-f(S) \geq f(T')-f(T)
\]
Let's recall the following two complementary slackness conditions.
\[
\begin{array}{llll}
\forall i \in A,\ \forall k \in \CC: & \sum_{j\in G} a_{ijk} x_{ij} \geq r_{ik} & \perp &  \alpha_{ik} \geq 0 \\
\forall i\in A, \forall j \in \CG: & \lambda_i d_{ij} \geq \sum_{k} a_{ijk} \alpha_{ik} - p_j & \perp & x_{ij} \geq 0	\\
\end{array}
\]
Using these it is easy to get the following, where $\lambda^*$ is the $\lambda_i$ of agents in $A'$ which we know is the same. 
\begin{equation}\label{submodular:eq}
\lambda^* \delay_S(\xx)=\pay_S(\xx,\pp)  +\sum_{i\in S,k}\alpha_{ik} r_{ik}
\end{equation}
For set $S\subseteq A'$ let us denote its complement within $A'$ by $\bar{S}=A'\setminus S$. From the first part of Lemma \ref{lem:lp-prop} we know that optimal solution of $LP(\ll)$ will first minimize delay of set $A'$ since it has highest $\lambda$ value. Note that, $\bar{S}=\bar{S'} \cup \{a\}$. If $\xx^S$ is an optimal solution of $LP(\ll)$ where delay of $S$ is maximized then by extensibility it can constructed as follows: within minimization for set $A'$ first minimizing delay for $\bar{S'}$, then for $a$, and lastly for $S$.
Then we have:
	\[
	\begin{array}{lll}
	f(S')-f(S) & = m_a -  \pay_{S'}(\xx^S,\pp) + \pay_{S}(\xx^S,\pp)\\
	& = m_a - \sum_{i\in S',k} \alpha_{ik}r_{ik} 
			+ \lambda^* \delay_{S'}(\xx^{S}) 
			+ \sum_{i\in S,k} \alpha_{ik}r_{ik} 
			- \lambda^* \delay_S(\xx^S)  & \text{(Using (\ref{submodular:eq}))}\\
	& = m_a - \sum_k \hat{\alpha}_{ak} r_{ak} + \lambda^* (\delay_{S'}(\xx^S)
			- \delay_{S} (\xx^{S}))\\
	& = m_a -\sum_k \hat{\alpha}_{ak} r_{ak} + \lambda^* \delay_a (\xx^{S}) & (\xx^S \text{ construction})
	\end{array}
	\]
	Similarly we get,
	$$ f(T')-f(T) = m_a -\sum_k \hat{\alpha}_{ak} r_{ak} + \lambda^* \delay_a (\xx^T) $$
	where $\xx^T$ is an optimal solution where delay of $T$ is maximized, which can be constructed by first minimizing delay for $\bar{T'}$, next for $a$, and last for $T$. Recall that $S\subset T$ and so $\bar{T} \subset \bar{S}$. We constructed $\xx^S$ and $\xx^T$ by first optimizing for $\bar{S'}$ and $\bar{T'}$ and then adding $a$. Therefore, $\delay_a(\xx^S) \leq \delay_a(\xx^T)$ and so
	\[
		f(S')-f(S)\geq f(T')-f(T)
	\]
\end{proof}

The \nextseg\ subroutine does binary search on the value of $a$ to find the one where minimizer of $f_a$ gives zero. In next few lemmas
we show why binary search is the right tool to find this critical value of $a$.  Essentially we show Lemma \ref{lem:bin} which 
has four parts and we will show them in separate lemmas.

\begin{lemma}\label{lem:f0}
If $\CA'$ satisfies $SC$ condition of Definition \ref{def:feas} at $(\llcur,\pcur)$ then $f_{\llcur,\pcur}(T)\ge 0,\ \forall T\subset \CA'$.
In other words $g(0)\ge 0$. 
\end{lemma}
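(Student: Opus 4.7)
The proof is essentially a direct unfolding of definitions, since the $SC$ condition from Definition~\ref{def:feas} is tailor-made to match the non-negativity of $f_{\llcur,\pcur}$. My plan is to simply match the two statements.

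First, I would recall that at $a=0$ we have by construction $\ll^0=\llcur$ and $\pp^0=\pcur$, so that $f_0(T)=f_{\llcur,\pcur}(T)$ for every $T\subseteq\CA'$, and $g(0)=\min_{T\subseteq\CA',\,T\neq\emptyset}f_{\llcur,\pcur}(T)$. Next, I would unfold the definition of $f_{\llcur,\pcur}(T)$: picking an optimal solution $\xx$ of $LP(\llcur)$ that maximizes $\delay_T$, we have $f_{\llcur,\pcur}(T)=m(T)-\pay_T(\xx,\pcur)$. (Well-definedness of this quantity, i.e.\ independence from which maximizer $\xx$ is chosen, follows from the third part of Lemma~\ref{lem:lp-prop}: among optima with the same delay on $T$, the payment on $T$ under $\pcur$ is the same.)

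Now I would invoke the hypothesis that $\CA'$ satisfies $SC$ at $(\llcur,\pcur)$. By Definition~\ref{def:feas} this asserts precisely that for every $T\subseteq\CA'$, if $\xx$ is an optimal solution of $LP(\llcur)$ maximizing $\delay_T$, then $m(T)\ge\pay_T(\xx,\pcur)$. Substituting into the previous equation yields $f_{\llcur,\pcur}(T)=m(T)-\pay_T(\xx,\pcur)\ge 0$ for every $T\subseteq\CA'$. Taking the minimum over nonempty $T\subseteq\CA'$ gives $g(0)\ge 0$, which is the claim.

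There is really no substantive obstacle here: the lemma is a restatement of the $SC$ hypothesis in the notation of the function $f_a$, and the only non-trivial fact used is that the payment on $T$ is well defined on the face of optimal solutions that maximize $\delay_T$, which is exactly the third part of Lemma~\ref{lem:lp-prop}. Hence the proof will fit in a few lines.
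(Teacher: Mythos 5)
Your proof is correct and follows essentially the same route as the paper: both arguments simply unfold Definition~\ref{def:feas} (the $\SC$ condition for $\CA'$) and the definition of $f_{\ll,\pp}$ in equation~(\ref{eq:f}) and observe that they match term for term, yielding $f_{\llcur,\pcur}(T)=m(T)-\pay_T(\xx^T,\pcur)\ge 0$ for all $T\subseteq\CA'$. Your extra remark about well-definedness of the payment on $T$ via the third part of Lemma~\ref{lem:lp-prop} is a correct and slightly more careful addition, but the substance is identical to the paper's proof.
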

\begin{proof}
Since $\CA'$ satisfies $SC$ w.r.t. $(\llcur,\pcur)$, by definition of $SC$ condition (Definition \ref{def:feas}), 
it follows that for any $T\subset \CA'$, $m(T) - \pay_T(\xx^T,\pcur)\ge 0$, where $\xx^T$ is an optimal solution of $LP(\llcur)$ where $\delay_T(\xx)$ is maximized. 
Thus, $f_{\llcur,\pcur}(T) = m(T) - \pay_T(\xx^T,\pcur)\ge 0$. 
\end{proof}

\begin{lemma}\label{lem:mona}
Assuming sufficient demand, for any $T\subset \CA'$, value $f_a(T)$ monotonically and continuously decreases with increase in $a$. Further,
for any given $a\ge 0$, $\exists a'>a$ such that $f_{a'}(A')<f_a(A')$. 
\end{lemma}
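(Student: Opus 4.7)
My plan is to exploit a structural observation that collapses $f_a(T)$ to a much simpler object. By Lemma~\ref{lem:lp-prop}(1), the primal optima of $LP(\ll^a)$ are characterized as those allocations that are \bestfor $T_g$ for every $g$ in the agent partition, and this characterization depends only on the \emph{ordering} of the $\lambda$-values, not on their magnitudes. Since $\lambda(S_1) < \cdots < \lambda(S_{k-1}) < \lambda^a$ holds for every $a \ge 0$, the primal-optimum face of $LP(\ll^a)$ is the same polytope for all such $a$. In particular, I may choose a maximizer $\hxx$ of $\delay_T$ on this face once and for all, independent of $a$, and obtain
\[
f_a(T) \;=\; m(T) - \pay_T(\hxx, \pp^a),
\]
a function that depends on $a$ only through the valid price vector $\pp^a$.

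Given this, continuity of $f_a(T)$ reduces to continuity of $\pp^a$: under the uniqueness-of-valid-prices assumption from Section~\ref{sec:general}, $\pp^a$ is the unique optimum of the continuously parameterized dual $DLP(\ll^a)$ subject to the linear validity conditions from Lemma~\ref{lem:price}, hence continuous in $a$. Monotonicity is equally short: for $a' > a$, applying Lemma~\ref{lem:price} with $R = \CA'$ and increment $a'-a$ (using that $\hxx$ is \bestfor $\CA' = S_d$ as a member of the common optimum face) gives $\pp^{a'} \ge \pp^a$ componentwise, and since $\hxx \ge 0$ this yields $\pay_T(\hxx, \pp^{a'}) \ge \pay_T(\hxx, \pp^a)$, i.e., $f_{a'}(T) \le f_a(T)$.

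The main obstacle is the strict-decrease clause at $T = \CA'$. To handle it I would combine the complementary-slackness identities for $\CA'$ and $A\setminus \CA'$ with the market-clearing identity $\pay_A = \sum_j p^a_j$ (forced by CS on supply constraints), together with $\alpha^a_i = \alpha^{\mathrm{cur}}_i$ for $i\notin \CA'$ (Lemma~\ref{lem:price}), to derive the clean identity $f_a(\CA') = C - \sum_j p^a_j$ for a constant $C$ independent of $a$. Thus the task becomes showing that $\sum_j p^a_j$ cannot remain constant on any interval $[a,\infty)$. Suppose it did; by the componentwise monotonicity from Lemma~\ref{lem:price}, each $p^a_j$ would then be individually constant, so $\pp^a \equiv \pp^{\mathrm{cur}}$ throughout and the same primal-dual triple $(\hxx,\aal^a,\pp^{\mathrm{cur}})$ would be optimal for every $a$. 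But as $a \to \infty$, the budget-constraint multiplier $\gamma_i = 1/\lambda^a$ of each $i \in \CA'$ tends to zero, making the OB-LP$(i)$ budget constraint effectively slack; sufficient demand then forces such an agent's cost-minimizing bundle at prices $\pp^{\mathrm{cur}}$ to over-demand some good, contradicting the supply constraint satisfied by $\hxx$. Hence $\sum_j p^a_j$ must eventually strictly rise, and by continuity I can pick $a' > a$ with $f_{a'}(\CA') < f_a(\CA')$.
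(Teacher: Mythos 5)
Your proof of the first part (monotone, continuous decrease of $f_a(T)$) is essentially the paper's argument: fix a maximizer $\hxx$ on the common primal-optimum face (valid since the ordering of $\lambda$-values is preserved, by Lemma~\ref{lem:lp-prop}(1)), appeal to $\pp^{a'}\ge\pp^a$ from Lemma~\ref{lem:price}, and invoke continuity of parameterized LP solutions.

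For the strict-decrease clause, you take a genuinely different route. The identity $f_a(\CA') = C - \sum_j p^a_j$ is a clean observation that the paper does not state explicitly; it correctly combines market-clearing complementarity ($p^a_j>0 \Rightarrow \sum_i \hx_{ij}=1$) with the CS identity $\lambda_i\delay_i(\hxx) = \sum_k r_{ik}\alpha^a_{ik} - \pay_i(\hxx,\pp^a)$ and the fact that $\alpha^a_{ik}$ and $\delay_i(\hxx)$ are frozen for $i\notin\CA'$. Reducing the problem to ``$\sum_j p^a_j$ cannot stay constant on $[a,\infty)$'' is a nice simplification. Where the paper then applies Farkas' lemma to the system asserting that $\pp^a$ remains a valid dual optimum for all $a'$, and builds an explicit violating direction $\xx^o$ from the sufficient-demand bundle, you instead argue by a limit $a'\to\infty$.

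The limiting step, however, has a gap as written. You invoke ``the budget-constraint multiplier $\gamma_i=1/\lambda^a$'' and say the OB-LP$(i)$ budget constraint becomes ``effectively slack,'' but the algorithm's primal-dual pair is $LP(\ll^a)/DLP(\ll^a)$, which has no budget constraint and hence no $\gamma_i$; and $\hxx$ is the optimum of $LP(\ll^a)$ (a supply-constrained, jointly-weighted delay minimization), not an individually optimal bundle of OB-LP$(i)$. So it does not follow directly that $\hx_i$ is ``the agent's cost-minimizing bundle at prices $\pp^{\mathrm{cur}}$,'' and sufficient demand cannot be applied to $\hx_i$ without further argument. To make the limit rigorous you would need to normalize the $DLP(\ll^{a'})$ duals as $\beta^{a'}_{ik} := \alpha^{a'}_{ik}/\lambda^{a'}_i$, show $(\beta^{a'}_{ik})_k$ stays bounded (using, e.g., $\sum_k r_{ik}\beta^{a'}_{ik} = \delay_i(\hxx) + \pay_i(\hxx,\pp^a)/\lambda^{a'}_i$, plus handling the $r_{ik}=0$ cases), extract a convergent subsequence, and verify that the limiting complementary-slackness system is exactly that of~(\ref{eq:coveringLP}), so that $\hx_i$ is optimal for the unbudgeted covering LP; only then does sufficient demand force $\hx_{ij}>1$ for some $j$, contradicting the supply constraint. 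This can likely be completed, but it is a nontrivial amount of work that the proposal does not do, and the paper's Farkas argument sidesteps the compactness issues entirely.
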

\begin{proof}
There is a unique valid price vector $\pp^a$ and $\pp^{a'}$ constituting optimal solution of $DLP(\ll^a)$
and $DLP(\ll^{a'})$. If we apply Lemma \ref{lem:price} for $\ll=\ll^a$ and $\ll'=\ll^{a'}$, then we get that $\pp^a \le \pp^{a'}$. 
Note that the partition of agents by equality of $\lambda_i$ is the same at both $\ll^a$ and $\ll^{a'}$ and further
their ordering by the value of $\lambda(S)$ is also same. Therefore,
due to the first part of Lemma \ref{lem:lp-prop} every optimal solutions of $LP(\ll^a)$ are the same $LP(\ll^{a'})$. Hence, among them the
ones minimizing $\delay_T(\xx)$ for any given $T\subset \CA'$ are the same, say $\xx^T$ is one of them. Note that $\delay_T$ remains the same at all such allocations.  
\[
f_a(T) = m(T) - \pay_T(\xx^T,\pp^a) = m(T) - \sum_{i \in T,j} x^T_{ij} p^a_j \ge m(T) - \sum_{i \in T,j} x^T_{ij} p^{a'}_j = f_{a'}(T)
\]
Since solutions of linear programs change continuously with change in parameters, the first part follows. 

For the second part, we know that $\forall a'\ge a$, valid prices $\pp^{a'}\ge \pp^a$ from the above. It suffices to show that $\exists
a' \ge a$ such that $p^{a'}_j > p^a_j$ for some good $j$ that is bought by agents of $\CA'$. Suppose not, then $\pp^{a'}=\pp^a,\ \forall a'\ge a$ since prices of the rest of the goods are fixed anyway. Let $\hpp = \pp^{a}$,
$\hll=\ll^{a}$, and $\hxx$ be an optimal solution of $LP(\hll)$. Note that $\hxx$ is an optimal of $LP(\ll^{a'})$
as well for any $a'>a$. Then, the following system has a solution for any $a' \ge a$, since $\hpp$
together with some $\aal$ gives a optimal solution of $DLP(\ll^{a'})$ that satisfies complementarity with $\hxx$. Here $\aal$ are variables, and $\hl=\hll(\CA')$.
\[
\begin{array}{ll}
\forall i\in \CA', \forall j: & \hx_{ij} > 0 \Rightarrow (\hl+\delta)d_{ij} = \sum_k \aijk \alpha_{ik} - \hp_j\\
\forall i\in \CA', \forall j: & \hx_{ij} = 0 \Rightarrow (\hl+\delta)d_{ij} \ge \sum_k \aijk \alpha_{ik} - \hp_j\\
\forall i\in \CA', \forall k: & \sum_j a_{ijk} \hx_{ij} > r_{ik} \Rightarrow \alpha_{ik} =0 \\
& \aal\ge 0
\end{array}
\]

Removing $\alpha_{ik}$ variables that are set to zero, and then applying Farkas' lemma (Lemma \ref{lem:farkas}) we get that following system in $x^o_{ij}$ variables is infeasible for any $\delta \ge 0$:
\[
\begin{array}{ll}
\forall i\in \CA',\ \forall k: & \sum_j \aijk \hx_{ij} = r_{ik} \Rightarrow \sum_j \aijk x^o_{ij} \ge 0 \\
\forall i\in \CA',\ \forall k: & \hx_{ij} =0 \Rightarrow x^o_{ij} \ge 0\\
& (\hl+\delta) \sum_{i\in \CA', j} d_{ij} x^o_{ij} + \sum_{i\in \CA',j} \hp_j x^o_{ij} <0
\end{array}
\]

Let $u\in \CA'$ be an agent and let $\x'_u$ be its optimal allocation at zero prices. Since $\forall j, \hx_{uj}\le 1$ due to
feasibility in $LP(\hll)$, it can not be optimal at zero prices due to sufficient demand condition. Hence $\sum_{j} d_{uj} x'_{uj} <
\sum_j d_{uj} \hx_{uj}$. Set $x^o_{ij}=0,\ \forall i\neq u,\ \forall j$, and
$x^o_{uj}=x'_{uj} - \hx_{uj},\ \forall j$. Clearly $\xx^o$ satisfies the first set of conditions above for all $i\in \CA'$, $i\neq u$. For agent $u$
since $\x'_u$ is optimal of (\ref{eq:coveringLP}), $\forall k$ with $\sum_j a_{ijk} \hx_{ij}=r_{ik}$, we have 
$$\sum_j \aijk x'_{ij} \ge r_{ik} = \sum_j a_{ijk} \hx_{ij} \Rightarrow \sum_j \aijk (x'_{ij} - \hx_{ij}) \ge 0 \Rightarrow \sum_j \aijk x^o_{ij} \ge 0.$$

Second set of conditions are satisfied by construction. Hence setting $\delta >
\frac{|\sum_{i\in \CA',j} \hp_j x^o_{ij} + \hl \sum_{i\in \CA', j} d_{ij}
x^o_{ij})|}{|\sum_{i\in \CA', j} d_{ij} x^o_{ij}|}$, makes the above
feasible. A contradiction. 
\end{proof}

\begin{lemma}\label{lem:rational}
There exists a rational $a_h\ge 0$ of polynomial size such that $g(a_h) \le 0$.
Further, given existence of $a>0$ for set $S\subseteq \CA'$ such that
$f_a(S)=0$, such an $a$ can be found by solving a feasibility LP of polynomial-size.
\end{lemma}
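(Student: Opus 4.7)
The plan is to first prove the LP-based second statement and then derive the first by applying it with $S=\CA'$. For the LP construction, the key idea is to precompute a single allocation $\xx^*$ that is simultaneously optimal for $LP(\ll^a)$ for every $a>0$ and that maximizes $\delay_S$ among optima. I would build such an $\xx^*$ by lexicographic LP optimization: first minimize $\delay_{\CA'}$, then (using extensibility) refine by minimizing $\delay_{\CA'\setminus S}$, and then continue down through the previously-fixed segments. Because $\CA'$ strictly dominates all lower-indexed segments in $\ll^a$-value for every $a>0$, Lemma \ref{lem:lp-prop} combined with extensibility shows that the same $\xx^*$ works for all $a>0$ and that the per-agent delays of $\xx^*$ on $\CA'$ (and per-segment delays below) are $a$-independent. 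In particular $D^S:=\delay_S(\xx^*)$, $D^*_{\CA'}:=\delay_{\CA'}(\xx^*)$, and $C:=\sum_i\lambda^{cur}_i\delay_i(\xx^*)$ are polynomial-size constants computed up front.

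With $\xx^*$ in hand I would set up a feasibility LP in variables $(a,\aal,\pp)$ with four blocks of linear constraints: (i) dual feasibility $\lambda^a_i d_{ij}+p_j\ge\sum_k \aijk\alpha_{ik}$ together with $\aal,\pp\ge 0$, which is linear in $(a,\aal,\pp)$ after expanding $\lambda^a_i=\lambda^{cur}_i+a$ for $i\in\CA'$ and $\lambda^{cur}_i$ otherwise; (ii) validity of prices as per Lemma \ref{lem:price} applied at $R=\CA'$, namely $p_j\ge\hp_j$ with equality whenever $\sum_{i\notin\CA'}\hx_{ij}>0$ and $\alpha_{ik}=\ha_{ik}$ for $i\notin\CA'$; (iii) strong duality to enforce optimality of $(\aal,\pp)$ in $DLP(\ll^a)$, which since the primal value at the fixed $\xx^*$ equals $C+aD^*_{\CA'}$ collapses to the single linear equation $\sum_{i,k}r_{ik}\alpha_{ik}-\sum_j p_j=C+aD^*_{\CA'}$; and (iv) the target condition $f_a(S)=0$ written as $\sum_{i\in S,j}p_j\,x^*_{ij}=m(S)$, which is linear in $\pp$ precisely because $\xx^*$ is a precomputed constant. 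The resulting system has polynomial size and, by the hypothesis that some $a$ with $f_a(S)=0$ exists, is feasible; any feasible solution supplies the desired $a$.

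For the first statement I apply the above with $S=\CA'$. I need $f_{a^*}(\CA')=0$ for some finite $a^*$. By Lemma \ref{lem:f0} (and the inductive invariant that $\CA'$ satisfies $SC$ at $(\llcur,\pcur)$) we have $f_0(\CA')\ge 0$, while by Lemma \ref{lem:mona} $f_a(\CA')$ is continuous, monotonically decreasing, and strictly decreases above every $a$; combined with sufficient demand, the strong-duality identity $\sum_{i,k}r_{ik}\alpha^a_{ik}-\sum_j p^a_j=C+aD^*_{\CA'}$ forces either $\sum_j p^a_j$ or some $\alpha^a_{ik}$ to diverge with $a$, and since $\aal$ is fixed outside $\CA'$ by validity, this propagates (via the dual-tight constraints on goods bought by $\CA'$) to unbounded growth of $\sum_j p^a_j$. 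Using $\pay_{\CA'}(\xx^*,\pp^a)=C+\sum_j p^a_j-\sum_{i\notin\CA',k}r_{ik}\ha_{ik}-\lambda^{cur}(\CA')D^*_{\CA'}$ (obtained from the complementary-slackness identity together with strong duality), $\pay_{\CA'}$ then eventually exceeds $m(\CA')$, so intermediate value produces a finite $a^*$ with $f_{a^*}(\CA')=0$. The LP of the previous paragraph, applied to $S=\CA'$, outputs such an $a^*$ of polynomial size, and setting $a_h:=a^*$ yields $g(a_h)\le f_{a_h}(\CA')=0$.

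The main obstacle, and the crux of the proof, is the rigorous justification that a single $\xx^*$ can be held fixed across all $a>0$ with fixed per-agent delays on $\CA'$; this is what linearizes the otherwise bilinear strong-duality constraint ($a$ multiplying primal delays) and the payment equation ($p_j$ multiplying $x_{ij}$). The argument rests on Lemma \ref{lem:lp-prop} plus extensibility applied at two levels: across the $\CA'$-vs.-lower-segments split, so that the $a$-dependence of the $LP(\ll^a)$ objective contributes only the scalar constant $aD^*_{\CA'}$; and within $\CA'$, so that the refinement ``maximize $\delay_S$ among optima'' can be realized by a single allocation whose per-agent delays on $\CA'$ are unchanged when the construction is extended to the lower segments. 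Once this is established, every remaining piece of the LP is manifestly linear and the polynomial-size bound follows from the standard bound on rational solutions of polynomial-size linear programs.
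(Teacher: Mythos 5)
Your construction of the feasibility LP (the second statement) is essentially the same as the paper's. You fix a single primal optimum $\xx^*$ that works for all $a>0$ — which is exactly the crux both here and in the paper, and is justified by Lemma~\ref{lem:lp-prop} plus extensibility — and you linearize dual optimality. The paper encodes dual optimality via the complementary-slackness conditions w.r.t.\ $\xx^*$ (tightness of dual constraints where $\hx_{ij}>0$, and $\alpha_{ik}=0$ where the primal covering constraint is slack), while you encode it via the strong-duality equation matching the dual objective to the fixed primal value $C+aD^*_{\CA'}$. Given primal optimality of $\xx^*$, these encodings are equivalent, so this part is fine.

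There is, however, a genuine gap in your argument for the first statement (existence of a polynomial-size $a_h$ with $g(a_h)\le 0$). You want to show $\pay_{\CA'}(\xx^*,\pp^a)$ eventually exceeds $m(\CA')$. Your chain is: strong duality gives $\sum_{i,k}r_{ik}\alpha^a_{ik}-\sum_j p^a_j=C+aD^*_{\CA'}$, so the left side diverges; since $\sum_j p^a_j\ge 0$, the $\alpha^a_{ik}$ for $i\in\CA'$ must diverge; and then you assert this ``propagates (via the dual-tight constraints on goods bought by $\CA'$) to unbounded growth of $\sum_j p^a_j$.'' That last propagation step does not follow. The dual constraint is $\lambda^a_i d_{ij}\ge \sum_k a_{ijk}\alpha_{ik}-p_j$, and $\lambda^a_i=\lambda^{\mathrm{cur}}_i+a$ for $i\in\CA'$ grows linearly in $a$ just as $\alpha_{ik}$ does, so the constraint can in principle be satisfied with $p_j$ bounded. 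Likewise, Lemma~\ref{lem:mona} gives you that $f_a(\CA')$ is continuous, monotone, and strictly decreases above every $a$, but a function can strictly decrease infinitely often while converging to a positive limit (e.g.\ $1/a$), so ``strict decrease above every $a$'' alone does not yield divergence of $\pay_{\CA'}$. The paper closes exactly this gap with an LP-polyhedrality argument: consider the LP with the same feasible region as your feasibility system but with objective $\max \sum_j(\hp_j+\delta_j)c_j$ (i.e., $\pay_{\CA'}$). A feasible rational LP either attains its maximum or is unbounded; if it attained a finite maximum at $(a^*,\cdot)$, Lemma~\ref{lem:mona} would supply $a'>a^*$ with strictly larger objective at the valid dual point for $a'$, a contradiction. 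Hence the LP is unbounded, $\pay_{\CA'}(\xx^*,\pp^a)$ exceeds $m(\CA')$ at some finite $a$, and continuity plus your Lemma~\ref{lem:f0} base case give the zero. You should replace your divergence heuristic with this attainment/unboundedness dichotomy (or some equally rigorous substitute).
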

\begin{proof}
Since value of $g(a) \le f_a(S),\ \forall S\subset\CA'$, it suffices to show that there exists some rational $a_h>0$ such that $f_{a_h}(A')\le 0$. 
By definition of $\ll^a$ (\ref{eq:la}), partitions of agents by equality of
$\lambda_i$ and ordering of these groups by value of it, are the same for all $a\ge 0$. Let $\hxx$ be an optimal solution of $LP(\ll^0)$. 
From the proof of Lemma \ref{lem:mona} we have that $\pp^a$ is monotonically increasing.
By continuity of parameterized LP solutions we
know that it is increasing continuously. Therefore, it suffices to show that there is some $a\ge a_0$ such that $f_a(A')\le 0$ or in
other words $\pay_{A'}(\hxx,\pp^a)\ge m(A')$. 

To the contrary suppose not. 
Let $\hll=\ll^{0}$ and $\hpp=\pp^{0}$. Solve the following linear program where $a$, $\delta_j$s and
$\alpha_{ik}$s are variables, and $\hl=\hll(\CA')$, $c_j = \sum_{i\in A'} \hx_{ij}$,

\[
\begin{array}{lll}
\max: & \sum_j (\hp_j+\delta_j) c_j \\
s.t. \\
& \forall j: & \sum_{i\notin \CA'} \hx_{ij} > 0 \mbox{ or } \sum_i \hx_{ij} < 1 \Rightarrow \delta_j =0\\
& \forall i\in \CA', \forall j: & \hx_{ij} > 0 \Rightarrow (\hl+a)d_{ij} = \sum_k \aijk \alpha_{ik} - \hp_j -\delta_j\\
& \forall i\in \CA', \forall j: & \hx_{ij} = 0 \Rightarrow (\hl+a)d_{ij} \ge \sum_k \aijk \alpha_{ik} - \hp_j -\delta_j\\
& \forall i\in \CA', \forall k: & \sum_j a_{ijk} \hx_{ij} > r_{ik} \Rightarrow \alpha_{ik} =0 \\
& & a\ge 0,\ \ddelta\ge \zeros,\ \aal \ge \zeros
\end{array}
\]

Note that at any feasible point of the above, we have $\sum_j (\hp_j+\delta_j) c_j \ge \pay_{A'}(\hxx,\pp^a)$. It has a finite
optimal or else there is a $a\ge 0$ where $\pay_{A'}(\xx^T,\pp^a)\ge m(T)$. Let $(a^*,\ddelta^*,\aal^*)$ be an optimal solution.
By construction of above linear program $\pp^{a^*} = \hpp+\ddelta$ is the valid price vector at $\ll^{a^*}$. By
Lemma \ref{lem:mona}, we can find an $a'>a^*$ where $f_{a'}(A')<f_{a^*}(A) \Rightarrow \pay_{A'}(\hxx,\pp^{a'})>
\pay_{A'}(\hxx,\pp^{a^*})\Rightarrow \sum_j p^{a'}_j c_j > \sum_j p^{a^*}_j c_j$. If $\aal^{a'}$ together with $\pp^a$ forms 
a valid optimal of $DLP(\ll^{a'})$, then for $\ddelta'=\pp^{a^*} - \hpp$ and $\aal'=\aal^{a'}$, point $(a', \ddelta',\aal')$ is a 
feasible point in the above linear program with better objective value. This contradicts optimality of $(a^*,\ddelta^*,\aal^*)$ in it. 

From the above argument and by monotonicity of $f_a(A')$ in $a$ (Lemma \ref{lem:mona}) it follows that there exists an $a_h>0$ such that $f_{a_h}(A')=0$. Again since value of $g(a_h)\le f_{a_h}(S),\ \forall S\subseteq A'$, existence of such an $a_h$ that is a rational number of polynomial sized implies the first part of the lemma. Next we will show this for any $S\subseteq A'$ thereby proving the first and second part simultaneously. 

For $S\subseteq \CA'$, existence of $a\ge 0$ with $f_a(S)=0$ implies that valid $(\pp^a=\hpp+\ddelta, \aal^a)$ of $DLP(\ll^a)$ 
is a feasible point in the following where $\hxx$ is the optimal solution of $LP(\ll^a)$ where delay of $S$ is maximized:

\[
\begin{array}{ll}
 \forall j: & \sum_{i\notin \CA'} \hx_{ij} > 0 \mbox{ or } \sum_i \hx_{ij} < 1 \Rightarrow \delta_j =0\\
 \forall i\in \CA', \forall j: & \hx_{ij} > 0 \Rightarrow (\hl+a)d_{ij} = \sum_k \aijk \alpha_{ik} - \hp_j -\delta_j\\
 \forall i\in \CA', \forall j: & \hx_{ij} = 0 \Rightarrow (\hl+a)d_{ij} \ge \sum_k \aijk \alpha_{ik} - \hp_j -\delta_j\\
 \forall i\in \CA', \forall k: & \sum_j a_{ijk} \hx_{ij} > r_{ik} \Rightarrow \alpha_{ik} =0 \\
& \sum_{i\in S, j} (\hp_j +\delta_j) (\sum_{i\in S} \hx_{ij}) = m(S) \\
 & a\ge 0,\ \ddelta\ge \zeros,\ \aal \ge \zeros
\end{array}
\]

Hence existence of rational such $a$ of polynomial-size, and that it can be found by solving a feasibility LP follow.
\end{proof}

The next lemma follows essentially from Lemmas \ref{lem:f0}, \ref{lem:mona}, and \ref{lem:rational}.

\bin*
\begin{proof}
Part $(i)$ follows from Lemma \ref{lem:f0}. Part $(ii)$ from Lemma \ref{lem:mona} and the fact that $g(a)$ is minimum of $f_a(S)$ over
all subsets $S\subset \CA'$. Minimum of continuously decreasing functions is also continuously decreasing. Parts $(iii)$ and $(iv)$ from 
Lemma \ref{lem:rational}. 
\end{proof}

From Lemma \ref{lem:segCond} we know that at the end of \nextseg\ subroutine, when we have $S^*$ a maximal minimizer of $f_{a^*}$ and
$f_{a^*}(S^*)=0$ then $S^*$ together with previous segments $S_1,\dots,S_{k-1}$ will satisfy $BB$ and $SC$ at $(\llnew,\pnew)$. 
Next we show existence of appropriate $\ii>0$ so that $SC$ condition is satisfied for $\CA'\setminus S^*$.
This will complete the missing part of Lemma \ref{lem:segCond}.

\begin{lemma}\label{lem:ii}
If $a^*>0$ such that $g(a^*)=0$ and $S^*\subset \CA'$ be maximal set such that $f_{a^*}(S^*)=0$, then $\exists \ii>0$ rational of
polynomial size such that $SC$ condition is satisfied for $\CA'\setminus S^*$ w.r.t. $(\llnew,\pnew)$ where $\llnew=\ll^{a^*} + \ii
\ee{}$, $\pnew$ is a valid optimal of $DLP(\llnew)$, and $\ee{}$ is indicator vector of set $\CA'\setminus S^*$.
\end{lemma}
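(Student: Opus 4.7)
The plan is to prove $f_{\llnew,\pnew}(T)>0$ for every non-empty $T\subseteq \CA'\setminus S^*$ by reducing each such quantity to the value $f_{\ll^{a^*},\pp^{a^*}}(T\cup S^*)$, which is strictly positive by the maximality of $S^*$, and then arguing that the reduction is continuous in $\ii$. Concretely, since $g(a^*)=0$ is attained uniquely on the maximal set $S^*$, for every non-empty $T\subseteq \CA'\setminus S^*$ we have $T\cup S^*\supsetneq S^*$ and hence $f_{\ll^{a^*},\pp^{a^*}}(T\cup S^*)>0$. Define $\eta := \min_{\emptyset\neq T\subseteq \CA'\setminus S^*} f_{\ll^{a^*},\pp^{a^*}}(T\cup S^*)>0$; this is the ``gap'' we will exploit.

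The first technical step is to observe the identity $f_{\llnew,\pnew}(T) = f_{\llnew,\pnew}(T\cup S^*)$. Lemma \ref{lem:segCond} already shows that $S^*$ satisfies $\BB$ at $(\llnew,\pnew)$, so $\pay_{S^*}(\xx,\pnew)=m(S^*)$ on every optimal solution of $LP(\llnew)$. Moreover, in the partition induced by $\llnew$ the set $S^*$ has strictly smaller $\lambda$ than $\CA'\setminus S^*$, so by Lemma \ref{lem:lp-prop}(1) $\delay_{S^*}$ is constant on $\text{opt}(LP(\llnew))$; consequently maximizing $\delay_{T\cup S^*}$ there coincides with maximizing $\delay_T$, and Lemma \ref{lem:lp-prop}(3), applied within the single group $\CA'\setminus S^*$, yields $\pay_T(\xx^{T\cup S^*}_{new},\pnew)=\pay_T(\xx^T_{new},\pnew)$. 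Subtracting the $S^*$-payments from both sides gives the claimed identity. The second technical step is to show $f_{\llnew,\pnew}(T\cup S^*)\to f_{\ll^{a^*},\pp^{a^*}}(T\cup S^*)$ as $\ii\downarrow 0$. Split the difference into a price term and an allocation term. The price term is bounded by $nm\cdot\max_j(p^{new}_j-p^{a^*}_j)$, which vanishes as $\ii\to 0$ by uniqueness and continuity of valid dual prices. For the allocation term, we use extensibility to build an allocation in $\text{opt}(LP(\ll^{a^*}))$ that simultaneously attains the minimum of $\delay_{\CA'\setminus S^*}$ (placing it in $\text{opt}(LP(\llnew))$) and the minimum of $\delay_{(\CA'\setminus S^*)\setminus T}$ (equivalently, the maximum of $\delay_{T\cup S^*}$); the existence of such an allocation shows $\max\delay_{T\cup S^*}$ over $\text{opt}(LP(\llnew))$ equals $\max\delay_{T\cup S^*}$ over $\text{opt}(LP(\ll^{a^*}))$, and Lemma \ref{lem:lp-prop}(3) at the fixed prices $\pp^{a^*}$ then forces equal payments $\pay_{T\cup S^*}$.

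Combining, $f_{\llnew,\pnew}(T)=f_{\llnew,\pnew}(T\cup S^*)\ge f_{\ll^{a^*},\pp^{a^*}}(T\cup S^*)-nm\cdot\max_j(p^{new}_j-p^{a^*}_j)\ge \eta - nm\cdot\max_j(p^{new}_j-p^{a^*}_j)$, so any $\ii$ making the second term strictly less than $\eta$ works simultaneously for all $T$ (there are only $2^{|\CA'\setminus S^*|}$ of them, so a common $\ii$ suffices). For the polynomial-size claim, each $f_{\ll^{a^*},\pp^{a^*}}(T\cup S^*)$ is a rational of polynomial bit-complexity coming from solutions of polynomial-size LPs, so $\eta\ge 1/\Delta'$ for some polynomial-size $\Delta'$; meanwhile, by LP sensitivity (or directly from the feasibility LP defining valid prices as in Lemma \ref{lem:rational}), $\max_j(p^{new}_j-p^{a^*}_j)$ is bounded by a polynomial-size constant times $\ii$. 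Picking $\ii=1/\Delta$ for a sufficiently large polynomial-size $\Delta$ (matching the $\Delta$ already used in Algorithm \ref{alg.main.leastseg}) yields the required rational of polynomial size.

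The main obstacle is the allocation-term argument in Step~2: asserting that one can find a single allocation in $\text{opt}(LP(\ll^{a^*}))$ that simultaneously witnesses two distinct ``lex-min'' conditions within the $\CA'$ segment. This is where extensibility is essential; without it one could only argue $f_{\llnew,\pnew}(T\cup S^*)\le f_{\ll^{a^*},\pp^{a^*}}(T\cup S^*)$, and the vanishing of the gap as $\ii\to 0$ would fail. Everything else reduces to the identity via $\BB$ of $S^*$ and standard continuity/uniqueness of valid duals.
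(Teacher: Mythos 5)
Your proof is correct, but it takes a genuinely different route from the paper's own argument, and the comparison is worth recording. The paper's proof proceeds set-by-set: for each nonempty $T\subseteq\CA'\setminus S^*$ it writes down an explicit feasibility LP (in the variables $b,\ddelta,\aal$) that computes the largest increment $a^S$ of $\lambda$ on $\CA'\setminus S^*$ for which the complementary-slackness structure with a fixed allocation $\hxx$ and the inequality $m(T)\ge\pay_T$ can be maintained, verifies that $b=0$ is feasible (using that $f_{a^*}(T)>0$ follows from maximality of $S^*$ and the lattice structure of minimizers of the submodular $f_{a^*}$), and then takes $\ii$ strictly below the minimum of these $a^S$; rationality and polynomial size come from the LP. Your proof instead first establishes the identity $f_{\llnew,\pnew}(T)=f_{\llnew,\pnew}(T\cup S^*)$ by invoking $\BB$ of $S^*$ at $(\llnew,\pnew)$ (correctly using only the $\ii$-independent part of Lemma \ref{lem:segCond}'s proof, so this is not circular) together with Lemma \ref{lem:lp-prop}(1) and (3), and then shows that $f_{\llnew,\pnew}(T\cup S^*)$ differs from $f_{\ll^{a^*},\pp^{a^*}}(T\cup S^*)$ only by a price term that vanishes as $\ii\downarrow 0$: the allocation term is \emph{exactly} zero because extensibility lets you construct a single $\xx\in\mathrm{opt}(LP(\ll^{a^*}))$ minimizing both $\delay_{(\CA'\setminus S^*)\setminus T}$ and $\delay_{\CA'\setminus S^*}$, which shows $\max\delay_{T\cup S^*}$ agrees over $\mathrm{opt}(LP(\llnew))$ and $\mathrm{opt}(LP(\ll^{a^*}))$, and then Lemma \ref{lem:lp-prop}(3) forces equal payments at $\pp^{a^*}$. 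The positive gap $\eta=\min_{\emptyset\ne T}f_{\ll^{a^*},\pp^{a^*}}(T\cup S^*)>0$ (again from maximality of $S^*$) combined with LP-sensitivity bounds on $\pnew-\pp^{a^*}$ then yields a uniform, polynomial-size $\ii$. What the paper's approach buys is constructiveness — it produces the threshold $a^S$ directly as the value of a feasibility LP, matching what Algorithm \ref{alg.main.leastseg} actually computes — while your approach buys conceptual clarity: the role of budget balance of $S^*$ and of the strict positivity of $f_{a^*}$ strictly above $S^*$ are made explicit via the identity $f_{\llnew,\pnew}(T)=f_{\llnew,\pnew}(T\cup S^*)$, and the quantitative bound is reduced to standard LP-sensitivity rather than a custom LP. The one assumption you lean on that should be flagged is the paper's convention that valid prices are unique (stated just before equation \eqref{eq:pa}); without it, the continuity of $\pnew$ in $\ii$ that your price-term bound requires would need more care.
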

\begin{proof}
To show $SC$ for $\CA'\setminus S^*$ w.r.t. $(\llnew,\pnew)$, it suffices to show the following:
\begin{itemize}
\item $f_{\llnew,\pnew}(T)\ge 0$, $\forall T\subset \CA'\setminus S^*$.
\end{itemize}
Since $S^*$ was the maximal set where value of function $f_{a^*}$ is $0$ and minimum value of $f_{a^*}$ is zero, we have that for any $S\subset \CA'\setminus
S^*$, $f_{a^*}(S\cup S^*) >0$. If there is an $a^T> a^*$ such that $f_{a^T}(T)=0$, then by applying Lemma \ref{lem:rational} it is a rational of polynomial size and can be computed by solving a linear program, otherwise set $a^T=\infty$. Among
all of these pick the least one, lets call it $a^{min}$. It has to be strictly more than $a^*$. 

Let $\CA''=\CA'\setminus S^*$, and fix set $S\subset \CA''$. Let $\hll=\ll^{a^*}$, $\hxx$ be an optimal allocation of $LP(\hll)$ where 
$\CA''\setminus S$ gets the best, then $S$, then $S^*$ and then rest of the segments. Let $\hpp=\pp^{a^*}$. 
Similar to Lemma \ref{lem:rational} solve the following LP to compute maximum value of $a^{S}$ such that $m(S)\ge \pay_S$ when 
$\lambda_i$s of only $\CA''$ is increased. Here $b,\ddelta$ and $\aal$ variables,
$c^S_j = \sum_{i \in S} \hx_{ij},\ \forall j$, and $\hl=\hl_i,\ i\in \CA''$.
\[
\begin{array}{ll}
\max: b  \ \ \ \mbox{s.t.}\\
 \forall j: & \sum_{i\notin \CA''} \hx_{ij} > 0 \mbox{ or } \sum_i \hx_{ij} < 1 \Rightarrow \delta_j =0\\
 \forall i\in \CA'', \forall j: & \hx_{ij} > 0 \Rightarrow (\hl+b)d_{ij} = \sum_k \aijk \alpha_{ik} - \hp_j -\delta_j\\
 \forall i\in \CA'', \forall j: & \hx_{ij} = 0 \Rightarrow (\hl+b)d_{ij} \ge \sum_k \aijk \alpha_{ik} - \hp_j -\delta_j\\
 \forall i\in \CA'', \forall k: & \sum_j a_{ijk} \hx_{ij} > r_{ik} \Rightarrow \alpha_{ik} =0 \\
& \sum_{i\in S, j} (\hp_j +\delta_j) c^S_j \le m(S) \\
 & b\ge 0,\ \ddelta\ge \zeros,\ \aal \ge \zeros
\end{array}
\begin{array}{ll}
\forall (i,j)
\end{array}
\]
Clearly, $b=0$, $\delta_j=0,\ \forall j$, and $\alpha_{ik}=\ha_{ik},\forall (i,k)$ is feasible where $(\haal,\hpp)$ is the valid optimal solution of $LP(\hll)$. If there is a finite optimal of the above LP then set $a^S=b$ otherwise set $a^S=\infty$. Since $a^T>a^*$, we have $a^S>0$.
Taking $\ii$ to be less than $a^S,\ \forall S\subset \CA''$ will suffice, due to monotonicity of $f_a$
function (Lemma \ref{lem:mona}). Since $a^S$s are polynomial size, there is an $\ii$ of polynomial size.
\end{proof}

Putting everything together next we argue that at the end of the algorithm all the created segments satisfy $BB$ and $SC$ w.r.t. 
$(\llcur,\pcur)$. In other words, $(\llcur,\pcur)$ are \efeasible. 

\begin{lemma}\label{lem:alg}
The $\llcur$ and $\pcur$ obtained at the end of Algorithm \ref{alg} are \feasible\ (Definition \ref{def:feas}).
\end{lemma}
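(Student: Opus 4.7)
The plan is to prove Lemma \ref{lem:alg} by induction on the iteration counter $k$ of Algorithm \ref{alg}. The inductive hypothesis I would maintain is that after $k$ iterations of the while loop, the current state $(\llcur, \pcur, \CA')$ satisfies: (a) $\pcur$ is the valid price component of some optimal dual solution of $DLP(\llcur)$, (b) the segments $S_1, \dots, S_k$ already built satisfy both $\BB$ and $\SC$ with respect to $(\llcur, \pcur)$, and (c) the remaining set $\CA'$ satisfies $\SC$ with respect to $(\llcur, \pcur)$. Once the loop terminates with $\CA' = \emptyset$, condition (a) together with (b) applied to every $S_g$ is exactly the definition of \efeasibility (Definition \ref{def:feasible}).

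For the base case ($k=0$) I would note that the initialization sets $\llcur = 0$ and $\pcur = 0$ with $\CA' = \CA$. Clause (a) is immediate since $\ddelta = \zeros, \aal = \zeros$ is dual-feasible and optimal. Clauses (b) is vacuous, and (c) is trivial because $\pay_T(\xx, \pz) = 0 \le m(T)$ for every $T \subseteq \CA$, so $\SC$ holds.

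For the inductive step I would invoke the correctness of the subroutine \nextseg\ (Algorithm \ref{alg.main.leastseg}). By Lemma \ref{lem:bin}(i)-(iii) together with the monotonicity and continuity guaranteed by Lemma \ref{lem:bin}(ii), the binary search on $a$ (over the polynomial-sized interval $[0, \Delta]$) converges on an $a^*$ where the minimizer $S^0$ of $f_{a^0}$ equals the minimizer $S^1$ of $f_{a^1}$; by Lemma \ref{lem:bin}(iv) the exact rational $a^*$ with $f_{a^*}(S^*) = 0$ is recovered by a polynomial-sized feasibility LP, and the inner greedy loop extends $S^*$ to a maximal minimizer using submodular minimization (Lemma \ref{lem:submodular}). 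Thus the hypotheses of Lemma \ref{lem:segCond} are met: there is $a^* \ge 0$ such that $S^k \in \argmin_{S \subseteq \CA'} f_{\ll^{a^*}, \pp^{a^*}}(S)$ with value $0$, and $S^k$ is maximal. Lemma \ref{lem:segCond} (together with Lemma \ref{lem:ii} for the choice of $\epsilon$) then yields a polynomial-sized $\epsilon > 0$ such that, setting $\llnew = \ll^{a^*} + \epsilon \ones_{\CA' \setminus S^k}$ and $\pnew$ to the (unique) valid price vector for $DLP(\llnew)$, the previously-satisfied $\BB$ and $\SC$ conditions for $S_1, \dots, S_{k-1}$ are preserved, the new segment $S_k$ also satisfies $\BB$ and $\SC$, and the reduced remaining set $\CA' \setminus S^k$ still satisfies $\SC$. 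Clause (a) of the hypothesis is preserved by the very definition of validity (via Lemma \ref{lem:price}).

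Termination follows because each invocation of \nextseg\ returns a nonempty $S^k \subseteq \CA'$ (the search range includes sets with $f_a(S) > 0$ and $f_a(S) < 0$, so a nontrivial minimizer exists), so $|\CA'|$ strictly decreases at each iteration, guaranteeing the loop halts after at most $n$ rounds. When it does, $\CA' = \emptyset$, the full partition $S_1, \dots, S_K$ of $\CA$ satisfies $\BB$ and $\SC$ with respect to $(\llcur, \pcur)$, and by clause (a) there exists $\aal$ with $(\aal, \pcur)$ optimal for $DLP(\llcur)$; this is precisely the \efeasibility condition. The main obstacle I expect is verifying clause (a) is maintained across iterations --- i.e., the ``valid'' price updates produced by Lemma \ref{lem:price} really do remain optimal for the new parameter vector after both the jump $\llcur \mapsto \ll^{a^*}$ and the subsequent perturbation by $\epsilon \ones_{\CA' \setminus S^k}$; this boils down to a careful bookkeeping argument that the held-fixed $\alpha_{ik}$'s for agents in previous segments, the fixed prices on goods that those agents touch, and the monotone increase of prices on other goods, together satisfy complementary slackness with some optimal primal allocation at $\llnew$, which is exactly what Lemma \ref{lem:price} delivers once applied inductively.
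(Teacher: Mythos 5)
Your proof is correct and follows essentially the same inductive argument as the paper: both maintain the invariant that, at the start of the $k$th call to \nextseg, the segments $S_1,\dots,S_{k-1}$ satisfy $\BB$ and $\SC$ while $\CA'$ satisfies $\SC$, and both propagate this invariant using Lemmas \ref{lem:segCond} and \ref{lem:ii} until $\CA'=\emptyset$. Your write-up additionally makes explicit the price-validity bookkeeping (your clause (a)) and the internal correctness of the binary search via Lemma \ref{lem:bin}, which the paper treats implicitly or defers to Theorem \ref{thm:correct}, but these are elaborations of the same argument rather than a different route.
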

\begin{proof}
Lemmas \ref{lem:segCond} and \ref{lem:ii} imply that if at the beginning of $k^{th}$ call to $\nextseg$, w.r.t. $(\llcur,\pcur)$, 
$S_1,\dots,S_{k-1}$ satisfies $BB$ and $SC$, and $\CA'$ satisfies $SC$, then at the end of it, w.r.t. $(\llnew,\pnew)$, $S_1,\dots,
S_k$ satisfies $BB$ and $SC$, and $\CA'\setminus S_k$ satisfies $SC$. Applying this inductively, starting from $k=1$ where $\CA'=\CA$, 
and resetting $\CA'=\CA'\setminus S_k$ every time, the theorem follows. 
This is because if we made $s$ calls in total to $\nextseg$\ forming segments $S_1,\dots,S_s$, and $\CA'=\emptyset$ at the end, then
all $s$ segments satisfy $BB$ and $SC$ w.r.t. $(\llcur,\pcur)$ at the end. Thus $(\llcur,\pcur)$ are \efeasible.
\end{proof}

Next we show correctness of Algorithm \ref{alg} using Lemmas \ref{lem:feas}, \ref{lem:bin} and \ref{lem:alg}, and Theorem \ref{thm:lpl2eq},
and the next theorem follows.

\begin{theorem}\label{thm:correct}
Given a market $\CM$ satisfying extensibility and sufficient demand, Algorithm \ref{alg} returns its equilibrium allocation and
prices in time polynomial in the size of the bit description of $\CM$.
\end{theorem}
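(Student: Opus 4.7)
The plan is to combine the structural results (Theorem~\ref{thm:lpl2eq}, Lemmas~\ref{lem:feas}, \ref{lem:alg}) to establish correctness, and then aggregate the polynomial-time bounds of the subroutines (Lemmas~\ref{lem:submodular} and~\ref{lem:bin}) to bound the overall running time.

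\textbf{Correctness.} First, I would invoke Lemma~\ref{lem:alg}, which guarantees that when the outer \texttt{while}-loop of Algorithm~\ref{alg} terminates (i.e.\ when $\CA'=\emptyset$), the pair $(\llcur,\pcur)$ is \efeasible in the sense of Definition~\ref{def:feasible}. In particular, there is an optimal dual solution $(\aal,\pcur)$ of $DLP(\llcur)$ such that every segment $S_g$ in the induced partition satisfies both $\BB$ and $\SC$. Since the algorithm only creates segments with strictly positive $\lambda$-values (the base case starts at $0$ but every call to \nextseg increases the $\lambda$-coordinate of the remaining agents by a strictly positive amount before the set is frozen), we have $\llcur>0$. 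Lemma~\ref{lem:feas} then supplies an optimal solution $\xx^*$ of $LP(\llcur)$ for which $\pay_i(\xx^*,\pcur)=m_i$ for all $i\in A$; equivalently, the feasibility LP~\eqref{eq:feasLP} used in the final step of Algorithm~\ref{alg} is feasible, and any of its solutions $\xx$ is simultaneously primal-optimal for $LP(\llcur)$ and budget-tight. Finally, applying Theorem~\ref{thm:lpl2eq} to the pair $\bigl(\xx,\,(\aal,\pcur)\bigr)$ yields that $(\xx,\pcur)$ is a market equilibrium of $\CM$.

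\textbf{Running time.} The outer loop terminates in at most $n=|A|$ iterations, since each call to \nextseg strictly decreases $|\CA'|$ by removing a non-empty set $S^k$. It therefore suffices to show that each call to \nextseg\ runs in polynomial time, as does the final solve of LP~\eqref{eq:feasLP}. Inside \nextseg, the binary search is carried out over the interval $[0,\Delta]$, whose encoding length is polynomial in the input size by the definition of $\Delta$ and by Lemma~\ref{lem:bin}(iii), which ensures that a zero of $g$ of polynomial-size rationality lies in this range. Each step of the binary search requires (i) minimizing $f_{a}$ over $\CA'$, which is polynomial-time since $f_a$ is submodular by Lemma~\ref{lem:submodular} and submodular minimization admits a strongly polynomial algorithm \cite{schrijver.book}; the valid prices $\pp^a$ needed to evaluate $f_a$ can be produced by solving the polynomial-sized LP constructed in the proof of Lemma~\ref{lem:price}. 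After a polynomial number of halvings we isolate a candidate $S^*$; Lemma~\ref{lem:bin}(iv) then lets us pin down the exact rational $a^*$ with $f_{a^*}(S^*)=0$ by solving one more feasibility LP of polynomial size. The subsequent greedy expansion to the maximal minimizer of $f_{a^*}$ costs at most $|\CA'|$ additional submodular minimizations, and the polynomial-size $\epsilon$ guaranteed by Lemma~\ref{lem:ii} can be set to $1/\Delta$ as done in Algorithm~\ref{alg.main.leastseg}. Hence each call to \nextseg runs in polynomial time, and the terminal solve of LP~\eqref{eq:feasLP} is a linear program of polynomial size. Combining these bounds across the $\le n$ iterations gives an overall polynomial running time, completing the proof.

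\textbf{Main obstacle (anticipated).} The conceptually delicate part of the argument is not the running-time accounting but making sure that the inductive invariant needed by Lemma~\ref{lem:alg} (namely, $\BB$ and $\SC$ for previously frozen segments and $\SC$ for the remaining agents) is preserved across the increase of $\lambda_i$ for $i\in\CA'$ \emph{simultaneously} with the monotone rise of the valid prices from Lemma~\ref{lem:price}; this is exactly what Lemma~\ref{lem:segCond} supplies. Thus the cleanest organization is to quote Lemma~\ref{lem:alg} as the inductive backbone for correctness, quote Lemmas~\ref{lem:submodular}, \ref{lem:bin}, and the submodular-minimization oracle for the running time, and glue the two with Lemma~\ref{lem:feas} and Theorem~\ref{thm:lpl2eq} at the termination step.
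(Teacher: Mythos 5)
Your proposal is correct and follows essentially the same route as the paper's own proof: correctness is assembled from Lemma~\ref{lem:alg}, Lemma~\ref{lem:feas}, and Theorem~\ref{thm:lpl2eq}, and the polynomial running time is obtained by bounding the $\le n$ outer iterations and the binary search plus submodular minimization inside each call to \nextseg, together with the polynomial-size rationality guaranteed by Lemma~\ref{lem:bin}. Your explicit check that $\llcur>0$ at termination (a hypothesis of both Lemma~\ref{lem:feas} and Theorem~\ref{thm:lpl2eq}) is a detail the paper's proof leaves implicit, so flagging it is a welcome clarification rather than a divergence.
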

\begin{proof}
The fact that if Algorithm \ref{alg} terminates in polynomial time then it
returns equilibrium allocation and prices of market $\CM$ follows from Lemmas
\ref{lem:alg} and \ref{lem:feas}, and Theorem \ref{thm:lpl2eq}.

The question is why should the algorithm terminate, and that too in polynomial-time. Note that, every call to \nextseg\ reduces size of
active set of agents. Therefore, if the instance has $n$ agents then algorithm makes at most $n$ calls to \nextseg. Subroutine
\nextseg\ does binary search for value of $a$ between $0$ and a polynomial-sized rational. In each iteration of binary search it
minimizes a submodular function, and in each call to the submodular function we will be solving at most constantly many linear programs
of polynomial size. Thus submodular minimization can be done in polynomial time. 
Due to Lemma \ref{lem:bin} and in particular
Lemma \ref{lem:rational} value of $a$ where $g(a)$ becomes zero for some set is rational of polynomial-size. Hence overall the
binary search has to terminate in polynomial time. 
The next loop again takes at most $O(n)$ iterations, with sub-modular minimization in each. Thus the \nextseg\ subroutine terminates in
polynomial-time. Finding allocation satisfying \ref{eq:budget} of all the agents $i\in \CA$ at the end of the algorithm, given prices $\pcur$
and $\lambda$ values $\llcur$ is equivalent to solving the feasibility linear program (\ref{eq:feasLP}). Thus, overall the algorithm
terminates in polynomial time. 
\end{proof}

We get our main result using Theorem \ref{thm:correct}.

\algo*

\section{Fairness and Incentive Compatibility Properties}
\label{app:prop}
In this section we show fairness properties of our general model, and incentive compatibility properties of our algorithm as a mechanism in scheduling application. Yet another utility model is that of quasi-linear utilities, where the agent also specifies an ``exchange rate'' between delay and payments, and wants to minimize a linear combination of the two. We show in Section \ref{app:QL} that for such a utility model \emph{there is no }IC mechanism that is also Pareto optimal and anonymous, even with a single good and two agents. 

\subsection{Fairness Properties}\label{app:fair}
The first welfare theorem for traditional market models says that at equilibrium the utility vector of agents is Pareto-optimal among utility vectors at all possible feasible allocations. We first show similar result for our markets that satisfy mild condition of {\em sufficient demand} (see Definition \ref{def:ED}). 
For our model the set of feasible delay cost vectors are $$\begin{array}{l}\CD = \{(\delay_1(\xx),\dots,\delay_n(\xx))\ |\ \x_i \mbox{ is feasible in \ref{eq:CC} for each agent $i\in A$,}\\
\hspace{6cm} \mbox{and $\xx$ satisfies \ref{eq:supply} for each good $j\in G$}\}
\end{array}$$

\begin{theorem}
Given market $\CM$ satisfying {\em sufficient demand}, the delay cost vector at any of its equilibrium is Pareto-optimal in set $\CD$.
\end{theorem}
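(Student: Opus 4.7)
The plan is to argue by contradiction, adapting the classical proof of the first welfare theorem. Suppose $(\xx^*,\pp^*)$ is an equilibrium and there is an allocation $\xx'$ (feasible for $\eqref{eq:CC}_{i\in A}$ and $\eqref{eq:supply}$) such that $\delay_i(\xx') \le \delay_i(\xx^*)$ for every $i$ with strict inequality for some $i^*\in A$. The core step will be to show that $\sum_j p^*_j x'_{ij} \ge m_i$ for every agent $i$, with strict inequality for $i^*$.

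For $i^*$, and more generally for any agent $i$ with $\delay_i(\xx') < \delay_i(\xx^*)$, the bundle $\x^*_i$ is a delay-minimizer of $\textup{OB-LP}(i)$ at prices $\pp^*$ and budget $m_i$. Since $\x'_i$ satisfies $\eqref{eq:CC}$ and achieves strictly smaller delay, it cannot be feasible in $\textup{OB-LP}(i)$; hence $\sum_j p^*_j x'_{ij} > m_i$. The main obstacle will be the remaining agents, those with $\delay_i(\xx') = \delay_i(\xx^*)$, since for them the bundle $\x'_i$ could a priori be strictly cheaper than $m_i$ at prices $\pp^*$. Here I plan to invoke Lemma \ref{lem:money}: under sufficient demand, the dual variable $\gamma_i$ of the budget constraint in $\textup{OB-LP}(i)$ is strictly positive at equilibrium. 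If $\sum_j p^*_j x'_{ij} \le m_i$ then $\x'_i$ is feasible in $\textup{OB-LP}(i)$ and, since $\delay_i(\x'_i)=\delay_i(\x^*_i)$, optimal. Complementary slackness with $\gamma_i>0$ then forces the budget constraint to be tight, giving $\sum_j p^*_j x'_{ij}=m_i$. Either way, $\sum_j p^*_j x'_{ij}\ge m_i$.

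To close the argument I will aggregate the inequalities over $i$, obtaining
\[
\sum_{i}\sum_j p^*_j x'_{ij} \;>\; \sum_i m_i.
\]
On the other hand, at equilibrium market clearing says $p^*_j>0 \Rightarrow \sum_i x^*_{ij}=1$, so by Lemma \ref{lem:money}
\[
\sum_i m_i \;=\; \sum_i\sum_j p^*_j x^*_{ij} \;=\; \sum_{j:\,p^*_j>0} p^*_j.
\]
Using $\sum_i x'_{ij}\le 1$ for every good (from $\xx'\in\CD$) and $p^*_j=0$ otherwise, we get $\sum_i\sum_j p^*_j x'_{ij} \le \sum_{j:\,p^*_j>0} p^*_j = \sum_i m_i$, which contradicts the strict inequality above. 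The proof therefore reduces to carefully verifying the complementary-slackness claim for the agents whose delay is unchanged; this is where the sufficient demand hypothesis, through Lemma \ref{lem:money}, enters in an essential way.
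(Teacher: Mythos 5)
Your proof is correct, but it takes a genuinely different route from the paper's. The paper invokes Lemma~\ref{lem:eq2lpl} (the converse of Theorem~\ref{thm:lpl2eq}, itself reliant on Lemma~\ref{lem:money}) to say that the equilibrium allocation $\xx^*$ is an optimal solution of $LP(\ll^*)$ for some strictly positive vector $\ll^*$; Pareto optimality then follows in one line because any $\xx'$ that weakly dominates $\xx^*$ on every coordinate with one strict improvement would be a feasible point of $LP(\ll^*)$ with a strictly smaller objective $\sum_i \lambda_i^* \delay_i(\cdot)$. You instead run the classical first-welfare-theorem aggregation directly from $\textup{OB-LP}(i)$: strict improvers must overspend, equal-delay agents must spend exactly $m_i$ by the tight-budget consequence of $\gamma_i>0$ (Lemma~\ref{lem:money}), and summing against $\sum_i x'_{ij}\le 1$ and market clearing yields a contradiction. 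Both routes lean on the same essential input (sufficient demand $\Rightarrow$ every agent exhausts her budget), but the paper gets additional leverage from the parameterized-LP machinery already built for the algorithm, whereas yours is self-contained, does not need Lemma~\ref{lem:eq2lpl}, and reads more like the standard economics argument. One minor point worth making explicit in your write-up: in the equal-delay case you are using the LP fact that if $\gamma_i>0$ in some optimal dual of $\textup{OB-LP}(i)$, then the budget constraint is tight at \emph{every} optimal primal of $\textup{OB-LP}(i)$ (complementary slackness holds across any optimal primal/dual pair), not just at $\x^*_i$; that is the step that lets you conclude $\sum_j p^*_j x'_{ij}=m_i$ rather than merely $\le m_i$.
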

\begin{proof}
Let $(\xx^*,\pp^*)$ be an equilibrium. 
Using Lemma \ref{lem:eq2lpl} we know that for some vector $\ll^*>0$, $\xx^*$ is a solution of $LP(\ll^*)$. Let $d^*_i = \delay_i(\xx^*)$. If there is $\d\in \CD$ such that $d_i\le d^*_i,\ \forall i\in A$ with at least one strict inequality, then the allocation corresponding to $\d$ is feasible in $LP(\ll^*)$ and would give strictly lower objective value, a contradiction. 
\end{proof}

The next theorem establishes envy-freeness for the general model and follows directly from the equilibrium condition that every agent demands an optimal bundle at given prices. 
\begin{theorem}
Equilibrium allocation of a given market $\CM$ is envy-free.
\end{theorem}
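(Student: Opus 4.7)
The plan is to derive the result directly from the optimal-bundle condition at equilibrium, which is essentially what the surrounding prose already suggests. Given equilibrium $(\xx^*,\pp^*)$, each allocation $\x_i^*$ is by definition an optimal solution of the per-agent problem OB-LP$(i)$ at prices $\pp^*$, i.e., among all bundles satisfying $\eqref{eq:CC}_i$ and the budget constraint $\sum_{l\in G} p_l^* x_{il}\le m_i$, the bundle $\x_i^*$ minimizes $\delay_i$.

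First I would fix the notion of envy-freeness appropriate to this model: agent $i$ envies agent $j$ if the bundle $\x_j^*$ is usable by $i$ in the sense that (a)~it satisfies $i$'s covering constraints $\eqref{eq:CC}_i$, (b)~it is affordable under $i$'s budget at the equilibrium prices, and (c)~it yields strictly lower delay for $i$, namely $\delay_i(\x_j^*)<\delay_i(\x_i^*)$. This is the natural analog of envy-freeness in a market where agents have their own budgets, covering constraints and delay vectors; the conditions (a) and (b) are exactly the constraints that would allow $i$ to adopt $\x_j^*$ in place of $\x_i^*$. Under this definition the claim reduces to a one-line argument.

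The core step is a contradiction: if $i$ envied $j$, then $\x_j^*$ would be a feasible point of OB-LP$(i)$ with strictly smaller objective than $\x_i^*$, contradicting the equilibrium requirement that $\x_i^*$ be optimal in OB-LP$(i)$ at prices $\pp^*$. Since $i$ and $j$ were arbitrary, no agent envies any other.

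The only real obstacle is conceptual rather than technical: one must pin down the right envy-freeness notion, because agents here have heterogeneous covering constraints and budgets, so the ``swap bundles'' comparison only makes sense when the candidate bundle is both affordable for $i$ and sufficient for $i$'s covering needs. Once this is set, there is no additional combinatorial work — the proof is a single invocation of the optimality of $\x_i^*$ in OB-LP$(i)$, and requires none of the machinery (extensibility, Lemma \ref{lem:lp-prop}, submodularity, etc.) developed for the algorithmic parts of the paper.
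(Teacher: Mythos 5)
Your proposal is correct and matches the paper's reasoning: the paper gives no proof block at all, only the remark that envy-freeness ``follows directly from the equilibrium condition that every agent demands an optimal bundle at given prices,'' which is precisely the one-line contradiction you spell out via OB-LP$(i)$. Your extra care in stating the right envy-freeness notion (the other agent's bundle must satisfy $i$'s covering constraints \emph{and} be affordable under $i$'s budget at prices $\pp^*$) is the correct and necessary disambiguation, and is exactly what the paper implicitly assumes; in the CEEI application where budgets are equal the affordability condition is automatic, but in general it must be stated.
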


Next we show that at equilibrium each agent gets a ``fair share'': the equilibrium allocation Pareto-dominates an ``equal share" allocation, where each agent gets an equal amount of each resource. This property is also known as \emph{sharing incentive} in the scheduling literature \cite{Ghodsi2011}.

\begin{theorem}
Given a market $\CM$, let $\xx$ be an allocation where agent $i$ gets $\frac{m_i}{\sum_{i\in A} m_i}$ amount of each good, i.e., $x_{ij} = \frac{m_i}{\sum_{i\in A} m_i},\ \forall i \in A, \forall j \in G$. 
Then at any equilibrium $(\xx^*,\pp^*)$ of market $\CM$, $\delay_i(\xx^*)\le \delay_i(\xx),\ \forall i \in A$. 
\end{theorem}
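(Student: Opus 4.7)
The plan is to show that for each agent $i$, the equal‑share bundle $\bar{\x}_i := (m_i/M)\cdot\mathbf{1}_G$ (where $M := \sum_{\ell\in A} m_\ell$) is itself a feasible and affordable point of agent $i$'s optimal bundle problem at the equilibrium prices $\pp^*$. Once that is established, the optimality of $\x^*_i$ gives $\delay_i(\x^*_i) \le \delay_i(\bar{\x}_i) = \delay_i(\xx)$ immediately. The bundle $\bar{\x}_i$ satisfies $\eqref{eq:CC}_i$ by hypothesis (this is the content of ``$\xx$ is an allocation'' in the statement), and the supply constraints are satisfied with equality since $\sum_{i\in A} \bar{x}_{ij} = \sum_i m_i/M = 1$ for every $j$. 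So the only nontrivial thing to check is affordability, i.e.\ $\sum_j p^*_j \bar{x}_{ij} \le m_i$, which reduces to the inequality $\sum_j p^*_j \le M$.

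The key step will therefore be to prove the aggregate price bound $\sum_j p^*_j \le M$. First I would use the market clearing condition of the equilibrium: for each good $j$ either $p^*_j=0$ or $\sum_i x^*_{ij}=1$, so in either case $p^*_j \cdot \sum_i x^*_{ij} = p^*_j$. Summing over $j$ yields
\[
\sum_j p^*_j \;=\; \sum_j p^*_j \sum_i x^*_{ij} \;=\; \sum_{i\in A} \sum_j p^*_j x^*_{ij}.
\]
Then I would invoke the budget constraint $\sum_j p^*_j x^*_{ij} \le m_i$ (which holds since $\x^*_i$ is an optimal, hence feasible, bundle in agent $i$'s program), sum over $i$, and conclude $\sum_j p^*_j \le \sum_i m_i = M$, as desired.

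With this in hand, $\sum_j p^*_j \bar{x}_{ij} = (m_i/M)\sum_j p^*_j \le m_i$, so $\bar{\x}_i$ satisfies Budget constraint$(i)$ at prices $\pp^*$. Combined with $\bar{\x}_i$ satisfying \eqref{eq:CC}$_i$, the vector $\bar{\x}_i$ is a feasible point of \eqref{eq:coveringLP} under the budget constraint, which is exactly the LP that $\x^*_i$ solves optimally at equilibrium. Thus
\[
\delay_i(\x^*_i) \;=\; \sum_j d_{ij} x^*_{ij} \;\le\; \sum_j d_{ij} \bar{x}_{ij} \;=\; \delay_i(\xx),
\]
for every $i\in A$, which is the claim. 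The main obstacle, if any, is really just translating the market clearing condition carefully into the price‑sum bound; the rest is a direct application of the definition of equilibrium.
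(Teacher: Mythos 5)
Your proof is correct and follows essentially the same route as the paper's: first derive the aggregate price bound $\sum_j p^*_j \le \sum_i m_i$ from market clearing and the budget constraints, then observe that the equal-share bundle is affordable at $\pp^*$, and conclude by optimality of $\x^*_i$ in agent $i$'s program. The one tiny difference is in handling the case where the equal-share bundle might not satisfy $\eqref{eq:CC}_i$: you treat its feasibility as part of the hypothesis (``$\xx$ is an allocation''), whereas the paper says that if the equal-share bundle is infeasible then its delay is ``infinity'' and the inequality holds vacuously. Neither treatment is more rigorous than the other given the paper's formal definition $\delay_i(\xx)=\sum_j d_{ij}x_{ij}$, which is finite for any vector; strictly speaking the theorem requires either reading, and both proofs make the same assumption in effect.
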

\begin{proof}
At equilibrium under-sold goods have price zero, and no agent spends more than its budget. This gives $$\sum_{i,j} x^*_{ij} p^*_j \le \sum_i m_i \Rightarrow \sum_j p^*_j \sum_i x^*_{ij} \le \sum_i m_i \Rightarrow \sum_j p^*_j \le \sum_i m_i$$. 

Therefore, for agent $i$, bundle $\x_i$ where amount of every good is exactly $\frac{m_i}{\sum_{i\in A} m_i}$ costs money $\sum_j x_{ij} p^*_j = \frac{m_i}{\sum_i m_i} \sum_j p^*_j \le m_i$. Thus, it is affordable at prices $\pp^*$. However, she preferred $\x^*_i$ instead, which implies either $\x'_i$ is not feasible in \ref{eq:CC} in which case $\delay_i(\x'_i)$ is infinity, or she prefers $\x^*_i$ to $\x'_i$. In either case we get $\delay_i(\xx^*)\le\delay_i(\xx)$. 
\end{proof}


\subsection{Scheduling: Algorithm as a Truthful Mechanism}\label{app:IC}

Market based mechanisms are usually not (dominant strategy) {\em incentive compatible} (IC), 
except in the large market assumption where each individual agent is too small to influence the price, 
and therefore can be assumed to act as a price taker.  
Somewhat surprisingly, we can show IC, in a certain sense, of the market based mechanism for the special case of our market that corresponds to the scheduling setting presented in Section \ref{sec:cloud}, and its generalization described in \ref{sec:specialcases} with multiple machine types. 

We show that our market based mechanism is IC  in the following sense: 
non-truthful reporting of $m_i$ and $r_{ik}$s can never result in an allocation with a lower delay cost.   
A small modification to the payments, keeping the allocation the same, makes the entire mechanism incentive compatible for the setting in which agents want to 
first minimize their delay and subject to that, minimize their payments. 

The first incentive compatibility assumes that utility of the agents is only the delay, and does not depend on the money spent (or saved). 
Such utility functions have been considered in the context of online advertising \citep{borgs2007dynamics,feldman2007budget,muthukrishnan2010stochastic}. 
It is a reflection of the fact that companies often have a given budget for procuring compute resources, 
and the agents acting on their behalf really have no incentive to save any part of this budget. 
Our model could also be applied to scenarios with virtual currency in which case the agents truly don't have any incentive to minimize payments. 

The second incentive compatibility does take payments into account, but gives a strict preference to delay over payments. Such preferences are also seen in the online advertising world, where advertisers want as many clicks as possible, and only then want to minimize payments. 
The modifications required for this are minimal, and essentially change the payment from a ``first price'' to a
``second price'' wherever required.

Yet another utility model is that of quasi-linear utilities, where the agent also specifies an ``exchange rate'' between delay and payments, and wants to minimize a linear combination of the two. 
We show in Section \ref{app:QL} that for such a utility model \emph{there is no }IC mechanism that is also Pareto optimal and anonymous, 
even with a single good and two agents. 
Pareto optimality is a benign notion of optimality that has been used as a benchmark for designing 
combinatorial auctions with budget constraints \cite{dobzinski2012multi,fiat2011single,goel2015polyhedral}. 
Anonymity is also a reasonable restriction, which disallows favoring any agent based on the identity. 
In the face of this impossibility, our mechanism offers an attractive alternative.

\subsubsection{Pure delay minimization}
\label{sec:ICFT} 
Suppose that there are $j$ independent copies of the basic scheduling setting in Section \ref{sec:cloud}, 
with the requirement of agent $i$ for the $j^\text{th}$ copy being $\rik$. 
In this section we show that our algorithm is actually IC, i.e., the agents have no incentive to misreport $m_i$s or $\rik$s, assuming that agents only want to minimize their delay cost and don't care about their payments as long as they are within the budgets. 
Note that reporting lower $m_i$ or higher $\rik$ are the only possible types of misreport.
Fixing preferences of all agents except agent $i$, consider two runs of the algorithm, one where 
agent $i$ is truthful and another where she misreports her preferences. 
In particular, say agent $i$ either reports a lower budget $m_i'$, and/or a higher requirements $\rik'$ for good $j$. 
 
Consider the first iteration in which the two runs differ, and let $(S_1, \lambda_1)$ and $(S_2,\lambda_2)$ be the segments found
respectively in the truthful and non-truthful runs in this iteration.  For any $\ll$, any $\pp$, and any set $S$ that does not contain $i$,
$f_{\pp,\ll}(S)$ remains the same between the two runs; for any set $S$ that contains $i$, $f_{\pp,\ll}(S)$ is strictly smaller in the
non-truthful run. Hence, $i$ does not belong to any of the segments found in earlier iterations, and $S_2$ necessarily contains
$i$.\footnote{Consider the possibilities where $i \notin S_2$ and note that $S_2$ cannot be the minimizer in the non-truthful run given
that $S_1$ is the minimizer in the truthful run.} Further, $\lambda_2 < \lambda_1$. 

Let $\CA'$ be the set of agents who are not in one of the segments found prior to the current iteration. By definition $\CA'$ is the same for
both the runs, and includes $i$, as argued in the previous paragraph.  Let $X^1$ and $X^2$ be respectively the allocations output
by the algorithm for the truthful and the non-truthful runs.  We will show the existence of a weakly feasible allocation $X'$ such
that (1) For every agent $i'\in \CA', i' \neq i,$ his delay in $X'$ is no higher than his delay in $X^1$, and (2) For agent
$i$, his allocation in $X'$ is the same as his allocation in $X^2$.  

This implies that $i$ is no better off in the non-truthful run, because of the following reasoning. The total delay of all
the agents in $\CA'$ is minimized in $X^1$,
therefore the total delay of all the agents in $\CA'$ cannot be lower in allocation $X'$, even when the
delay for agent $i$ is calculated using only his actual requirements.  
Since no other agent has a higher delay in $X'$, it is
impossible for $i$ to get a lower delay.

It remains to show the existence of $X'$ as claimed. We define $X'$ differently based on whether the agent is in $S_2$ or not. 
\begin{description}
	\item [Case 1: $i'\in S_2$:] In this case, $\x'_{i'} = \x^2_{i'}$.  This satisfies the second requirement since $i\in S_2$.
	Since $\lambda_2 <\lambda_1$, every agent in $S_2$ faces a smaller price, for every copy $j$ and every time slot in which she is allocated.
	For $i'\neq i$, given the same budget and the same requirements, this actually implies that her delay in $X^2$ is strictly smaller than her delay in $X^1$.
	\item[ Case 2: $i' \notin S_2$:] In this case, we first start with the allocation $X^1$, in the slots  $[1,r_j(B\setminus S_2)]$ for each copy $j$.  
	Note that these slots have not been allocated at all in Case 1. 
	Consider the total deficit after this allocation.  
	This must be equal to the total amount of slots in  $[1,r_j(B\setminus S_2)]$ that are allocated to agents in $S_2$
	by $X^1$, because of feasibility of $X^1$.  
	Now re-allocate these empty slots in $[1,r_j(B\setminus S_2)]$ to make up for the remaining requirement of these agents, and note that this can only lower the delay. 
\end{description}

\subsubsection{Secondary preference for payments} 

In this section, we consider the utility model where an agent wants to first minimize her flow-time, and subject to that,  wants to further minimize her payments.  
We keep the same allocation as Algorithm \ref{alg.scheduling}, but change the payments of some agents, and show that this is IC. 

We first define the set of agents whose payments will be modified. 
Recall that Algorithm \ref{alg.scheduling} outputs a sequence of segments, where each segment corresponds to a pair $(\lambda,S)$. 
Call an agent \emph{marginal} if he gets the latest slots in his segment. This includes agents who are in singleton segments, as well as agents who just happen to get such an allocation even though they are in a segment with other agents. We modify the payments of only the marginal agents; 
all non-marginal agents pay their budget. 

\begin{lemma}
	Any non-marginal agent gets a strictly higher delay cost for any misreport of his information. 
\end{lemma}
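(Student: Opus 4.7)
The plan is to bootstrap the weak incentive-compatibility argument from Section~\ref{sec:ICFT} into a strict inequality whenever $i$ is non-marginal. Fix such an $i$ and any misreport, write $X^1, X^2$ for the truthful and misreport allocations, and let $(T_{t^*}, \lambda_{T_{t^*}})$ and $(U_{t^*}, \lambda_{U_{t^*}})$ denote the segments produced in the first iteration at which the two runs diverge. Section~\ref{sec:ICFT} establishes $i \in U_{t^*}$ with $\lambda_{U_{t^*}} < \lambda_{T_{t^*}}$, and constructs a feasible allocation $X'$ satisfying $d_{i'}(X') \le d_{i'}(X^1)$ for every $i' \in A' \setminus \{i\}$ and $d_i(X') = d_i(X^2)$, where $A'$ is the set of agents still active at iteration $t^*$.

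The first step I would take is the observation that both $X^1$ and $X'$ fully occupy the remaining slot range, so their total delays over $A'$ coincide with $1+2+\cdots+r(A')$. Under the contrapositive assumption $d_i(X^2) = d_i(X^1)$, the sandwich $\sum_{i'\in A'} d_{i'}(X^1) = \sum_{i'\in A'} d_{i'}(X') \le \sum_{i' \ne i} d_{i'}(X^1) + d_i(X^2) = \sum_{i'\in A'} d_{i'}(X^1)$ collapses to an equality and therefore forces $d_{i'}(X') = d_{i'}(X^1)$ for every $i' \in A' \setminus \{i\}$. I would then split into two cases according to whether $U_{t^*}$ contains some agent other than $i$.

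If $|U_{t^*}| \ge 2$, pick any $i' \in U_{t^*} \setminus \{i\}$. Since $X'$ agrees with $X^2$ on agents in $U_{t^*}$, the previous step forces $d_{i'}(X^2) = d_{i'}(X^1)$. But Section~\ref{sec:ICFT}'s Case~1 directly delivers $d_{i'}(X^2) < d_{i'}(X^1)$ strictly, because with $\lambda_{U_{t^*}} < \lambda_{T_{t^*}}$ the slot prices facing $i'$ are uniformly lower inside $U_{t^*}$ than they were inside $T_{t^*}$, while $i'$'s budget and requirement are unchanged. This contradicts the derived equality and closes the case.

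The singleton subcase $U_{t^*} = \{i\}$, which I expect to be the main obstacle, needs a direct delay computation rather than borrowing strictness from another agent. Here $i$ is forced to occupy the entire block $[T^{t^*-1} - r_i',\, T^{t^*-1} - 1]$, giving $d_i(X^2) = r_i' T^{t^*-1} - r_i'(r_i'+1)/2$. In the truthful run $i$ sits in $S^{k_i}$ for some $k_i \ge t^*$, and non-marginality rules out the possibility $k_i = t^*$ with $|T_{t^*}|=1$, so either $k_i > t^*$ (in which case all of $i$'s slots lie strictly to the left of $T_{t^*}$'s block, opening a slack of at least $r_i \cdot r(T_{t^*})$) or $k_i = t^*$ with $|T_{t^*}| \ge 2$ (in which case $i$'s delay strictly undershoots the ``latest $r_i$ positions of $T_{t^*}$'' bound $r_i T^{t^*-1} - r_i(r_i+1)/2$). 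A short arithmetic comparison in each branch, using $r_i' \ge r_i$ together with the feasibility bound $T^{t^*-1} \ge r_i' + 1$, yields $d_i(X^2) > d_i(X^1)$ strictly and completes the proof.
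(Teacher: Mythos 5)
Your approach mirrors the paper's: both split on whether the misreporter's segment $S_2$ (your $U_{t^*}$) in the non-truthful run is the singleton $\{i\}$. For $|S_2| \ge 2$, the paper merely asserts strict increase; your sandwich argument---force $d_{i'}(X')=d_{i'}(X^1)$ for $i'\ne i$ from the contradiction hypothesis $d_i(X^2)=d_i(X^1)$, then contradict via the strict delay decrease for another $S_2$-agent from Case~1 of Section~\ref{sec:ICFT}---is a correct fleshing-out of what the paper leaves implicit, so here your write-up actually adds rigor. For the singleton case the paper's reasoning is simpler and cleaner than yours: non-marginality means $i$ did \emph{not} occupy the latest slots of his segment in $X^1$, while $S_2=\{i\}$ forces him onto the overall latest slots in $X^2$, so he is strictly worse off; no slot arithmetic is needed.

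Your arithmetic version is not wrong in spirit but is doing more than necessary, and it carries a small bookkeeping slip: the formula $d_i(X^2)=r_i' T^{t^*-1}-r_i'(r_i'+1)/2$ uses the truthful boundary $T^{t^*-1}$, whereas a requirement misreport $r_i'>r_i$ shifts that boundary by $r_i'-r_i$; once one restricts to $i$'s best $r_i$ slots (his effective delay for his true requirement) the comparison does close, landing at $r_i T^{t^*-1}-r_i(r_i+1)/2$, but the stated intermediate quantity is off. Likewise, the opening claim that $\sum_{i'\in A'} d_{i'}(X') = 1+\cdots+r(A')$ fails exactly when $r_i'>r_i$, since $X'$ then occupies extra slots; you only need (and only have) the one-sided minimality inequality $\sum_{i'\in A'} d_{i'}(X') \ge \sum_{i'\in A'} d_{i'}(X^1)$, which still supports the sandwich.
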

\begin{proof}
	Consider the proof of incentive compatibility for only delay cost minimization in Section \ref{sec:ICFT}, and the notation therein. Note that if $S_2 \neq \{i'\}$, then the delay cost of $i'$ strictly increases. Now suppose $S_2= \{i'\}$. In the new allocation $f^2$, agent $i'$ gets the latest slots among all agents in $B$. Since $i'$ is not a marginal agent, he was getting a strictly better allocation in $f^1$, and the lemma follows. 
\end{proof}
This shows that the mechanism is IC for non-marginal agents, even with their payments equal to the budgets. 

We now define the modification to payments for marginal agents. As in Section \ref{sec:ICFT}, misreports can still not get a better delay cost for marginal agents, since the allocation remains the same. The only possibility is that 
misreporting can decrease payments, while keeping the delay cost the same. Marginal agents can decrease their budgets, still get the same allocation, and pay less in the equilibrium payment. 
This has a limit; at some lower budget declaration, they get ``merged'' with a previous segment, and  any further lowering of the budget will strictly lower their delay cost. 
\emph{The payment of a marginal agent is defined to be the infimum 
of all budget declarations for which the lower segments are unaltered,  
i.e., the run of the algorithm up to the previous segment remains unchanged}.

We now argue that this mechanism is IC, for marginal agents. 
We only need to consider misreports that don't change the allocation, since those that do only give a higher delay cost. 
Among these, misreporting the budget clearly has no effect on the payment. Finally, we argue that reporting a higher $r_{ik}$ can only lead to a higher payment. 
This is because the budget at which the agent merges with the previous segment happens at a higher value, as can be seen from the formula for $\lambda_S$. 

\subsection{Quasi Linear Utility Model}\label{app:QL}
In this section we consider a quasi linear utility model for the agents. In this model, agents can choose to tradeoff payment for delay cost, as specified by an ``exchange rate'', denoted by $\eta_i$, for agent $i$. 
We consider the design of incentive compatible (IC) auctions, that are also Pareto optimal. 
In the related literature of IC auctions for combinatorial auctions with budget constraints, this has been adopted 
as the standard notion of optimality. The usual notion of social welfare is ill fitted for the case of budgets.\footnote{Of course, the revenue objective is also widely considered, and continues to make sense even in the presence of budgets.} 
 
As in Section \ref{sec:model} let the allocation of agent $i$ for good $j$ denoted by $x_{ij}$, but now we don't have prices for the slots. 
Instead we simply have a payment for each agent, denoted by $payment(i)$ for agent $i$. The allocation and the payments are together called the outcome of the auction. Agent $i$ now wants to minimize the objective 
\[ \sum_j { d_{ij}   {x_{ij}  } } + \eta_i  payment(i) . \]

A type of an agent is its budget $m_i$, its covering constraints \ref{eq:CC}, and its $\eta_i$. 
An auction is (dominant strategy) IC if for any agent, misreporting its type does not lead to an outcome with a lower objective, no matter what the other agents report. 
An outcome is \emph{Pareto optimal} if for no other outcome, 
\begin{enumerate}
	\item all agents, including the auctioneer, are at least as well off as in the given outcome, and 
	\item at least one agent is strictly better off. 
\end{enumerate}
The  auctioneer's objective is to simply maximize the sum of all the payments. 

We also restrict the auction to be \emph{anonymous}, which means that the auction cannot rely on 
the identity of the agents. Formally, an auction is anonymous if it is invariant under all permutations of agent identities. 

The main result of this section is an impossibility. 
\begin{theorem}
	\label{thm:QLimpossibility}
	There is no IC, Pareto optimal,  and anonymous auction for our scheduling problem with quasi linear utilities, for the case of a single good and two agents. 
\end{theorem}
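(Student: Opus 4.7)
The plan is a direct reduction to the impossibility theorem of \citet{dobzinski2012multi} for two bidders competing for two identical items, each with private per-unit value $v_i$ and private budget $b_i$, which asserts that no anonymous, IC, and Pareto optimal deterministic auction exists in that setting. The idea is to embed such a Dobzinski-Lavi-Nisan instance into a scheduling instance of our model in the style of Section~\ref{sec:cloud}, so that any counterexample-mechanism on the scheduling side transports to a counterexample-mechanism on the multi-unit side.

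Concretely, given Dobzinski-Lavi-Nisan parameters $(v_1, b_1, v_2, b_2)$, I would fix a constant $D > 0$ and build a scheduling instance on a single machine with four time slots of delays $(0, 0, D, D)$. Both agents have covering requirement $r_i = 2$; agent $i$'s declared type is $(m_i, \eta_i)$ with $m_i = b_i$ and $\eta_i = D/v_i$. Because the total supply of low-delay slots is $2$ and each agent must receive exactly $2$ slots to satisfy the covering constraint, every feasible allocation satisfies $y_1 + y_2 = 2$, where $y_i \in [0, 2]$ denotes the mass of low-delay slots allocated to agent $i$. Agent $i$'s quasi-linear cost then equals $(2 - y_i) D + \eta_i p_i$, which after subtracting the constant $2D$ and dividing by $\eta_i$ becomes $p_i - v_i y_i$. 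Hence minimizing scheduling cost under the budget cap $p_i \le m_i$ is \emph{exactly} maximizing the Dobzinski-Lavi-Nisan bidder utility $v_i y_i - p_i$ under $p_i \le b_i$.

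Any IC, Pareto optimal, and anonymous mechanism for the scheduling problem therefore induces such a mechanism for the multi-unit auction: on input $(v_i, b_i)$, the induced mechanism runs the scheduling mechanism on report $(r_i = 2, m_i = b_i, \eta_i = D/v_i)$ and reinterprets the low-delay allocation $y_i$ as units won. IC transfers because any Dobzinski deviation $(v_i', b_i')$ corresponds to a scheduling deviation $(m_i', \eta_i')$ under this affine equivalence of utilities. Anonymity transfers because the reduction is symmetric in the two agents. Pareto optimality transfers because the sum of payments (auctioneer's welfare) agrees in the two models, and any Pareto improvement of the multi-unit outcome lifts to a Pareto improvement of the scheduling outcome via the same affine equivalence. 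This contradicts the impossibility of \citet{dobzinski2012multi} and proves Theorem~\ref{thm:QLimpossibility}.

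The main point that needs care is the transfer of Pareto optimality: one must check that every Pareto-optimal Dobzinski outcome (which \emph{a priori} only requires $y_1 + y_2 \le 2$) can indeed be realized as a Pareto-optimal scheduling outcome under the constraint $y_1 + y_2 = 2$ imposed by the covering requirements. This is handled by the standard observation that any multi-unit outcome leaving an item unassigned is Pareto-dominated by awarding the unused item for free to either bidder, so the constraint $y_1 + y_2 = 2$ is without loss of generality among PO outcomes. Beyond that check, the reduction is mechanical.
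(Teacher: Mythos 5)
Your reduction follows essentially the same plan as the paper's: embed the Dobzinski--Lavi--Nisan budget-constrained instance into a scheduling instance, show the quasi-linear scheduling objective is an affine transformation of the multi-unit bidder utility, and conclude by transferring IC, Pareto optimality, and anonymity. Your instance uses four slots of delays $(0,0,D,D)$ with $r_i=2$, whereas the paper uses two slots of delays $1,2$ with $r_i=1$; both work, and your extra observation that the covering constraints plus the supply bound force $y_1+y_2=2$ is a fine substitute for the paper's ``first two slots are completely allocated.''

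However, there is a genuine gap. The Dobzinski--Lavi--Nisan impossibility that you invoke is stated only for mechanisms that are \emph{individually rational} (IR) in addition to IC, Pareto optimal, and anonymous. Your statement of their theorem drops the IR hypothesis, and consequently you never verify that the mechanism your reduction produces on the multi-unit side is IR. In our scheduling model there is no explicit IR requirement on the auction, so this is not automatic. The paper closes this by showing that IC in the scheduling problem \emph{implies} IR in the induced multi-unit auction: if IR were violated for some agent $i$, i.e.\ $v_i y_i < \text{payment}(i)$, then $i$'s quasi-linear scheduling cost would exceed the constant offset (here $2D$), whereas by reporting budget $0$ she can force zero payment while still receiving a feasible covering allocation of bounded delay, strictly improving her objective --- contradicting IC. You need to add this argument (or an equivalent one) for the reduction to land on the actual Dobzinski--Lavi--Nisan hypothesis set. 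Once IR is established, the rest of your proof is correct.
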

Since a single good and two agents is the most basic case, an impossibility follows for all generalizations as well. 

The theorem follows from a reduction to a combinatorial auction with additive valuations, and an impossibility result of \citet{dobzinski2012multi}. Consider an auction for a single \emph{divisible} item, with budget constraints. 
Agent $i$ has valuation of $v_i$ per unit quantity of the item, and a budget $m_i$. 
The outcome of the auction is an allocation $x_i$ and payment $payment(i)$ for agent $i$, 
such that $\sum_i x_i \leq 1$ and $x_i \in [0,1]$. 
The utility of agent $i$ is $v_i x_i - payment(i)$, and the budget constraint as before is that $p_i \leq m_i$. 
IC and Pareto optimality are as before, and we need an additional notion of \emph{individual rationality} (IR): 
$v_i x_i - payment(i)\geq 0$. \citet{dobzinski2012multi} showed the following impossibility. 
\begin{theorem}[\citet{dobzinski2012multi} ]
	There is no IC, Pareto optimal, IR and anonymous auction for auctioning a single divisible good to 2 agents with budget constraints. 
\end{theorem}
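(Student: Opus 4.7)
My plan is to prove the impossibility through a three-stage structural argument, following the general pattern of Dobzinski--Lavi--Nisan-style impossibilities. The first stage is to extract from Pareto optimality a rigid characterization of the allocation as a function of the reported types $(v_i,m_i)$. The standard trade argument---take $\epsilon$ of allocation from $j$ and give it to $i$, compensate $j$ with a monetary transfer $t \in (\epsilon v_j, \epsilon v_i)$, and adjust seller revenue if needed---shows that any PO outcome satisfies (i) full allocation $x_1+x_2=1$, and (ii) whenever $v_i>v_j$ and $x_j>0$, the higher-value agent is budget-exhausted, i.e., $\mathrm{payment}(i)=m_i$. Combined with IR ($\mathrm{payment}(i)\le v_i x_i$), this essentially pins down the allocation to be the ``clinching''-type allocation in which the high-value agent takes as much as her budget permits and the residue goes to the other agent.

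The second stage is to pit this rigid allocation against the consequences of dominant-strategy IC over the two-dimensional private type $(v,m)$. Fixing agent $2$'s report, I consider agent $1$'s deviations along each coordinate separately. Deviations in $v_1$ at fixed $m_1$ give a one-dimensional mechanism, so by the standard monotonicity-plus-payment-identity argument the allocation $x_1(v_1,\cdot)$ is non-decreasing in $v_1$ and the payment equals the Myerson integral $v_1x_1(v_1)-\int_0^{v_1}x_1(t)\,dt$, capped above by $m_1$. Deviations in $m_1$ at fixed $v_1$ must also satisfy IC, which forces allocation monotonicity in $m$ (reporting a larger budget cannot strictly decrease one's winnings) and continuity of payments wherever the allocation is continuous. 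Anonymity then lets me freely swap the roles of the two agents and transfer any such constraint from agent $1$ to agent $2$.

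In the third stage, I would carefully construct a short chain of type profiles, related by single-coordinate deviations of one agent, such that the PO characterization forces a discontinuous jump in who wins and in payments across the chain while the IC integral formula forces the payments to evolve continuously; a role swap by anonymity closes the loop and yields a contradiction. Concretely, I expect the critical profiles to involve one agent whose value sits slightly above the other's, with small but different budgets, so that moving the value across the threshold flips the winner and triggers a payment jump incompatible with the Myerson formula. The main obstacle---and the part that would require the most care---is choosing this chain so that every intermediate profile remains within the domain, the PO-forced jumps are sharp, and the use of anonymity is only invoked in directions where it cleanly applies; the budget dimension in IC (rather than the value dimension) is what obstructs the otherwise natural clinching mechanism, and isolating its effect from the value-monotonicity constraint is the delicate point.
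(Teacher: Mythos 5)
The theorem you are asked to prove is not proved in the paper at all: it is stated as a black-box citation to Dobzinski, Lavi, and Nisan (2012), and the paper uses it only as an external ingredient in the reduction establishing Theorem \ref{thm:QLimpossibility}. So there is no paper proof to compare against; the relevant question is whether your sketch would stand on its own.

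Your three-stage skeleton is the right shape and matches the known proof strategy for this class of impossibilities: Stage~1 (Pareto optimality forces full allocation and budget-exhaustion of the higher-value agent whenever the lower-value agent holds a positive share, hence essentially pins the allocation to the adaptive-clinching rule), Stage~2 (Myerson monotonicity and the payment identity in the $v$-coordinate, plus a separate monotonicity constraint in the $m$-coordinate), and Stage~3 (a contradictory chain of single-coordinate deviations, closed via anonymity). However, as written this is a plan, not a proof: Stage~3 is the entire content of the theorem, and you explicitly defer it (``I would carefully construct a short chain \ldots the delicate point''). The difficulty is real. The clinching allocation is IC in $v$ when budgets are public, so the obstruction comes solely from deviations in $m$; to get a contradiction you must exhibit concrete type profiles where underreporting a budget strictly improves an agent's utility \emph{for every} payment rule consistent with Stages~1 and~2, and then use anonymity to rule out the one remaining escape (tie-breaking asymmetries). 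Without that explicit construction, the argument is incomplete. There is also a small gap in Stage~1 that deserves a sentence: the trade-based Pareto improvement must keep the seller weakly better off and keep the lower-value agent's post-trade payment nonnegative (or you must explicitly permit negative payments), which is standard to handle but is glossed over in the proposal.
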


\begin{proof}[of Theorem \ref{thm:QLimpossibility}]
Consider an instance of the scheduling problem of Section \ref{sec:cloud} with a single machine and two agents, where each agent requires 1 unit of the good. 
Pareto optimality implies that goods are not wasted, so the entire first two slots are completely allocated. 
If agent $i$ gets $x_i$ units of the slot $t=1$, then his delay cost is $x_i + 2(1-x_i)$. 
His objective is then 
\[ \textstyle x_i + 2(1-x_i) + \eta_i payment(i) = 2 -\eta_i\left(   \frac{1}{\eta_i}  x_i - payment(i) \right)   .\]
Minimizing this objective is equivalent to maximizing $ \tfrac{1}{\eta_i}  x_i - payment(i)$, which is exactly as in 
the single divisible good auction with $v_i = \tfrac{1}{\eta_i}$. 
We also show that the IC constraint for the scheduling problem implies the IR constraint for the divisible good case. 
If the IR constraint is violated, i.e., $\tfrac{1}{\eta_i} x_i < payment(i)$, then the value of the objective of agent $i$ for this outcome is strictly smaller than 2. 
Then the agent is better off stating a budget of 0. This will force his payment to 0. 
The worst delay cost he can get is 2, so his total objective value is $2$.

Therefore, an IC, Pareto optimal, and anonymous auction for our scheduling problem implies an 
IC, Pareto optimal, IR, and anonymous auction for the divisible good case, and the theorem follows. 
\end{proof}



\end{document}